\newtheorem{lemma}{Lemma}
\newcommand{\RNum}[1]{\uppercase\expandafter{\romannumeral #1\relax}}
\begin{document}

\title{Secure Integrated
Sensing and Communication Networks: Stochastic Performance Analysis}

\author{
		\IEEEauthorblockN{Marziyeh Soltani, Mahtab Mirmohseni, \textit{Senior Member, IEEE}, and Rahim Tafazolli, \textit{Fellow, IEEE} \\
		\vspace*{0.5em}
			}\thanks{Part of this paper has been presented at the IEEE SPAWC 2025 \cite{OnStochasticPerformanceAnalysisofSecureIntegratedSensingandCommunicationNetworks}.\\
The authors are with 5/6GIC, the Institute for Communication Systems (ICS),
University of Surrey, GU2 7XH Guildford, U.K. (e-mail: m.soltani@surrey.ac.uk, m.mirmohseni@surrey.ac.uk, r.tafazolli@surrey.ac.uk).
This research is supported by the 5/6GIC, Institute for Communication
Systems (ICS), University of Surrey.}}
\maketitle
\begin{abstract}
This paper analyzes the stochastic security performance of a multiple-input multiple-output (MIMO) integrated sensing and communication (ISAC) system in a downlink scenario. A base station (BS) transmits a multi-functional signal to simultaneously communicate with a user, sense a target’s angular location, and counteract eavesdropping threats. The attack model considers a passive single-antenna  communication eavesdropper (eav) intercepting communication data, as well as a multi-antenna sensing eav attempting
to infer the target’s location. We also consider a malicious target scenario where the target plays the role of the communication eav. The BS-user and BS-eaves channels follow Rayleigh fading, while the target's azimuth angle is uniformly distributed. To evaluate the performance in this random network, we derive the ergodic secrecy rate (ESR) and the ergodic Cramér-Rao lower bound (CRB), for target localization, at both the BS and the sensing eav. This involves computing the probability density functions (PDFs) of the signal-to-noise ratio (SNR) and CRB, leveraging the central limit theorem for tractability. We characterize the boundary of the CRB-secrecy rate region, and interpret the performance tradeoffs between communication and sensing while guaranteeing a level of security and privacy in the random ISAC networks.
\end{abstract}
\section{Introduction}\label{introduction}
Recently, integrated sensing and communications (ISAC) has emerged as a promising paradigm for future wireless networks, enabling applications such as vehicular communication and smart city infrastructure \cite{Integratedtoward}. By sharing resources like spectrum and hardware, ISAC enhances both sensing and communication (S \& C). 

The shared spectrum, broadcast nature of wireless communication, and growing complexity of S\&C data exchanges pose major security risks for ISAC systems \cite{Sensing-AssistedEavesdropperEstimation:AnISACBreakthroughinPhysicalLayerSecurity}. Compared to traditional systems that only handle communication, ISAC systems face more complex security issues. These challenges mainly fall into two categories: \textit{keeping communication data secure} and \textit{preserving sensing security in radar-like sensing} \cite{SecuringtheSensingFunctionalityinISACNetworks}. The first challenge comes from the fact that ISAC systems use waveforms that carry information. These waveforms can be received by the targets being sensed, which might act as eavesdroppers (eaves) and access sensitive information. \textit{Sensing security} in ISAC is another concern, as adversaries can intercept reflected sensing signals to infer system activities, breaching targets privacy \cite{OnRadarPrivacyinSharedSpectrumScenarios}. Traditional methods like encryption often fail to meet ISAC’s low-latency needs and are prone to transmission parameter leakage \cite{IntegratingSensingandCommunicationsin6G}. Moreover, these methods can not preserve target privacy. This has sparked interest in physical layer security (PLS), which offers real-time protection without complex key management.

To enhance PLS in ISAC, \cite{PrivacyandSecurityinUbiquitousIntegrated} proposes strategies such as AI-based secure decision-making, friendly jamming, and RIS-assisted techniques. Existing ISAC security solutions that focus solely on either sensing privacy or communication security often rely on non-convex optimization techniques. These typically require iterative algorithms such as successive convex approximation (SCA) and semidefinite relaxation (SDR), which do not yield closed-form beamforming solutions \cite{OptimalBeamformingforsecureIntegratedSensingandCommunicationExploitingTargetLocation, SecuringtheSensingFunctionalityinISACNetworks}. Even in works like \cite{SecureCell-FreeIntegratedSensingandCommunicationinthePresenceofInformation}, which address transmit design for securing both sensing privacy and communication security, closed-form solutions remain elusive. Moreover, while random ISAC networks without security constraints have been explored in the literature \cite{cooperativeisacnetworksperformance, Network-levelISAC:AnAnalyticalStudyofAntenna, OnStochasticFundamentalLimitsinaDownlink}, a unified framework that jointly considers both communication security and sensing privacy in a random ISAC system has not yet been reported. We provide further discussion on the types of randomness considered in existing secure ISAC studies in Subsection \ref{comparison}. These research gaps form the primary motivation for our work.
\subsection{Our contributions}
To the best of our knowledge, this is the first study to investigate the joint security of communication and sensing in a random ISAC system. Specifically, we consider a downlink multiple-input multiple-output (MIMO) ISAC system where a multi-antenna base station (BS) simultaneously transmits confidential information to a user and estimates the angular position of a target. Meanwhile, two types of passive adversaries, an information eav and a sensing eav (which can be strong with knowledge of the BS’s transmitted signal or weak), attempts to intercept user data and extracts the target’s angular information, respectively. Moreover, we also consider the case where the communication eav is the target itself (i.e., a malicious target), rather than an external eav. The BS-user and BS-external eav channels are modeled as independent Rayleigh fading, while the angle of arrival (AoA) of the target at the BS and the sensing eav follows a uniform distribution\footnote{Our framework is flexible and can accommodate any target angle distribution.}. Given the use of large antenna arrays at the base station, a common setting in MIMO ISAC, we propose two ISAC beamforming schemes with closed-form expressions: SSJB (Secure Subspace Joint Beamforming) and SLB (Secure Linear Beamforming). We note that SSJB and SLB are based on two widely used beamforming techniques in MIMO ISAC, namely SJB and LB \cite{OnStochasticFundamentalLimitsinaDownlink}. The key idea behind SSJB and SLB precoding matrices is the incorporation of security into these conventional techniques—not in a blind manner, but in a principled way. Specifically, SSJB combines the sensing-optimal (without security) and secure communication-optimal precoders, while SLB merges the secure sensing-optimal and secure communication-optimal precoders, with the flexibility to include a communication-optimal design as well. To achieve this, we first derive the secure sensing-optimal beamforming solution for the estimation task. In fact, we analyze the performance of random ISAC systems using structured, suboptimal transmit beamformers, constructed as weighted combinations of precoders, each optimized for above subproblems. This approach, also adopted in massive MIMO ISAC studies such as \cite{PowerAllocationforMassiveMIMO-ISACSystems, MultipleTargetDetectioninCellFreeMassive}, provides analytical insights while avoiding the complexity and absence of closed-form solutions typically associated with non-convex optimization. 
The contributions of this paper are summarized as follows:
\\
$\bullet$ Formulation and derivation of the closed form optimal solution of a secure sensing problem, where the objective is to minimize the Cramér-Rao bounds (CRB) \footnote{CRB bounds the error of any unbiased estimator and is widely used to evaluate sensing performance} of the target’s AoA at the BS while maximizing the CRB at the sensing eav, subject to power constraints.
\\
$\bullet$ Proposing two precoding matrices, namely SSJB and SLB to: 1) Deliver confidential communication data to a single-antenna user, 2) Estimate the angular location of a target via echo signals, 3) Prevent sensing and communication eaves from intercepting both the user’s data and the target’s location, and 4) Prevent the target from intercepting the user's data.
\\
$\bullet$ Derivation of exact expressions for key metrics suitable for analyzing random ISAC networks under the two proposed precoding schemes: the ergodic secrecy rate (ESR) (including the ergodic rates of the user and both communication eaves, i.e., the malicious target or the external eav), the ergodic Cramér–Rao bound (CRB), denoted as $E[\text{CRB}]$, and the outage probability for target estimation, defined as $P(\text{CRB} > \epsilon)$, evaluated at both the BS and the sensing eav (for both weak and strong cases). We derive closed-form expressions for these metrics or provide upper, lower, and approximation bounds for them. In \cite{OnStochasticFundamentalLimitsinaDownlink}, the use of $P(\text{CRB} > \epsilon)$ is motivated as a meaningful sensing metric. Additionally, \cite{MIMOintegrated} demonstrates that $E[\text{CRB}]$ provides a tighter bound than the Bayesian CRB (BCRB), making it a key metric for evaluating ISAC security and privacy in random networks.
\\
$\bullet$ Validation of our framework through simulations and characterizing the CRB-secrecy rate region in random ISAC networks. Our results reveal three fundamental tradeoffs: 1) Sensing accuracy vs. communication rate as other traditional ISAC works with no security and privacy constraints; 2) Sensing accuracy vs. secure communication rate; 3) Communication security vs. sensing privacy.
These tradeoffs are consistent with open challenges noted in \cite{DelvingIntoSecurityandPrivacyofJoint}.
\subsection{Related works}
The fundamental limits of ISAC systems have been explored from various angles. Several studies analyze performance from an information-theoretic (IT) perspective, characterizing the capacity-distortion region \cite{2022AnInformation-TheoreticApproachtoJointSensingandCommunication}. Beyond IT, other works examine the S\&C trade-off using diverse sensing metrics and varying model randomness \cite{aframeworkformutualinformation,TowardDualfunctionalRadarCommunicationSystems,MUMIMOCommunicationsWithMIMORadar,JointTransmitBeamformingforMultiuser,optimaltransmitbeamformingintegrated,CrameRaoBoundOptimizationforJoint,fromtorchtoprojector,MIMOIntegratedSensingandCommunicationCRBRateTradeoff,fundamentalcrbratetradeoffmultiantenna,OnthePerformanceofUplinkandDownlink,PerformanceAnalysisandPowerAllocationforCooperative,NOMAISACPerformanceAnalysisandRateRegion,Aunifiedperformanceframeworkfor,MIMOISACPerformanceAnalysis,PerformanceAnalysioftheFullDuplexJoint,onthefundementaltradeoff,networklevelintegratedsensingcommunication,coverageandrateofjointcommunication}. For sensing metrics, \cite{aframeworkformutualinformation} uses estimation mutual information, while others rely on transmit beampattern \cite{TowardDualfunctionalRadarCommunicationSystems,MUMIMOCommunicationsWithMIMORadar,JointTransmitBeamformingforMultiuser,optimaltransmitbeamformingintegrated} and the CRB \cite{CrameRaoBoundOptimizationforJoint,fromtorchtoprojector,MIMOIntegratedSensingandCommunicationCRBRateTradeoff,fundamentalcrbratetradeoffmultiantenna}. However, the role of channel randomness, central to this paper, has largely been overlooked in these works. Recent studies have incorporated randomness, analyzing ISAC performance via probabilistic sensing metrics: detection probability \cite{OnthePerformanceofUplinkandDownlink,PerformanceAnalysisandPowerAllocationforCooperative,Aunifiedperformanceframeworkfor}, sensing rate/SNR \cite{NOMAISACPerformanceAnalysisandRateRegion,MIMOISACPerformanceAnalysis,PerformanceAnalysioftheFullDuplexJoint,networklevelintegratedsensingcommunication,coverageandrateofjointcommunication,PerformanceofDownlinkandUplinkIntegratedSensing}, and BCRB \cite{onthefundementaltradeoff}. Network-level ISAC has also been studied using stochastic geometry. Metrics include CRB and rate \cite{cooperativeisacnetworksperformance,Network-levelISAC:AnAnalyticalStudyofAntenna,ISACNetworkPlanning:SensingCoverageAnalysis,coverageandrateofjointcommunication}, communication coverage and radar information rate \cite{PerformanceAnalysisofCooperativeIntegratedSensing}, ISAC coverage probability\footnote{Defined as the weighted average of the probabilities that communication and sensing SIRs exceed their thresholds} \cite{OntheCoverageProbabilityinIntegratedSensingandCommunicationNetworkswithMulti-SOsInterference}, and ergodic radar/communication rates \cite{networklevelintegratedsensingcommunication}. These network level ISAC use zero-forcing (ZF) beamforming to mitigate the interference between S \& C, reduce precoding complexity, and make the
analysis tractable. In contrast, \cite{OnStochasticFundamentalLimitsinaDownlink} explores the full CRB–rate boundary assuming optimal beamforming correlated with both target and user channels, without nullifying S\&C impacts.

To address \textit{data communication security in ISAC systems}, \cite{SecureIntegratedSensingandCommunication} employs information-theoretic methods to explore inner and outer bounds of the secrecy-distortion region. In addition, various beamforming techniques have been developed to enhance PLS by degrading the SINR of eaves or maximizing secrecy capacity while preserving radar performance \cite{SecureDual-FunctionalRadar-CommunicationTransmission:ExploitingInterference,JointSecureTransmitBeamformingDesignsforIntegratedSensingandCommunicationSystems,SecurePrecodingOptimizationforNOMA-AidedIntegratedsensingandCommunication,JointPrecodingandArtificialNoiseDesignforSecure,JointBeamformingDesignforDual-FunctionalMIMORadarandCommunicationSystemsGuaranteeingPhysicalLayerSecurity,RobustTransmitBeamformingforSecureIntegratedSensingandCommunication,SecureRadar-CommunicationSystemsWithMaliciousTargets:IntegratingRadar,optimaltransmitbeamformingforsecrecy,PhysicalLayerSecurityOptimizationWithCramér–RaoBoundMetric,Sensing-AssistedEavesdropperEstimation:AnISACBreakthroughinPhysicalLayerSecurity,OptimalBeamformingforsecureIntegratedSensingandCommunicationExploitingTargetLocation,SecureISACMIMOSystems:ExploitingInterferenceWith,ANAssistedSecureISACBeamforming:CounteringCovertEavesdropper,RIS-basedPhysicalLayerSecurityforIntegratedSensingandCommunication:AComprehensiveSurvey,CovertBeamformingDesignforIntegratedRadarSensingandCommunication,RobustBeamformingDesignforCovertIntegratedSensingandCommunication,SensingforSecureCommunicationinISAC:ProtocolDesignand,BeamformingDesignforSecureMC-NOMAEmpowered}. Common radar metrics used in these designs include the SINR of radar echoes \cite{SecureDual-FunctionalRadar-CommunicationTransmission:ExploitingInterference,JointSecureTransmitBeamformingDesignsforIntegratedSensingandCommunicationSystems,SecurePrecodingOptimizationforNOMA-AidedIntegratedsensingandCommunication}, the mean square error (MSE) of beampattern design \cite{JointPrecodingandArtificialNoiseDesignforSecure,JointBeamformingDesignforDual-FunctionalMIMORadarandCommunicationSystemsGuaranteeingPhysicalLayerSecurity,RobustTransmitBeamformingforSecureIntegratedSensingandCommunication,SecureRadar-CommunicationSystemsWithMaliciousTargets:IntegratingRadar,optimaltransmitbeamformingforsecrecy}, and the CRB or BCRB for direction-of-arrival (DoA) estimation \cite{Sensing-AssistedEavesdropperEstimation:AnISACBreakthroughinPhysicalLayerSecurity,OptimalBeamformingforsecureIntegratedSensingandCommunicationExploitingTargetLocation,SecureISACMIMOSystems:ExploitingInterferenceWith,BeamformingDesignforSecureMC-NOMAEmpowered,PhysicalLayerSecurityOptimizationWithCramér–RaoBoundMetric}. In the context of covert communication, methods such as optimizing detection error probability, KL-divergence-based secrecy metrics \cite{CovertBeamformingDesignforIntegratedRadarSensingandCommunication,RobustBeamformingDesignforCovertIntegratedSensingandCommunication}, and AN-assisted secure null-space beamforming \cite{ANAssistedSecureISACBeamforming:CounteringCovertEavesdropper} have been explored to enhance stealth. \cite{SensingforSecureCommunicationinISAC:ProtocolDesignand} proposes a sensing-aided secure protocol, where the BS first estimates the eave’s CSI via beam sensing, then optimizes beamforming to reduce information leakage. For secure ISAC via intelligent reflecting surfaces (IRS), \cite{RIS-basedPhysicalLayerSecurityforIntegratedSensingandCommunication:AComprehensiveSurvey} provides a comprehensive survey, categorizing IRS-based PLS techniques into passive and active IRS paradigms.

Additionally, several studies address \textit{sensing security risks in ISAC} systems \cite{SecuringtheSensingFunctionalityinISACNetworks,IllegalSensingSuppressionforIntegratedSensingandCommunicationSystem,SecureCell-FreeIntegratedSensingandCommunicationinthePresenceofInformation,privacyperformanceofmimodulafunctional,MultiStaticISACinCellFree}. In a bi-static ISAC setting, \cite{SecuringtheSensingFunctionalityinISACNetworks} focuses on simultaneously supporting multiple user communications and bi-static target estimation with a legitimate radar receiver. The study formulates a mutual information (MI) maximization problem for legitimate sensing receivers while imposing MI constraints for sensing eaves and communication constraints for communication users, employing artificial noise to enhance privacy. Similarly, \cite{SecureCell-FreeIntegratedSensingandCommunicationinthePresenceofInformation} jointly optimizes S \& C beamforming for a cell-free ISAC systems, to maximize target detection probability, ensuring compliance with minimum SINR constraints for user equipment (UEs), maximum SINR constraints for information eaves, and maximum sensing probability thresholds for sensing eaves. In \cite{IllegalSensingSuppressionforIntegratedSensingandCommunicationSystem}, the target is itself a sensing eav. It focuses on protecting the privacy of legitimate users from adversaries equipped with sensing capabilities in ISAC systems. It examines user detection and DoA estimation at the sensing eav, aiming to impair the adversary’s sensing accuracy. To this end, the authors formulate a beamforming optimization problem designed to maximize both the miss-detection probability and the CRB. \cite{privacyperformanceofmimodulafunctional} investigates the
precoder design in a single ISAC transmitter scenario based on the
sensing beampattern distortion and introduces a sensing adversary estimation framework tailored for
estimating target location capitalizing on Bayesian inference. \cite{MultiStaticISACinCellFree} considers the precoder
design in a cell-free ISAC system based on the sensing SNR maximization. The study extends the sensing eav
model via exploiting an expectation maximization method to obtain an estimation of the transmitted signal, which is used to create replicas of the transmit beampattern of each transmitter and thus to eavesdrop target information.

\subsection{Comparison with existing work:}\label{comparison}
Among the works that consider only communication security risks and focus on estimation (i.e., those which have used CRB as the sensing metric similar to our work) \cite{Sensing-AssistedEavesdropperEstimation:AnISACBreakthroughinPhysicalLayerSecurity,OptimalBeamformingforsecureIntegratedSensingandCommunicationExploitingTargetLocation,SecureISACMIMOSystems:ExploitingInterferenceWith,PhysicalLayerSecurityOptimizationWithCramér–RaoBoundMetric} assume that the target itself is a malicious eav, with either a known channel, a known prior angle distribution, or a known angle estimate, obtained after the base station (BS) emits an omni-directional waveform and receives echoes reflected from both legitimate users and eaves located within the sensing range \cite{Sensing-AssistedEavesdropperEstimation:AnISACBreakthroughinPhysicalLayerSecurity}. In \cite{BeamformingDesignforSecureMC-NOMAEmpowered}, the target and the sensing eav are considered as different nodes. Among the studies that consider sensing privacy, \cite{SecuringtheSensingFunctionalityinISACNetworks,IllegalSensingSuppressionforIntegratedSensingandCommunicationSystem,privacyperformanceofmimodulafunctional,MultiStaticISACinCellFree} neglect the security of communication data, as they do not consider a communication eav or a malicious target. In \cite{SecureCell-FreeIntegratedSensingandCommunicationinthePresenceofInformation}, the communication eav is a separate node from the target. In contrast, our work jointly considers both sensing privacy and communication security, and accounts for both a malicious target (with known prior angle distribution) and external sensing and communication eaves (with unknown channels). Moreover, in \cite{Sensing-AssistedEavesdropperEstimation:AnISACBreakthroughinPhysicalLayerSecurity, IllegalSensingSuppressionforIntegratedSensingandCommunicationSystem, SecureCell-FreeIntegratedSensingandCommunicationinthePresenceofInformation}, the channels are assumed to be deterministic. In contrast, \cite{OptimalBeamformingforsecureIntegratedSensingandCommunicationExploitingTargetLocation, SecureISACMIMOSystems:ExploitingInterferenceWith, privacyperformanceofmimodulafunctional} model only the target’s location as a random variable, while \cite{PhysicalLayerSecurityOptimizationWithCramér–RaoBoundMetric, BeamformingDesignforSecureMC-NOMAEmpowered} consider the possible locations of the target and communication eav within a certain range. In contrast, our work models all channels (user, target, and eaves) as random parameters. Additionally, the sensing metrics used in \cite{SecuringtheSensingFunctionalityinISACNetworks} is mutual information (MI) at the BS and eav; \cite{privacyperformanceofmimodulafunctional} focuses on particle filter algorithm design; \cite{MultiStaticISACinCellFree} applies the expectation-maximization method to extract target information; and \cite{SecureCell-FreeIntegratedSensingandCommunicationinthePresenceofInformation} employs SNR and detection probability as sensing metrics. In contrast, we use $P(\text{CRB} > \epsilon)$ and $E[\text{CRB}]$ as our sensing performance metric, which provides a tighter bound than the BCRB or CRB used in other works \cite{Sensing-AssistedEavesdropperEstimation:AnISACBreakthroughinPhysicalLayerSecurity,OptimalBeamformingforsecureIntegratedSensingandCommunicationExploitingTargetLocation,SecureISACMIMOSystems:ExploitingInterferenceWith,BeamformingDesignforSecureMC-NOMAEmpowered,PhysicalLayerSecurityOptimizationWithCramér–RaoBoundMetric}. 

Finally, all the aforementioned works \cite{Sensing-AssistedEavesdropperEstimation:AnISACBreakthroughinPhysicalLayerSecurity,OptimalBeamformingforsecureIntegratedSensingandCommunicationExploitingTargetLocation,SecureISACMIMOSystems:ExploitingInterferenceWith,BeamformingDesignforSecureMC-NOMAEmpowered,SecuringtheSensingFunctionalityinISACNetworks,IllegalSensingSuppressionforIntegratedSensingandCommunicationSystem,privacyperformanceofmimodulafunctional,MultiStaticISACinCellFree,SecureCell-FreeIntegratedSensingandCommunicationinthePresenceofInformation} aim to find the optimal signaling strategy, resulting in non-convex problems and non-closed-form solutions. In contrast, our work proposes a closed-form beamforming solution, offering better analytical tractability. In our conference paper \cite{OnStochasticPerformanceAnalysisofSecureIntegratedSensingandCommunicationNetworks}, we proposed the SSJB scheme. In this extended work, we now introduce both SSJB and SLB, and significantly enhance the threat model by considering not only external communication eaves but also malicious targets. Furthermore, we account for both weak and strong sensing eaves, whereas the previous work considered only the latter.
\begin{figure}
    \includegraphics[scale=.63]{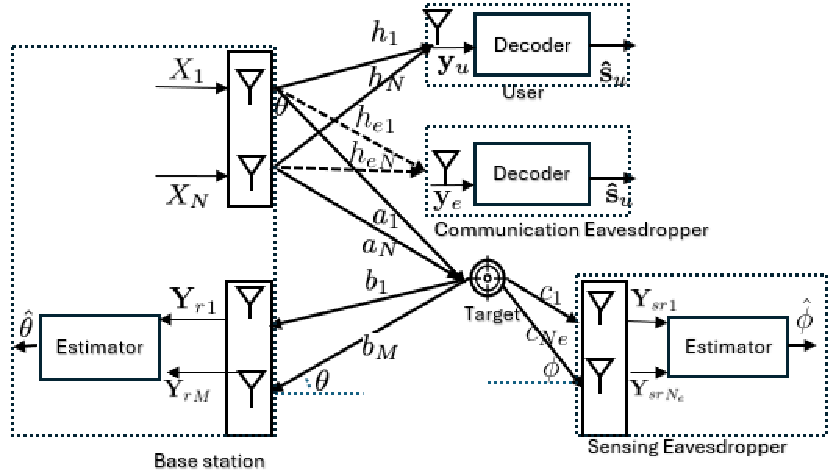}\caption{System model.${\mathbf{Y}_{sr}}_{i}$ and $h_{ei}$ denotes the $i$-th row and element of $\mathbf{Y}_{sr}$ and $\mathbf{h}_{e}$, respectively} \label{systemmodelfig}
\end{figure}
\textbf{Notation:} Scalars, vectors, and matrices are denoted by lowercase, boldface lowercase, and boldface uppercase letters, respectively. \( P(\cdot) \), \( f_x(\cdot) \), and \( E[\cdot] \) represent probability, PDF, and expectation. \( \mathbf{X}^{M \times N} \), \( \mathbf{X}^T \), \( \mathbf{X}^H \), and \( \mathbf{X}^* \) denote an \( M \times N \) matrix, its transpose, Hermitian transpose, and conjugate. The Euclidean norm is \( \|\cdot\| \), and the complex norm is \( |\cdot| \). \( \mathcal{CN}(\cdot,\cdot) \) and \( \mathcal{U}(a,b) \) denote circularly symmetric complex Gaussian and uniform distributions, respectively. \( \mathcal{N}_3(\boldsymbol{\mu}, \boldsymbol{\Sigma}) \) is a trivariate normal distribution with mean \( \boldsymbol{\mu} \) and covariance \( \boldsymbol{\Sigma} \). \( \mathbb{C} \) and \( \mathbb{R} \) denote complex and real number sets. \( \overset{d}{\rightarrow} \) and \( \overset{p}{\rightarrow} \) indicate convergence in distribution and probability. \( \text{Tr}(A) \), \( \otimes \), \( \text{vec} \), and \( \mathbf{I}_m \) represent the trace of \( A \), Kronecker product, vectorization, and the \( m \times m \) identity matrix. $F_{w,k,\lambda,s,m}(.)$ is the cumulative distribution function (CDF) of a generalized chi-square RV with parameters $w, k, \lambda, s, m$.
\section{System Model}\label{systemmodel}
We consider a BS with \(N\) transmit and \(M\) receive antennas serving a single-antenna communication user in the downlink. The BS knows the user’s channel and simultaneously senses a distant point target angle with an unknown location with a known distribution. Using a monostatic radar setup, the BS colocates its estimator and transmit antennas, ensuring identical angle of arrival (AoA) and departure (AoD). Two kinds of eaves are present:  
1) A single-antenna communication eav passively overhearing transmissions, with an unknown channel to the BS. We also consider the case where the target itself is the communication eav (the malicious target).
2) A sensing eav with \(N_e\) receive antennas and unknown channel to the BS, estimating the target’s position, threatening its privacy. Depending on its knowledge of the BS’s transmitted data, this eav can be either strong (with data knowledge) or weak (without it). The system model is shown in Fig. \ref{systemmodelfig}.

The channel vector between the BS and the user is given by \(\mathbf{h}=[h_1\quad h_2 \dots h_N]^T \in \mathbb{C}^{N \times 1}\), where the elements are independent and identically distributed (i.i.d.) according to \(\mathcal{CN} (0,1)\). More precisely, the \(i\)-th element can be expressed as \({h}_i = |{h}_i|e^{j{{\tilde{\phi}_i}}}\), where \(|{h}_i|\) follows a Rayleigh distribution with a scale parameter of \(1\), and \({{\tilde{\phi}_i}}\) is uniformly distributed over \([0, 2\pi)\). Furthermore, assuming an even number of antennas, the transmit and receive array steering vectors from the BS to the target are: $\mathbf{a}^{N \times 1}(\theta)  =  \![e^{-j\pi \sin(\theta)\frac{N-1}{2}}, e^{-j\pi \sin(\theta)\frac{N-3}{2}}, \dots, e^{j\pi \sin(\theta)\frac{N-1}{2}} ]^T$, and $\mathbf{b}^{M \times 1}(\theta) =  \![ e^{-j\pi \sin(\theta)\frac{M-1}{2}}, e^{-j\pi \sin(\theta)\frac{M-3}{2}}, \dots, e^{j\pi \sin(\theta)\frac{M-1}{2}}]^T,$ respectively, where \(\theta\) represents the azimuth angle of the target relative to the BS, which follows $\mathcal{U}(-\pi/2,\pi/2)$. The \(i\)-th element of \(\mathbf{a}(\theta)\) is expressed as \(a_i=e^{-jf_i}\), where \( f_i=\pi \sin(\theta)\frac{N-(2i-1)}{2} \). Moreover, the channel vector between the BS and the communication eav is given by \(\mathbf{h}_e \sim\mathcal{CN}(0,\mathbf{I}) \in \mathbb{C}^{N \times 1}\). The received steering vector at the sensing eav is given by: $\mathbf{c}^{N_e \times 1}(\phi) =[ e^{-j\pi \sin(\phi)\frac{N_e-1}{2}}, e^{-j\pi \sin(\phi)\frac{N_e-3}{2}}, \dots, e^{j\pi \sin(\phi)\frac{N_e-1}{2}}]^T$, 
where \(\phi\) represents the azimuth angle of the target relative to the sensing eav, which follows $\mathcal{U}(-\pi/2,\pi/2)$.
The BS transmits $\mathbf{X}$ with power $p_t$. The received signals at the user, the communication eav, and the target are given by \( \mathbf{y}_u = c_1\mathbf{h}^H \mathbf{X} + \mathbf{z}_u \), \( \mathbf{y}_{e} = c_2\mathbf{h}_e^H \mathbf{X} + \mathbf{z}_e \), and \( \mathbf{y}_t = c_5\mathbf{a}^H \mathbf{X} + \mathbf{z}_t \) respectively, where \( \mathbf{z}_u \), \( \mathbf{z}_e \in \mathbb{C}^{1 \times L} \), and \( \mathbf{z}_t \) are additive white Gaussian noise (AWGN) vectors, with each element following the distribution \( \mathcal{CN} (0,\sigma^2) \)\footnote{For simplicity, we assume the noise terms have the same variance and we ignore the link between the BS and the sensing eav.}. Here, \( c_1 \), \( c_2 \), and \( c_5 \) denote the complex-valued path gains (functions of distance) from the BS to the user, to the eav, and to the target, respectively. The reflected echo signal matrices at the BS and the sensing eav are $\mathbf{Y}_r = c_3 \mathbf{b}(\theta)\mathbf{a}(\theta)^H \mathbf{X} + \mathbf{Z}_r$ and $\mathbf{Y}_{sr} = c_4 \mathbf{c}(\phi)\mathbf{a}(\theta)^H \mathbf{X} + \mathbf{Z}_{sr},$
respectively, where \( c_3, c_4 \in \mathbb{C} \) are the complex-valued channel coefficients, which depend on the target’s radar cross-section (RCS) and the round-trip path loss between the target-BS (which is $|c_5|^2$) and target-sensing eav, respectively. \( \mathbf{Z}_r \in \mathbb{C}^{M \times L} \) and \( \mathbf{Z}_{sr} \in \mathbb{C}^{N_e \times L} \) are AWGN matrices, with elements that are i.i.d. and follow the distribution \( \mathcal{CN} (0,\sigma^2_r) \).
\subsection{Metrics}\label{metric}
The metric used to evaluate sensing performance at the BS is $\text{CRB}(\theta)$. To assess sensing privacy, we use $\text{CRB}(\phi)$, which reflects how accurately the sensing eav can estimate the target's angle relative to itself. Secure communication performance is evaluated using the secrecy rate, which involves computing the user’s achievable rate $R$, the information leakage rate to the communication eav $R_e$, and, in the case of a malicious target, the leakage rate $R_t$.

The transmitted ISAC signal $\mathbf{X}$ is designed based on the channel state information and instantaneous realizations of the random parameters in the network. We propose two precoding strategies: SSJB and SLB. The details of these precoding designs are provided in Section \ref{isactransmission}. Once designed, $\mathbf{X}$ becomes a function of random variables, making all performance metrics, i.e., $\text{CRB}(\theta)$, $\text{CRB}(\phi)$, $R$, $R_e$, and $R_t$, random variables as well, since they depend on both $\mathbf{X}$ and the underlying channel realizations. Therefore, to evaluate the long-term performance of the network, we require appropriate metrics that account for randomness. In other words, the impact of randomness is reflected in the performance evaluation rather than in the design of $\mathbf{X}$. This assumption is commonly adopted in the literature \cite{MIMOIntegratedSensingandCommunicationCRBRateTradeoff, CrameRaoBoundOptimizationforJoint, PhysicalLayerSecurityOptimizationWithCramér–RaoBoundMetric}, which considers a radar target-tracking stage where the target moves slowly enough for the BS to obtain coarse estimates of parameters from a previous stage \cite{CrameRaoBoundOptimizationforJoint, MIMOIntegratedSensingandCommunicationCRBRateTradeoff}. Based on these estimates, the BS designs $\mathbf{X}$, treating it as fixed variable in designing.

To address the need for suitable performance metrics, we adopt $P(\text{CRB}(\theta) > \epsilon)$ and $P(\text{CRB}(\phi) > \epsilon)$ as sensing metrics, as proposed in \cite{OnStochasticFundamentalLimitsinaDownlink}, which are well-suited for real-time sensing systems requiring a high probability that estimation errors remain below a given threshold. Additionally, we use the ergodic CRB as a complementary performance metric, defined as \( E[\text{CRB}(\theta)] \) at the BS, and \( E[\text{CRB}(\phi)] \) at the sensing eav. In \cite{MIMOintegrated,OnStochasticFundamentalLimitsinaDownlink}, it is shown that ergodic CRB serves as a lower bound for the estimation error and provides a tighter bound compared to Bayesian CRBs.
Moreover, the achievable ESR for the user is given by \cite{SecureTransmissionWithArtificialNoiseOverFadingChannels}: $
C_s = \left( E[R] - E[R_e] \right)^+ = \left( \mathbb{E}_{\mathbf{h}, \mathbf{a}(\theta)} \left[ \log(1 + \text{SINR}_u) \right] - \mathbb{E}_{\mathbf{h}, \mathbf{a}(\theta), \mathbf{h}_e} \left[ \log(1 + \text{SINR}_e) \right] \right)^+,
$ where $\text{SINR}_u$ and $\text{SINR}_e$ denote the SINR at the user and external eav, respectively. If the target itself is malicious and acts as a communication eav, the ESR becomes: $
C_s = \left( E[R] - E[R_t] \right)^+ = \left( \mathbb{E}_{\mathbf{h}, \mathbf{a}(\theta)} \left[ \log(1 + \text{SINR}_u) \right] - \mathbb{E}_{\mathbf{h}, \mathbf{a}(\theta)} \left[ \log(1 + \text{SINR}_t) \right] \right)^+,
$ where $\text{SINR}_t$ is the SINR at the malicious target.
\section{ CRB analysis}\label{crbanalysis}
In this section, we derive $\text{CRB}(\theta)$ and $\text{CRB}(\phi)$, which form the foundation for designing the precoding matrix $\mathbf{X}$.
\subsection{$\text{CRB}(\theta)$ at the BS}\label{opoftargetlb}
To derive the CRB of $\theta$, which is our sensing parameter of interest to be estimated at the BS, we use the received echo signal at the BS which is $\mathbf{Y}_{r} = c_3 \mathbf{b}(\theta)\mathbf{a}(\theta)^H \mathbf{X} + \mathbf{Z}_{r},$ 
and we obtain the Fisher information matrix (FIM) for estimating \( \xi=[\theta,\mathcal{R}(c_3),\mathcal{I}(c_3)]^T \in \mathbb{R}^{3 \times 1} \). We note that the overall reflection coefficient $c_3$ is an unknown deterministic parameter that must also be estimated to accurately determine the random parameter $\theta$. Let \( \mathbf{A}(\theta) =\mathbf{b}(\theta) \mathbf{a}(\theta)^T \), the received echo signal at the BS can be rewritten as $\mathbf{Y}_{r} = c_3 \mathbf{A}(\theta) \mathbf{X} + \mathbf{Z}_{r}$. By vectorizing $\mathbf{Y}_{r}$, we have: $\tilde{\mathbf{y}}_{r} = \text{vec}(\mathbf{Y}_{r}) = \tilde{\mathbf{u}} + \tilde{\mathbf{n}},$ where \( \tilde{\mathbf{u}} = c_3 \text{vec}(\mathbf{A}\mathbf{X}) \) and \( \tilde{\mathbf{Z}_{r}} = \text{vec}(\mathbf{Z}_{r}) \sim \mathcal{CN}(0, \sigma^2_r\mathbf{I}_{ML}) \). Thus, we have: $\tilde{\mathbf{y}}_{r}\sim \mathcal{CN}(\tilde{\mathbf{u}}, \sigma^2_r\mathbf{I}_{ML})$. Let \( {\mathbf{F}} \in \mathbb{R}^{3 \times 3} \) denote the FIM for estimating $\xi$ based on $\tilde{\mathbf{y}}_{r}$. Each element of \( {\mathbf{F}} \) is given by \cite{Fundamentalsofstatisticalsignalprocessing}:
\begin{equation}
{\mathbf{F}}_{i,j} = \text{tr} \left\{ \mathbf{R}^{-1} \frac{\partial \mathbf{R}}{\partial \xi_i} \mathbf{R}^{-1} \frac{\partial \mathbf{R}}{\partial \xi_j} \right\}
+ 2 \Re \left\{ \frac{\partial \tilde{\mathbf{u}}^H}{\partial \xi_i} \mathbf{R}^{-1} \frac{\partial \tilde{\mathbf{u}}}{\partial \xi_j} \right\},\label{elemntoffisher}
\end{equation}
for $i,j \in \{1,2,3\}$ where $\mathbf{R}$ is the covariance matrix of the Gaussian observation which in our case is $\sigma^2_r\mathbf{I}_{ML}$. By defining $\bar{\alpha}=[\mathcal{R}(c_3),\mathcal{I}(c_3)]^T \in \mathbb{R}^{2 \times 1}$, the FIM \( {\mathbf{F}} \) is partitioned as: ${\mathbf{F}} = 
\begin{bmatrix}
{\mathbf{F}}_{\theta\theta} & {\mathbf{F}}_{\theta\bar{\alpha}} \\
{\mathbf{F}}_{\bar{\alpha}{\theta}} & {\mathbf{F}}_{\bar{\alpha}\bar{\alpha}}
\end{bmatrix}.$
The covariance matrix \( \mathbf{R} \) is independent of \( \xi \). Therefore, we have \( \frac{\partial \mathbf{R}}{\partial \xi_i} = 0 \), \( i = 1,2,3 \) and the first term in (\ref{elemntoffisher}) is zero. Furthermore, $\frac{\partial \tilde{\mathbf{u}}}{\partial \theta} = c_3 \text{vec}(\mathbf{\dot{A}} \mathbf{X})+c_3 \text{vec}(\mathbf{A} \mathbf{\dot{X}})$ and $\frac{\partial \tilde{\mathbf{u}}}{\partial \bar{\alpha}} = [1, j] \otimes \text{vec}(\mathbf{A} \mathbf{X})$. Accordingly, the elements of ${\mathbf{F}}$ are given in (\ref{elemntoffisher2}), where $\mathbf{R}_x=\frac{1}{L}\mathbf {X}\mathbf{X}^H, $ \( j = \sqrt{-1} \), and \( \dot{\mathbf{A}} = \frac{\partial \mathbf{A}}{\partial \theta} \) and \( \dot{\mathbf{X}} = \frac{\partial \mathbf{X}}{\partial \theta} \) denote the partial derivative of \( \mathbf{A} \) and $\mathbf{X}$ w.r.t. \( \theta \), respectively. In (\ref{elemntoffisher2}), (a) follow from the identities \((\text{vec}(A))^H \text{vec}(B) = \text{trace}(A^H B)\) and \(\text{tr}(ABC) = \text{tr}(CAB)\). Next, we derive the CRB for estimating $\theta$, which corresponds to the first diagonal element of \( {\mathbf{F}}^{-1} \), i.e.,
\begin{equation}
\text{CRB}(\theta) = [{\mathbf{F}}^{-1}]_{1,1} = \left[{\mathbf{F}}_{\theta\theta} - {\mathbf{F}}_{\theta\tilde{\alpha}}{\mathbf{F}}_{\tilde{\alpha}\tilde{\alpha}}^{-1} {\mathbf{F}}_{\tilde{\alpha}\theta} \right]^{-1},\label{crbphi}
\end{equation}
where ${\mathbf{F}}_{\theta\tilde{\alpha}}=({\mathbf{F}}_{\tilde{\alpha}\theta})^T$ since the FIM is a symmetric matrix. Moreover, based on the design criteria outlined in Subsection \ref{metric}, we set $\dot{\mathbf{X}} = \frac{\partial \mathbf{X}}{\partial \theta} = 0$ at (\ref{elemntoffisher2}). This allows us to optimize the derived $\text{CRB}(\theta)$ with respect to $\mathbf{X}$, after which $\mathbf{X}$ becomes a function of the random variables. It is important to note that we do not use the Bayesian Cramér-Rao Bound (BCRB), which requires accounting for the distribution of the random parameter when deriving the elements of the FIM \footnote{At BCRB, we have: $\text{CRB}(\theta) = [{E[\mathbf{F}}_{\theta\theta}] + E[\frac{\partial \ln p_\theta(\theta)}{\partial \theta} \frac{\partial \ln p_\theta(\theta)}{\partial \theta^\text{T}} ]$-
$E[{\mathbf{F}}_{\theta\tilde{\alpha}}](E[{\mathbf{F}}_{\tilde{\alpha}\tilde{\alpha}}])^{-1} E[{\mathbf{F}}_{\tilde{\alpha}\theta}]]^{-1}$.}. Instead, we treat the CRB—originally derived for a fixed parameter—as a function of the random variables $\mathbf{h}$, $\theta$, and $\phi$. When the distributions of these variables are known, as in our case, we can evaluate the expected value $E(\text{CRB})$ and the probability $P(\text{CRB})$ to analyze the behavior of the CRB. This approach, which treats the CRB as a random variable, is employed in \cite{cramerraoboundonaerospaceandelectronicsystems, recentinsightsinthebayesiananddeterministic}.
\begin{figure*}[t]
\normalsize
\begin{align}
{\mathbf{F}}_{\theta\theta} &= \frac{2|c_3|^2}{\sigma_R^2} \Re \left\{ ( \text{vec}(\mathbf{\dot{A}} \mathbf{X}))^H \text{vec}(\mathbf{\dot{A}} \mathbf{X})+( \text{vec}(\mathbf{\dot{A}} \mathbf{X}))^H \text{vec}(\mathbf{A} \mathbf{\dot{X}})+( \text{vec}(\mathbf{A} \mathbf{\dot{X}}))^H \text{vec}(\mathbf{\dot{A}} \mathbf{X})+( \text{vec}(\mathbf{A} \mathbf{\dot{X}}))^H \text{vec}(\mathbf{A} \mathbf{\dot{X}}) \right\}\nonumber\\
&\overset{a}{=}\frac{2 |c_3|^2}{\sigma_R^2}\left\{ L\text{tr}(\mathbf{\dot{A}} \mathbf{R}_x \mathbf{\dot{A}}^H)+\text{tr}(\mathbf{{A}} \mathbf{\dot{X}} \mathbf{{X}}^H\mathbf{\dot{A}}^H)+\text{tr}(\mathbf{{\dot{A}}} \mathbf{{X}} \mathbf{{\dot{X}}}^H\mathbf{{A}}^H)+\text{tr}(\mathbf{{A}} \mathbf{\dot{X}} \mathbf{{\dot{X}}}^H\mathbf{{A}}^H),\right\}\nonumber\\
{\mathbf{F}}_{\theta\bar{\alpha}} &= \frac{2}{\sigma_R^2} \Re \left\{ (c_3^* \text{vec}(\mathbf{\dot{A}} \mathbf{X})^H+c_3^* \text{vec}(\mathbf{{A}} \mathbf{\dot{X}})^H) [1, j] \otimes \text{vec}(\mathbf{A} \mathbf{X}) \right\}\overset{a}{=}\frac{2}{\sigma_R^2} \Re \left\{ Lc_3^* (\text{tr}(\mathbf{A} \mathbf{R}_x \mathbf{\dot{A}}^H)+\text{tr}(\mathbf{A} \mathbf{X}\mathbf{\dot{X}}^H \mathbf{A}^H) )[1, j] \right\},\nonumber\\
\tilde{\mathbf{F}}_{\bar{\alpha}\bar{\alpha}} &= \frac{2}{\sigma_R^2} \Re \left\{ ([1, j] \otimes \text{vec}(\mathbf{A} \mathbf{X}))^H ([1, j] \otimes \text{vec}(\mathbf{A} \mathbf{X})) \right\}\overset{a}{=}\frac{2}{\sigma_R^2} \Re \left\{ ([1, j]^H [1, j]) (\text{tr}(\mathbf{A} \mathbf{X})^H \mathbf{A} \mathbf{X}) \right\}\label{elemntoffisher2}.\\
\text{CRB}(\theta)&=\frac{\sigma^2_R \text{Tr}(\mathbf{A}^H(\theta) \mathbf{A}(\theta) \mathbf{R}_x)}{2 \mid c_3 \mid ^2 L (\text{Tr}(\mathbf{A}^H(\theta) \mathbf{A}(\theta) \mathbf{R}_x) \text{Tr}(\dot{\mathbf{A}}^{H}(\theta) \dot{\mathbf{A}}^(\theta) \mathbf{R}_x)-\mid \text{Tr}(\dot{\mathbf{A}}^{H}(\theta) \mathbf{A}(\theta) \mathbf{R}_x)\mid^2)}. \label{crb}
\end{align}
\hrulefill
\end{figure*}
Thus, we obtain: ${\mathbf{F}}_{\theta\theta} =\frac{2L |c_3|^2}{\sigma_R^2} \{\text{tr}(\mathbf{\dot{A}} \mathbf{R}_x \mathbf{\dot{A}}^H)\}$, ${\mathbf{F}}_{\theta\bar{\alpha}}=\frac{2L}{\sigma_R^2} \Re \left\{ c_3^* (\text{tr}(\mathbf{A} \mathbf{R}_x \mathbf{\dot{A}}^H))[1, j] \right\}$, and $\tilde{\mathbf{F}}_{\bar{\alpha}\bar{\alpha}}=\frac{2L}{\sigma_R^2} \text{tr}(\mathbf{A} \mathbf{R}_x \mathbf{A}^H) \mathbf{I}_2$. After some mathematical operation and using the FIM elements, (\ref{crbphi}) simplifies to (\ref{crb}). Moreover, we have: 
\begin{align}
&\text{Tr}(\mathbf{A}^H(\theta) \mathbf{A}(\theta) \mathbf{R}_x)=\text{Tr}(\mathbf{a}\mathbf{b}^H\mathbf{b}\mathbf{a}^H \mathbf{R}_x)\overset{(a)}=||\mathbf{b}||^2\mathbf{a}^H\mathbf{R}_x\mathbf{a}
\nonumber\\
&\text{Tr}(\mathbf{A}'^H(\theta) \mathbf{A}'(\theta) \mathbf{R}_x)=\text{Tr}((\mathbf{a}\mathbf{b}'^H+\mathbf{a}'\mathbf{b}^H)(\mathbf{b}'\mathbf{a}^H+\mathbf{b}\mathbf{a}'^H) \mathbf{R}_x)\nonumber\\
&\overset{(a)}=(||\mathbf{b}'||^2\mathbf{a}^H\mathbf{R}_x\mathbf{a}+||\mathbf{b}||^2\mathbf{a}'^H\mathbf{R}_x\mathbf{a}') 
\nonumber\\
&\text{Tr}(\mathbf{A}'^H(\theta) \mathbf{A}(\theta) \mathbf{R}_x)\overset{(a)}=||\mathbf{b}||^2\mathbf{a}^H\mathbf{R}_X\mathbf{a}',
\end{align}
where (a) are due to $\mathbf{a}^H\mathbf{a}' = \mathbf{a}'^H\mathbf{a} = \mathbf{b}^H\mathbf{b}' = \mathbf{b}'^H\mathbf{b} = 0$ and the property $\text{Tr}(abc) = \text{Tr}(bca) = \text{Tr}(cab)$. Thus, by defining $Q \triangleq \frac{\sigma^2_R}{2 |c_3|^2 L},$ we obtain:
\begin{align}
\text{CRB}(\theta)=\frac{Q}{||\mathbf{b}'||^2\mathbf{a}^H\mathbf{R}_x\mathbf{a}+||\mathbf{b}||^2\mathbf{a}'^H\mathbf{R}_x\mathbf{a}'-\frac{|||\mathbf{b}||^2\mathbf{a}^H\mathbf{R}_X\mathbf{a}'|^2}{||\mathbf{b}||^2\mathbf{a}^H\mathbf{R}_x\mathbf{a}}}.\label{crbnew1}
\end{align}
\subsection{$\text{CRB}(\phi)$ at the sensing eavesdropper}\label{opoftargeteav}
In the received echo signal at the sensing eav, $\mathbf{Y}_{sr} = c_4 \mathbf{c}(\phi)\mathbf{a}(\theta)^H \mathbf{X} + \mathbf{Z}_{sr}$, we assume that \( c_4 \) is also an unknown but deterministic parameter. However, our primary parameter of interest is \( \phi \), which is to be estimated. In the following, we obtain the FIM for estimating \( \xi=[\phi,\mathcal{R}(c_4),\mathcal{I}(c_4)]^T \in \mathbb{R}^{3 \times 1} \) to facilitate the derivation of the $\text{CRB}(\phi)$. Let \( \mathbf{B}(\theta,\phi) =\mathbf{c}(\phi) \mathbf{a}(\theta)^T \), the received echo signal at the sensing eav can be rewritten as $\mathbf{Y}_{sr} = c_4 \mathbf{B}(\theta,\phi) \mathbf{X} + \mathbf{Z}_{sr}.$ For notational convenience, in the sequel we drop \( \theta \) and $\phi$ in \( \mathbf{B}(\theta,\phi) \), \( \mathbf{a}(\theta) \), and \( \mathbf{c}(\phi) \). By vectorizing $\mathbf{Y}_{sr}$, we have: $\tilde{\mathbf{y}}_{sr} = \text{vec}(\mathbf{Y}_{sr}) = \tilde{\mathbf{u}}_e + \tilde{\mathbf{n}},$ where \( \tilde{\mathbf{u}}_e = c_4 \text{vec}(\mathbf{B}\mathbf{X}) \) and \( \tilde{\mathbf{Z}_{sr}} = \text{vec}(\mathbf{Z}_{sr}) \sim \mathcal{CN}(0, \sigma^2_r\mathbf{I}_{N_eL}) \). By assuming that the eav is strong and it know $\mathbf{X}$ in the following we derive FIM. We have: $\tilde{\mathbf{y}}_{sr}\sim \mathcal{CN}(\tilde{\mathbf{u}}_e, \sigma^2_r\mathbf{I}_{N_eL})$. By defining $\bar{\alpha}_e=[\mathcal{R}(c_4),\mathcal{I}(c_4)]^T \in \mathbb{R}^{2 \times 1}$, we have $\frac{\partial \tilde{\mathbf{u}}_e}{\partial \phi} = c_4 \text{vec}(\mathbf{\dot{B}} \mathbf{X})+c_4 \text{vec}(\mathbf{B} \mathbf{\dot{X}})$ and $\frac{\partial \tilde{\mathbf{u}}_e}{\partial \bar{\alpha}_e} = [1, j] \otimes \text{vec}(\mathbf{B} \mathbf{X})$. Accordingly, the element of ${\mathbf{F}}$ is obtained as (\ref{elemntoffisher2}) by substituting $\mathbf{A}$ with $\mathbf{B}$, $\bar{\alpha}$ with $\bar{\alpha}_e$, $\theta$ with $\phi$, and $c_3$ with $c_4$, respectively. Also, \( \dot{\mathbf{B}} = \frac{\partial \mathbf{B}}{\partial \phi} \) and \( \dot{\mathbf{X}} = \frac{\partial \mathbf{X}}{\partial \phi}=0 \) denoting the partial derivative of \( \mathbf{B} \) and $\mathbf{X}$ w.r.t. \( \phi \), respectively. Thus, (\ref{elemntoffisher2}) is simplified to
${\mathbf{F}}_{\phi\phi}=\frac{2L |c_4|^2}{\sigma_R^2} \{\text{tr}(\mathbf{\dot{B}} \mathbf{R}_x \mathbf{\dot{B}}^H)\}$, ${\mathbf{F}}_{\phi\bar{\alpha}_e} =\frac{2L}{\sigma_R^2} \Re \left\{ c_4^* (\text{tr}(\mathbf{B} \mathbf{R}_x \mathbf{\dot{B}}^H))[1, j] \right\}$, and ${\mathbf{F}}_{\bar{\alpha}_e\bar{\alpha}_e}=\frac{2L}{\sigma_R^2} \text{tr}(\mathbf{B} \mathbf{R}_x \mathbf{B}^H) \mathbf{I}_2$. Thus, similar to (\ref{crbphi}), we have: $\text{CRB}(\phi) = \left[{\mathbf{F}}_{\phi\phi} - {\mathbf{F}}_{\phi\bar{\alpha}_e}{\mathbf{F}}_{\bar{\alpha}_e\bar{\alpha}_e}^{-1} {\mathbf{F}}_{\bar{\alpha}_e\phi} \right]^{-1}$, where ${\mathbf{F}}_{\phi\bar{\alpha}_e}=({\mathbf{F}}_{\bar{\alpha}_e\phi})^T$. Thus, after some mathematical manipulation we have: $\text{CRB}(\phi) = \frac{\sigma_R^2}{2L|c_4|^2 \left( \text{tr}(\mathbf{\dot{B}} \mathbf{R}_x \mathbf{\dot{B}}^H) - \frac{|\text{tr}(\mathbf{B} \mathbf{R}_x \dot{\mathbf{B}}^H)|^2}{\text{tr}(\mathbf{B} \mathbf{R}_x \mathbf{B}^H)} \right)}$, where $\mathbf{\dot{B}}=\mathbf{\dot{c}}(\phi)\mathbf{a}^H(\theta)$. 
Moreover, we have: $\text{tr}(\mathbf{B} \mathbf{R}_x \dot{\mathbf{B}}^H)=\text{Tr}(\mathbf{c}\mathbf{a}^H\mathbf{R}_x\mathbf{a}(\theta)\mathbf{\dot{c}}^H(\phi))\overset{a}{=}0$ and $\text{tr}(\mathbf{\dot{B}} \mathbf{R}_x \mathbf{\dot{B}}^H)=\text{Tr}(\mathbf{\dot{c}}(\phi)\mathbf{a}^H(\theta) \mathbf{R}_x\mathbf{a}(\theta)\mathbf{\dot{c}}^H(\phi))\overset{(a)}=||\mathbf{c}'||^2\mathbf{a}^H\mathbf{R}_x\mathbf{a},$ where (a)s are due to $\text{Tr}(abc)=\text{Tr}(bca)=\text{Tr}(cab)$, and \(\mathbf{c'}^H(\phi)\mathbf{c}(\phi) = 0\). Therefore, we have: $(\text{CRB}(\phi) = \frac{\sigma^2_R}{2 |c_4|^2 L ||\mathbf{c}'||^2\mathbf{a}^H\mathbf{R}_x\mathbf{a},}$ where $||\mathbf{c}'||^2 = \frac{\pi^2 \cos^2(\phi) N_e (N_e^2 - 1)}{12}$.
\section{ISAC Transmission Strategy}\label{isactransmission}
Optimizing beamforming for secure or private ISAC lacks a closed-form solution, as shown in \cite{Sensing-AssistedEavesdropperEstimation:AnISACBreakthroughinPhysicalLayerSecurity,OptimalBeamformingforsecureIntegratedSensingandCommunicationExploitingTargetLocation,SecureISACMIMOSystems:ExploitingInterferenceWith,BeamformingDesignforSecureMC-NOMAEmpowered,SecuringtheSensingFunctionalityinISACNetworks,IllegalSensingSuppressionforIntegratedSensingandCommunicationSystem,privacyperformanceofmimodulafunctional,MultiStaticISACinCellFree,SecureCell-FreeIntegratedSensingandCommunicationinthePresenceofInformation}. To exploit the large antenna arrays at the BS, typical in MIMO ISAC, we propose two structured, suboptimal transmit beamformers, constructed as weighted combinations of precoders, each optimal for a specific subproblem. This approach, inspired by prior massive MIMO ISAC works \cite{PowerAllocationforMassiveMIMO-ISACSystems, MultipleTargetDetectioninCellFreeMassive}, offers analytical tractability without the complexity of non-convex optimization. In the following, we elabotare these subproblems.
\\
\textbf{Subproblem 1 (sensing optimal without security):} The optimal beamforming vector that minimizes the CRB at the BS for target estimation while maximizing the communication rate, subject to a power constraint—that is, the optimal solution to:
\begin{align}
&\underset{\mathbf{w}}{\min} \quad \mathrm{CRB}(\theta) \nonumber\\
& \text{s.t.} \quad \gamma \leq \log_{2}\Big(1 + \frac{|\mathbf{h}^H \mathbf{w}|^2}{\sigma^2_u}\Big), \nonumber\\
& \mathrm{tr}(p_t \mathbf{w} \mathbf{w}^H) \leq p_t,
\end{align}
lies in the span of \(\{\mathbf{a}, \mathbf{h}\}\), as shown in \cite[Lemma 1]{CrameRaoBoundOptimizationforJoint}. 
\\
\textbf{Subproblem 2 (secure communication optimal):} When the communication eav’s CSI is unavailable, secrecy performance metrics include the ESR and secrecy outage-based secrecy rate \cite{Artificial-Noise-AidedBeamformingDesignintheMISOME}. Regarding ESR, a near-optimal strategy is the masked beamforming method \cite{SecureTransmissionWithMultipleAntennasI}, where the beamforming vector is based only on the legitimate user’s channel. In this method, the transmitter sends the message (encoded with a scalar Gaussian wiretap code) along the user’s channel to maximize the signal power. Also, it injects spatio-temporal white noise into the null space of the user’s channel, avoiding interference while degrading the eav’s reception. Regarding secrecy outage, artificial noise (AN) distributed uniformly in the null space of the information beam is proven to be optimal \cite{SecrecyOutageinMISOSystemsWithPartialChannelinformation}. 
\\
\textbf{Subproblem 3 (secure sensing optimal):} In a system where the BS operates as a monostatic MIMO radar to estimate a target’s location, while a sensing eav attempts to infer the target’s angle, the optimal transmit beamformer, derived in Lemma \ref{privacy} (proof in Appendix \ref{privacyproof}), aims to minimize $\text{CRB}(\theta)$ at the BS and maximize $\text{CRB}(\phi)$ at the eav \footnote{If the CRB is large, it means no unbiased estimator can have low variance,
and estimation is inherently less precise.}, thereby enhancing estimation accuracy at the BS while degrading it at the eav.
\begin{lemma}\label{privacy}
The optimal solution of:
\begin{align}
&\underset{\mathbf{R}_x}{\min} \quad \text{CRB}(\theta)- \text{CRB}(\phi)\nonumber\\
& \text{s.t.} \quad \mathbf{R}_x \geq 0; \nonumber\\
& \mathrm{tr}( \mathbf{R}_x)= p_t, \label{optimization}
\end{align}
lies in the span of $\frac{\mathbf{a}}{\|\mathbf{a}\|}$ and $\frac{\mathbf{a}'}{\|\mathbf{a}'\|}$.
\end{lemma}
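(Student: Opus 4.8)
The plan is to reduce the problem to the two–dimensional subspace $\mathcal{S}\triangleq\mathrm{span}\{\mathbf{a},\mathbf{a}'\}$ by using the closed forms of $\text{CRB}(\theta)$ in (\ref{crbnew1}) and of $\text{CRB}(\phi)$ obtained at the end of Subsection~\ref{opoftargeteav}: both depend on $\mathbf{R}_x$ only through the three scalars $\mathbf{a}^H\mathbf{R}_x\mathbf{a}$, $\mathbf{a}'^H\mathbf{R}_x\mathbf{a}'$, and $\mathbf{a}^H\mathbf{R}_x\mathbf{a}'$. First I would put $\mathbf{U}\triangleq\bigl[\,\mathbf{a}/\|\mathbf{a}\|\ \ \mathbf{a}'/\|\mathbf{a}'\|\,\bigr]\in\mathbb{C}^{N\times 2}$, whose columns are orthonormal since $\mathbf{a}^H\mathbf{a}'=0$, and let $\mathbf{P}\triangleq\mathbf{U}\mathbf{U}^H$ be the orthogonal projector onto $\mathcal{S}$, so that $\mathbf{P}\mathbf{a}=\mathbf{a}$ and $\mathbf{P}\mathbf{a}'=\mathbf{a}'$. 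Then the three scalars above are unchanged if $\mathbf{R}_x$ is replaced by $\mathbf{R}_{x,\parallel}\triangleq\mathbf{P}\mathbf{R}_x\mathbf{P}$, and $\mathbf{R}_{x,\parallel}\succeq 0$ with $\text{tr}(\mathbf{R}_{x,\parallel})=\text{tr}(\mathbf{P}\mathbf{R}_x)\le\text{tr}(\mathbf{R}_x)=p_t$ because $\mathbf{0}\preceq\mathbf{P}\preceq\mathbf{I}$. Thus any feasible $\mathbf{R}_x$ can be compressed into $\mathcal{S}$ with the same objective value, the only loss being that the power budget may now be underspent.

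The second step restores the budget inside $\mathcal{S}$ without raising the objective. Writing $p'\triangleq\text{tr}(\mathbf{R}_{x,\parallel})\le p_t$, I would take $\mathbf{R}_x'\triangleq\mathbf{R}_{x,\parallel}+\tfrac{p_t-p'}{\|\mathbf{a}'\|^2}\mathbf{a}'\mathbf{a}'^H$, which is positive semidefinite, has $\text{tr}(\mathbf{R}_x')=p_t$, and has range in $\mathcal{S}$, hence is feasible and of the claimed form. Because $\mathbf{a}^H\mathbf{a}'=0$, this rank–one correction leaves $\mathbf{a}^H\mathbf{R}_x\mathbf{a}$ and $\mathbf{a}^H\mathbf{R}_x\mathbf{a}'$ unchanged and only increases $\mathbf{a}'^H\mathbf{R}_x\mathbf{a}'$. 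By (\ref{crbnew1}), $\text{CRB}(\theta)$ is nonincreasing in $\mathbf{a}'^H\mathbf{R}_x\mathbf{a}'$ (it enters the denominator with the positive coefficient $\|\mathbf{b}\|^2$), while $\text{CRB}(\phi)\propto 1/(\mathbf{a}^H\mathbf{R}_x\mathbf{a})$ does not involve $\mathbf{a}'^H\mathbf{R}_x\mathbf{a}'$ at all; hence $\text{CRB}(\theta)-\text{CRB}(\phi)$ at $\mathbf{R}_x'$ is no larger than at $\mathbf{R}_x$. Combining the two steps, every feasible covariance is dominated by a feasible one supported on $\mathcal{S}$, so an optimal $\mathbf{R}_x$ may be taken with range in $\mathrm{span}\{\mathbf{a}/\|\mathbf{a}\|,\mathbf{a}'/\|\mathbf{a}'\|\}$, which is the assertion of the lemma.

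I expect the only real subtlety to be the choice of direction in the second step: the residual power must be injected specifically along $\mathbf{a}'$. Placing it along $\mathbf{a}$, or spreading it isotropically over $\mathcal{S}$, would increase $\mathbf{a}^H\mathbf{R}_x\mathbf{a}$ and thereby decrease $\text{CRB}(\phi)$, which could raise the objective; $\mathbf{a}'$ is the unique direction in $\mathcal{S}$ that improves the BS bound while being invisible to the eavesdropper's bound, and this asymmetry is exactly what makes the reduction value–preserving. Two minor bookkeeping items remain: one should assume $\|\mathbf{a}'\|>0$ (it vanishes only at $\theta=\pm\pi/2$, a null event) so that the correction term is well defined, and, since the set of covariances on which both CRBs are finite is not compact, the cleanest statement is the ``dominated by a subspace point'' form above, from which the structural conclusion of the lemma follows at once.
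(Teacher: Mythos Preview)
Your proof is correct and uses the same projection idea as the paper: both arguments introduce $\mathbf{U}=[\mathbf{a}/\|\mathbf{a}\|,\ \mathbf{a}'/\|\mathbf{a}'\|]$, observe that the objective depends on $\mathbf{R}_x$ only through $\mathbf{a}^H\mathbf{R}_x\mathbf{a}$, $\mathbf{a}'^H\mathbf{R}_x\mathbf{a}'$, $\mathbf{a}^H\mathbf{R}_x\mathbf{a}'$, and conclude that the component of $\mathbf{R}_x$ outside $\mathrm{span}\{\mathbf{a},\mathbf{a}'\}$ is inert while consuming trace. The paper does this by factoring $\mathbf{R}_x=\boldsymbol{\Delta}\boldsymbol{\Delta}^H$ and projecting $\boldsymbol{\Delta}$; you project $\mathbf{R}_x$ directly via $\mathbf{P}\mathbf{R}_x\mathbf{P}$, which is equivalent.

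Where you genuinely add something is your second step. The paper, after showing $\mathrm{tr}(\hat{\mathbf{R}}_x)\ge 0$, simply asserts that the equality constraint $\mathrm{tr}(\mathbf{R}_x)=p_t$ forces $\hat{\mathbf{R}}_x=0$. That inference tacitly assumes that any freed-up power can be reabsorbed without raising the objective, which for the difference $\text{CRB}(\theta)-\text{CRB}(\phi)$ is not automatic (e.g., uniform scaling of $\mathbf{R}_x$ scales both CRBs by the same factor and can move the difference in either direction). You close this gap by explicitly injecting the residual power along $\mathbf{a}'$, and then using (\ref{crbnew1}) and the closed form for $\text{CRB}(\phi)$ to check that $\text{CRB}(\theta)$ can only decrease while $\text{CRB}(\phi)$ is untouched. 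That is exactly the right direction to choose, and your remark that $\mathbf{a}'$ is the unique direction in $\mathcal{S}$ invisible to the eavesdropper's bound is the key observation. So your argument is a cleaner, self-contained version of the paper's, with the budget-restoration step made rigorous for this particular objective.
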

Motivated by this discussion, we propose two transmit beamforming structures: (i) SSJB, where $\mathbf{X}$ combines sensing optimal without security and secure communication optimal precoders; and (ii) SLB, where $\mathbf{X}$ combines secure sensing optimal and secure communication optimal precoders, extendable to include the communication optimal design.
\subsection{SSJB Transmission Strategy}\label{firststrategy}
Based on the optimal solutions of Subproblems 1 and 2, in the SSJB scheme, the BS transmits:
\begin{align}
\mathbf{X}=\sqrt{P\tau}\mathbf{t}_1 \mathbf{s}_u+ \sqrt{P(1-\tau)}\mathbf{G} \mathbf{V} \in \mathbb{C}^{N\times L},\label{x}
\end{align}
where \(L > N\) represents the length of the radar pulse or communication frame; \(\mathbf{s}_u \in \mathbb{C}^{1 \times L}\) is a white Gaussian signaling data stream for the user with unit power, satisfying \(\frac{1}{L}E\{\mathbf{s}_u\mathbf{s}_u^H\} \approx 1\) when \(L\) is sufficiently large \cite{CrameRaoBoundOptimizationforJoint}. The matrix \(\mathbf{V} \in \mathbb{C}^{(N-2) \times L}\) represents an artificial noise (AN), whose elements are i.i.d. complex Gaussian, such that each row, denoted by \(\mathbf{v}_i\), satisfies \(\frac{1}{L}\mathbf{v}_i\mathbf{v}_i^H={\frac{1}{N-2}}\). We assume that the AN signals and data \(\mathbf{s}_u\) are orthogonal, i.e., \(\mathbf{s}_u\mathbf{v}_i^H=0\). The vectors \(\mathbf{t}_1\) and \(\mathbf{G}\) are constructed as follows. First, we construct an orthonormal basis for the \(N\)-dimensional space. The first basis vector is aligned with the target channel: \(\tilde{\mathbf{a}}=\frac{\mathbf{a}}{||\mathbf{a}||}\). The second basis vector is \(\tilde{\mathbf{h}}=\frac{\mathbf{h}-(\tilde{\mathbf{a}}^H \mathbf{h})\tilde{\mathbf{a}}}{||\mathbf{h}-(\tilde{\mathbf{a}}^H \mathbf{h})\tilde{\mathbf{a}}||}\). We choose \(\mathbf{t}_1\) to be a vector in the span of \(\tilde{\mathbf{a}}\) and \(\tilde{\mathbf{h}}\), i.e., \(\mathbf{t}_1= \alpha \tilde{\mathbf{a}} + \beta \tilde{\mathbf{h}}\). Next, defining the matrix \(\mathbf{A}= \begin{bmatrix} \mathbf{\tilde{a}} & \mathbf{\tilde{h}} \end{bmatrix} \in \mathbb{C}^{N \times 2}\), an orthonormal basis for the null space of \( \mathbf{A}^H \), i.e., all vectors \( \mathbf{x} \) such that \( \mathbf{A}^H \mathbf{x} = 0 \), is derived as follows. We compute the singular value decomposition (SVD) \(\mathbf{A}= \mathbf{U} \mathbf{\Sigma} \mathbf{\tilde{W}}^H\). Then, the last \(N - 2\) columns of \( \mathbf{U} \), denoted by \(\mathbf{t}_3,...,\mathbf{t}_N\), form the basis for the null space of \(\mathbf{A}^H\). The matrix \(\begin{bmatrix} \mathbf{\tilde{a}} & \mathbf{\tilde{h}} & \mathbf{G} \end{bmatrix}\), where \(\mathbf{G}=\begin{bmatrix} \mathbf{t}_3 & ... & \mathbf{t}_N \end{bmatrix}\), constructs an orthonormal basis for the \(N-2\)-dimensional space. The sample covariance matrix, due to the orthogonality of \(\mathbf{v}_i\) and \(\mathbf{s}_u\), is:  
\begin{align}
\mathbf{R}_x=\frac{1}{L}\mathbf {X}\mathbf{X}^H\approx 
P\tau\mathbf{t}_1\mathbf{t}^H_1+\frac{(1-\tau)P}{N-2}\sum_{i=3}^{N}\mathbf{t}_i\mathbf{t}^H_i.\label{Rx}
\end{align}  
Thus: $p_t= \text{Tr}(R_x) = P\tau\text{Tr}(\mathbf{t}_1\mathbf{t}_1^H) + \frac{(1-\tau)P}{N-2} \text{Tr} \left( \sum_{i=3}^{N} \mathbf{t}_i\mathbf{t}_i^H \right) \overset{(a)}{=} P.$ Here, (a) follows from the unitary property of \(\mathbf{U}\) and the orthonormality of \(\mathbf{\tilde{a}}\) and \(\mathbf{\tilde{h}}\), ensuring that \( P \) represents the BS transmit power, and \( \tau \) denotes the fraction allocated to $\mathbf{s}_u$.  
\subsection{SLB Transmission Strategy}\label{secondstrategy}
Based on the optimal solutions of Subproblems 2 and 3, in
the SLB scheme, the BS transmits:
\begin{align}
&\mathbf{X}=\sqrt{P\tau_1}\frac{\mathbf{h}}{||\mathbf{h}||} \mathbf{s}_u+ \sqrt{P\tau_2}\tilde{\mathbf{G}} \tilde{\mathbf{V}}+\sqrt{P\tau_3}\frac{\mathbf{a}}{||\mathbf{a}||} \mathbf{s}_{r1}\nonumber\\
&+\sqrt{P(1-\tau_3-\tau_2-\tau_1)}\frac{\mathbf{a}'}{||\mathbf{a}'||} \mathbf{s}_{r2} \in \mathbb{C}^{N\times L},\label{x2}
\end{align}
where the matrix \(\begin{bmatrix} \frac{\mathbf{h}}{||\mathbf{h}||} & \tilde{\mathbf{G}} \end{bmatrix}\) forms an orthonormal basis for the \(N\)-dimensional space. The radar signals \(\mathbf{s}_{r1}, \mathbf{s}_{r2} \in \mathbb{C}^{1 \times L}\) have unit power, satisfying \(\frac{1}{L}\mathbb{E}\{\mathbf{s}_{ri}\mathbf{s}_{i}^H\} = 1\) for $i \in \{r1,r2\}$. The artificial noise (AN) matrix \(\tilde{\mathbf{V}} \in \mathbb{C}^{(N-1) \times L}\) consists of i.i.d. complex Gaussian entries, where each row \(\tilde{\mathbf{v}}_i\) meets the power constraint \(\frac{1}{L}\tilde{\mathbf{v}}_i\tilde{\mathbf{v}}_i^H = \frac{1}{N-1}\). We assume orthogonality between the AN (\(\tilde{\mathbf{V}}\)), the data signal (\(\mathbf{s}_u\)), and the radar signals (\(\mathbf{s}_{r1}, \mathbf{s}_{r2}\)), such that: $\mathbf{s}_i\tilde{\mathbf{v}}_k^H = 0 \quad k \in \{1,...,N-1\} \quad \text{and} \quad \mathbf{s}_i\mathbf{s}_j^H = 0 \quad \text{for} \quad i \neq j \in \{r1, r2, u\}.$ Moreover, we have $p_t= \text{Tr}(R_x) = P$.
\section{CCDF of CRB at the BS}
In this section, we derive $P(\text{CRB}(\theta) > \epsilon)$ for the SSJB strategy in Lemma \ref{exactderivationlemma} (with the proof provided in Appendix \ref{proofexactderivationlemma}) and for the SLB strategy, in Lemma \ref{exactderivationlemma2} (proof in Appendix \ref{proofexactderivationlemma2}).
\begin{lemma}\label{exactderivationlemma}
In the SSJB strategy $\text{CRB}(\theta)$ is given as: $\text{CRB}(\theta)=Q[\gamma_1||\mathbf{b}'||^2N|\alpha|^2+\gamma_2M(||\mathbf{a}'||^2-\frac{(\sum_{i=1}^{N}-f'_i{t}_i)^2+(\sum_{i=1}^{N}f'_i{r}_i)^2}{K-\frac{1}{N}(T^2+R^2)}))]^{-1}$ where $\mathbf{b}'$ is the derivation of $\mathbf{b}(\theta)$ with respect to $\theta$, and \( f'_i=\pi \cos(\theta)\frac{N-(2i-1)}{2} \), ${r}_i\triangleq \mathcal{R}(e^{jf_i}{h}_i)=|{h}_i|\cos(f_i+{{\phi}}_i)$, ${t}_i\triangleq \mathcal{I}(e^{jf_i}{h}_i)=|{h}_i|\sin(f_i+{{\phi}}_i)$, ${{k}}_i\triangleq |{h}_i|^2$, ${R}\triangleq \sum_{i=1}^{N}{r}_i$, ${T}\triangleq \sum_{i=1}^{N}{t}_i$, ${{K}}\triangleq \sum_{i=1}^{N}{{k}}_i$, $||\mathbf{b}'||^2=\frac{\pi^2\cos^2(\theta)M(M^2-1)}{12}$, $||\mathbf{a}'||^2=\frac{\pi^2\cos^2(\theta)N(N^2-1)}{12}$, $Q\triangleq\frac{\sigma^2_R}{2 \mid c_3 \mid ^2 L}$, $P\tau \triangleq \gamma_1$ and $\frac{(1-\tau)P}{N-2} \triangleq \gamma_2$. Moreover, a lower bound for $P(\text{CRB}(\theta)>\epsilon)$ is given by 
 $P_{Lc}(\epsilon)=\!\!\frac{2}{\pi}\sin^{-1}(\sqrt{6}\sigma_r(\epsilon MN \pi ^2LP \mid c_3 \mid ^2\big(|\alpha|^2\tau(M^2-1)+\frac{(N^2-1)(1-\tau)}{N-2})\big)^{-1/2})$,
when $\sqrt{6}\sigma(\epsilon MN \pi ^2LP \mid c_3 \mid ^2\big(|\alpha|^2\tau(M^2-1)+\frac{(N^2-1)(1-\tau)}{N-2})\big)^{-1/2})<1$, and $0$ otherwise. Moreover, an approximation for $P(\text{CRB}(\theta)>\epsilon)$, $P_{Ac}$, is $P_{Ac}(\epsilon)=\frac{1}{\pi}\!\!\int_{0}^{\pi}\!\!\iiint_{\mathcal{\tilde{D}}(\theta, R,T, K)}\!\!\!f(R,T,K)\,dR\,dT\,dK,d\theta,$ where $\mathcal{\tilde{D}}(\theta, R, T, K)= \frac{\frac{6\sigma_R^2}{L|c_3|^2\pi^2\cos^2(\theta)MN}}{\gamma_1|\alpha|^2(M^2-1)+\gamma_2(N^2-1)(1-\frac{1}{K-\frac{1}{N}(T^2+R^2)}))}>\epsilon$, and $f(R,T, K)$ is the PDF of a trivariate normal distribution with a mean vector of $N\mathbf{\mu_d}$ and a covariance matrix of $N\mathbf{\Sigma_d}$ derived at Lemma \ref{lemma1i}.
\end{lemma}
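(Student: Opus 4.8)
The plan is to reduce the generic $\text{CRB}(\theta)$ in~(\ref{crbnew1}) to the claimed closed form by plugging in the SSJB sample covariance~(\ref{Rx}), then to discard a nonnegative term to get the arcsine lower bound $P_{Lc}$, and finally to replace that term by its mean and invoke the CLT to get the approximation $P_{Ac}$. For the closed form I would substitute $\mathbf{R}_x\approx\gamma_1\mathbf{t}_1\mathbf{t}_1^H+\gamma_2\sum_{i=3}^N\mathbf{t}_i\mathbf{t}_i^H$ with $\mathbf{t}_1=\alpha\tilde{\mathbf{a}}+\beta\tilde{\mathbf{h}}$ into~(\ref{crbnew1}), and use the orthonormality of $\{\tilde{\mathbf{a}},\tilde{\mathbf{h}},\mathbf{t}_3,\dots,\mathbf{t}_N\}$, $\|\mathbf{a}\|^2=N$, $\|\mathbf{b}\|^2=M$, and $\mathbf{a}^H\mathbf{a}'=0$ (hence $\tilde{\mathbf{a}}^H\mathbf{a}'=0$) to get $\mathbf{a}^H\mathbf{R}_x\mathbf{a}=\gamma_1 N|\alpha|^2$, $\mathbf{a}^H\mathbf{R}_x\mathbf{a}'=\sqrt{N}\,\gamma_1\alpha\beta^{*}\,\tilde{\mathbf{h}}^H\mathbf{a}'$, and, by Parseval in the full $N$-dimensional basis, $\mathbf{a}'^H\mathbf{R}_x\mathbf{a}'=\gamma_1|\beta|^2|\tilde{\mathbf{h}}^H\mathbf{a}'|^2+\gamma_2(\|\mathbf{a}'\|^2-|\tilde{\mathbf{h}}^H\mathbf{a}'|^2)$. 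Inserting these, the cross term $\|\mathbf{b}\|^2|\mathbf{a}^H\mathbf{R}_x\mathbf{a}'|^2/(\mathbf{a}^H\mathbf{R}_x\mathbf{a})=M\gamma_1|\beta|^2|\tilde{\mathbf{h}}^H\mathbf{a}'|^2$ exactly cancels the $\gamma_1|\beta|^2$ part of $\mathbf{a}'^H\mathbf{R}_x\mathbf{a}'$, leaving $\text{CRB}(\theta)=Q\big(\gamma_1\|\mathbf{b}'\|^2N|\alpha|^2+\gamma_2 M(\|\mathbf{a}'\|^2-|\tilde{\mathbf{h}}^H\mathbf{a}'|^2)\big)^{-1}$. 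It then remains to rewrite $|\tilde{\mathbf{h}}^H\mathbf{a}'|^2$ in the stated coordinates: since $\tilde{\mathbf{h}}$ is $\mathbf{h}$ with its $\tilde{\mathbf{a}}$-component removed and normalized and $\mathbf{a}'\perp\tilde{\mathbf{a}}$, one has $|\tilde{\mathbf{h}}^H\mathbf{a}'|^2=|\mathbf{h}^H\mathbf{a}'|^2/(\|\mathbf{h}\|^2-|\tilde{\mathbf{a}}^H\mathbf{h}|^2)$, and with $a_i=e^{-jf_i}$, $a_i'=-jf_i'e^{-jf_i}$ this gives $\|\mathbf{h}\|^2=K$, $|\tilde{\mathbf{a}}^H\mathbf{h}|^2=\frac1N\big|\sum_i e^{jf_i}h_i\big|^2=\frac1N(R^2+T^2)$, and $\mathbf{h}^H\mathbf{a}'=-\sum_i f_i't_i-j\sum_i f_i'r_i$, which together with the explicit $\|\mathbf{b}'\|^2,\|\mathbf{a}'\|^2$ yields the lemma's formula.

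For the lower bound $P_{Lc}$: since $|\tilde{\mathbf{h}}^H\mathbf{a}'|^2\ge 0$, the denominator never exceeds $D_0(\theta):=\gamma_1\|\mathbf{b}'\|^2N|\alpha|^2+\gamma_2 M\|\mathbf{a}'\|^2$, so $\text{CRB}(\theta)\ge Q/D_0(\theta)$ for every realization and hence $P(\text{CRB}(\theta)>\epsilon)\ge P\big(Q/D_0(\theta)>\epsilon\big)$. Substituting $\|\mathbf{b}'\|^2=\frac{\pi^2\cos^2\theta\,M(M^2-1)}{12}$, $\|\mathbf{a}'\|^2=\frac{\pi^2\cos^2\theta\,N(N^2-1)}{12}$, $\gamma_1=P\tau$, $\gamma_2=\frac{(1-\tau)P}{N-2}$, $Q=\frac{\sigma_R^2}{2|c_3|^2 L}$ turns $Q/D_0(\theta)>\epsilon$ into the $\theta$-only event $\cos^2\theta<c^2$, where $c^2:=6\sigma_R^2\big(\epsilon MN\pi^2 LP|c_3|^2(|\alpha|^2\tau(M^2-1)+\frac{(N^2-1)(1-\tau)}{N-2})\big)^{-1}$. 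Since $\theta\sim\mathcal U(-\pi/2,\pi/2)$ has $\cos\theta\ge0$, $P(\cos\theta<c)=1-\frac2\pi\arccos c=\frac2\pi\sin^{-1}c$ when $c<1$, and when $c\ge1$ the expression $\frac2\pi\sin^{-1}c$ is undefined and one reports the trivial lower bound $0$; this gives $P_{Lc}$.

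For the approximation $P_{Ac}$: I would condition on $\theta$ and, using that $\cos^2\theta$ has the same law on $(-\pi/2,\pi/2)$ as on $(0,\pi)$, write $P(\text{CRB}(\theta)>\epsilon)=\frac1\pi\int_0^\pi P(\text{CRB}(\theta)>\epsilon\mid\theta)\,d\theta$. For fixed $\theta$, $\text{CRB}(\theta)>\epsilon$ is an event in $(|\mathbf{h}^H\mathbf{a}'|^2,R,T,K)$; I replace $|\mathbf{h}^H\mathbf{a}'|^2$ by its mean $\E[|\mathbf{h}^H\mathbf{a}'|^2]=\|\mathbf{a}'\|^2$ (which holds because $\mathbf{h}^H\mathbf{a}'\sim\mathcal{CN}(0,\|\mathbf{a}'\|^2)$, equivalently $\E[(\sum_i f_i't_i)^2+(\sum_i f_i'r_i)^2]=\|\mathbf{a}'\|^2$), which collapses the event to the region $\tilde{\mathcal{D}}(\theta,R,T,K)$. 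Finally, $R=\sum_i r_i$, $T=\sum_i t_i$, $K=\sum_i k_i$ are sums of $N$ i.i.d.\ vectors $(r_i,t_i,k_i)$ (deterministic functions of the i.i.d.\ $h_i$), so by the multivariate CLT $(R,T,K)$ is approximately trivariate normal with mean $N\boldsymbol{\mu}_d$ and covariance $N\boldsymbol{\Sigma}_d$ as computed in Lemma~\ref{lemma1i}; integrating this density over $\tilde{\mathcal{D}}(\theta,R,T,K)$ and averaging over $\theta$ yields $P_{Ac}$.

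The most delicate part is the algebra of the first step: one must track which inner products vanish, apply Parseval in the correct $(N-2)$-dimensional null space, and verify that all $\beta$-dependence cancels so that the final expression involves only $\alpha$, $R$, $T$, $K$ and $\theta$. Conceptually, the weakest link is the last step—replacing $|\mathbf{h}^H\mathbf{a}'|^2$ by its mean is exact only in expectation, not per realization, and the CLT is invoked at finite $N$; both are standard in this line of analysis and will be corroborated by simulation, which is precisely why $P_{Ac}$ is stated as an approximation while $\text{CRB}(\theta)$ and $P_{Lc}$ are exact.
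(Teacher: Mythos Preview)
Your proposal is correct and follows essentially the same route as the paper: substitute the SSJB covariance into~(\ref{crbnew1}), exploit the orthonormality of $\{\tilde{\mathbf{a}},\tilde{\mathbf{h}},\mathbf{t}_3,\dots,\mathbf{t}_N\}$ to collapse the expression, drop the nonnegative term $|\tilde{\mathbf{h}}^H\mathbf{a}'|^2$ for the arcsine lower bound, and replace it by its mean together with a multivariate CLT on $(R,T,K)$ for the approximation. The only cosmetic difference is that you compute $\mathbf{a}^H\mathbf{R}_x\mathbf{a}$, $\mathbf{a}^H\mathbf{R}_x\mathbf{a}'$, $\mathbf{a}'^H\mathbf{R}_x\mathbf{a}'$ separately and exhibit the $\beta$-cancellation explicitly, whereas the paper first uses the projection identity $\sum_{i=3}^N\mathbf{t}_i\mathbf{t}_i^H=\mathbf{I}-\tilde{\mathbf{a}}\tilde{\mathbf{a}}^H-\tilde{\mathbf{h}}\tilde{\mathbf{h}}^H$ to jump directly to $\text{CRB}(\theta)=Q\big(\gamma_1\|\mathbf{b}'\|^2|\mathbf{a}^H\mathbf{t}_1|^2+\gamma_2\|\mathbf{b}\|^2\sum_{i=3}^N|\mathbf{a}'^H\mathbf{t}_i|^2\big)^{-1}$; both are Parseval in the same basis and yield the same formula.
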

\begin{lemma}\label{exactderivationlemma2}
In the SLB strategy $\text{CRB}(\theta)$ is equal to (\ref{crbsimplified}).
\begin{figure*}[t]
\normalsize
\begin{align}
\frac{QK}{||\mathbf{b}'||^2(y(T^2+R^2)+xK)+||\mathbf{b}||^2(\frac{|\mathbf{a}'|^2P\tau_2K}{N-1}+PK(1-\tau_1-\tau_2-\tau_3)|\mathbf{a}'|^2+\frac{xyK((\sum_{i=1}^{N}-f'_i{t}_i)^2+(\sum_{i=1}^{N}f'_i{r}_i)^2))}{(y(R^2+T^2))+xK)})}\label{crbsimplified}
\end{align}
\hrulefill
\end{figure*}
where $y\triangleq (P\tau_1-\frac{P\tau_2}{N-1})$, $x\triangleq NP\tau_3+\frac{NP\tau_2}{N-1}$. Moreover, an upper bound for $P(\text{CRB}(\theta)>\epsilon)$ in  SLB is given by:  
\begin{align}  
P_{Uc}(\epsilon)&=\frac{1}{\pi}\!\!\int_{-\pi/2}^{\pi/2}\!\!\iiint_{\mathcal{D}(\theta, R,T, K)}\!\!\!\!\!\!f(R,T,K)\,dR\,dT\,dK\,d\theta, \label{crbb1one}  
\end{align}  
where $\mathcal{D}(\theta, R, T, K)=QK[||\mathbf{b}'||^2(y(T^2+R^2)+xK)+M(\frac{|\mathbf{a}'|^2P\tau_2K}{N-1}+PK(1-\tau_1-\tau_2-\tau_3)|\mathbf{a}'|^2)]^{-1}>\epsilon$. A lower bound for $P(\text{CRB}(\theta)>\epsilon)$ is given by replacing $\mathcal{{D}}$ with $\mathcal{\tilde{D}}(\theta, R, T, K)
= QK[||\mathbf{b}'||^2(y(T^2+R^2)+xK)+M|\mathbf{a}'|^2K(\frac{P\tau_2}{N-1}+P(1-\tau_1-\tau_2-\tau_3)+\frac{xyK}{(y(R^2+T^2))+xK)})]^{-1}>\epsilon$ at (\ref{crbb1one}).  An approximation for $P(\text{CRB}(\theta)>\epsilon)$ is derived by replacing $\mathcal{\tilde{\tilde{D}}}(\theta, R, T, K) =QK[||\mathbf{b}'||^2(y(T^2+R^2)+xK)+||\mathbf{b}||^2K|\mathbf{a}'|^2(\frac{P\tau_2}{N-1}+P(1-\tau_1-\tau_2-\tau_3)+\frac{xy}{(y(R^2+T^2))+xK)})]^{-1}>\epsilon$ instead of $\mathcal{{D}}(\theta, R, T, K)$ into (\ref{crbb1one}).
\end{lemma}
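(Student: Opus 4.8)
The statement has two halves --- the closed form (\ref{crbsimplified}) for $\text{CRB}(\theta)$ under SLB, and the bounds and approximation for $P(\text{CRB}(\theta)>\epsilon)$ --- and I would treat them in that order. The first is obtained by evaluating the sample covariance $\mathbf{R}_x=\tfrac1L\mathbf{X}\mathbf{X}^H$ of the SLB signal (\ref{x2}) and inserting it into the general CRB expression (\ref{crbnew1}). Since $\tilde{\mathbf{G}}$ is an orthonormal basis of the null space of $\mathbf{h}$, one has $\tilde{\mathbf{G}}\tilde{\mathbf{G}}^H=\mathbf{I}_N-\mathbf{h}\mathbf{h}^H/\|\mathbf{h}\|^2$, and with the postulated mutual orthogonality of $\mathbf{s}_u,\mathbf{s}_{r1},\mathbf{s}_{r2}$ and the rows of $\tilde{\mathbf{V}}$ the cross terms vanish, leaving $\mathbf{R}_x=y\,\mathbf{h}\mathbf{h}^H/\|\mathbf{h}\|^2+\tfrac{P\tau_2}{N-1}\mathbf{I}_N+P\tau_3\,\mathbf{a}\mathbf{a}^H/\|\mathbf{a}\|^2+P(1-\tau_1-\tau_2-\tau_3)\,\mathbf{a}'\mathbf{a}'^H/\|\mathbf{a}'\|^2$ with $y=P\tau_1-\tfrac{P\tau_2}{N-1}$, which is positive semidefinite for every feasible power split.

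\textbf{Closed form.} Substituting this $\mathbf{R}_x$ into (\ref{crbnew1}), and using $\mathbf{a}^H\mathbf{a}'=0$, $\mathbf{a}^H\mathbf{a}=N$, $\|\mathbf{a}'\|^2=|\mathbf{a}'|^2$, $\|\mathbf{h}\|^2=K$, $|\mathbf{a}^H\mathbf{h}|^2=R^2+T^2$ and $|\mathbf{a}'^H\mathbf{h}|^2=(\sum_i f'_i t_i)^2+(\sum_i f'_i r_i)^2$, the quadratic forms reduce to $\mathbf{a}^H\mathbf{R}_x\mathbf{a}=\bigl(y(R^2+T^2)+xK\bigr)/K$ with $x=NP\tau_3+\tfrac{NP\tau_2}{N-1}$, $\mathbf{a}^H\mathbf{R}_x\mathbf{a}'=y(\mathbf{a}^H\mathbf{h})(\mathbf{h}^H\mathbf{a}')/K$, and the analogous expression for $\mathbf{a}'^H\mathbf{R}_x\mathbf{a}'$. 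Feeding these into (\ref{crbnew1}), multiplying numerator and denominator by $K$, and combining the subtracted term $\|\mathbf{b}\|^2|\mathbf{a}^H\mathbf{R}_x\mathbf{a}'|^2/(\mathbf{a}^H\mathbf{R}_x\mathbf{a})$ with $\|\mathbf{b}\|^2\mathbf{a}'^H\mathbf{R}_x\mathbf{a}'$ collapses these into the single fraction $xyK\,|\mathbf{a}'^H\mathbf{h}|^2/\bigl(y(R^2+T^2)+xK\bigr)$ of (\ref{crbsimplified}); that cancellation is the only step that is not purely mechanical. I would also record that the resulting denominator is strictly positive, since $\mathbf{a}^H\mathbf{R}_x\mathbf{a}>0$, which is needed to invert CRB inequalities below.

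\textbf{CCDF.} As $\theta\sim\mathcal{U}(-\pi/2,\pi/2)$, I would condition, writing $P(\text{CRB}(\theta)>\epsilon)=\tfrac1\pi\int_{-\pi/2}^{\pi/2}P(\text{CRB}(\theta)>\epsilon\mid\theta)\,d\theta$, and for fixed $\theta$ rewrite the event as $\{QK>\epsilon\,\mathcal{N}\}$ with $\mathcal{N}$ the denominator of (\ref{crbsimplified}). The difficulty is that $\mathcal{N}$ carries the correction term $\mathcal{C}=MxyK\,|\mathbf{a}'^H\mathbf{h}|^2/\bigl(y(R^2+T^2)+xK\bigr)$, which depends on $\mathbf{h}$ through $|\mathbf{a}'^H\mathbf{h}|^2=(\sum_i f'_i t_i)^2+(\sum_i f'_i r_i)^2$ and not only through $(R,T,K)$. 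I would strip this residual dependence three ways, all using $y(R^2+T^2)+xK=K\,\mathbf{a}^H\mathbf{R}_x\mathbf{a}>0$ and (for the bounds) $\mathcal{C}\ge0$, valid once $y\ge0$, i.e.\ under the natural split $\tau_1\ge\tau_2/(N-1)$: dropping $\mathcal{C}$ shrinks $\mathcal{N}$ and over-estimates $\text{CRB}(\theta)$, so $\text{CRB}(\theta)>\epsilon$ implies $\mathcal{D}(\theta,R,T,K)$ and $P_{Uc}$ is an upper bound; replacing $|\mathbf{a}'^H\mathbf{h}|^2$ by the Cauchy--Schwarz bound $\|\mathbf{a}'\|^2\|\mathbf{h}\|^2=|\mathbf{a}'|^2K$ enlarges $\mathcal{N}$ and under-estimates $\text{CRB}(\theta)$, so $\tilde{\mathcal{D}}(\theta,R,T,K)$ implies $\text{CRB}(\theta)>\epsilon$, giving the stated lower bound; and replacing $|\mathbf{a}'^H\mathbf{h}|^2$ by its mean $\mathbb{E}|\mathbf{a}'^H\mathbf{h}|^2=\|\mathbf{a}'\|^2=|\mathbf{a}'|^2$ (as $\mathbf{h}\sim\mathcal{CN}(0,\mathbf{I})$) yields the approximation $\tilde{\tilde{\mathcal{D}}}$. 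After each substitution the region depends on $\mathbf{h}$ only through $(R,T,K)$, so $P(\cdot\mid\theta)=\iiint_{\text{region}}f(R,T,K)\,dR\,dT\,dK$ where $f$ is the trivariate normal density with mean $N\mathbf{\mu_d}$ and covariance $N\mathbf{\Sigma_d}$ of Lemma \ref{lemma1i} --- legitimate because the triples $(r_i,t_i,k_i)$ are i.i.d.\ in $i$, so the CLT applies --- and integrating over $\theta$ gives (\ref{crbb1one}) and its two variants.

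\textbf{Main obstacle.} The crux is the reduction in the third step: one must verify positivity of $\mathcal{N}$ so that passing between $\text{CRB}(\theta)>\epsilon$ and $QK>\epsilon\,\mathcal{N}$ is valid, keep careful track of the sign of $\mathcal{C}$ --- since it is the sign of $y$ that decides whether discarding $\mathcal{C}$ over- or under-estimates $\text{CRB}(\theta)$, the labels ``upper'' and ``lower'' hinge on $y\ge0$ --- and choose the three surrogates precisely so that the leftover $|\mathbf{a}'^H\mathbf{h}|^2$ disappears, which is exactly what makes the three-dimensional density $f(R,T,K)$ of Lemma \ref{lemma1i} enough to evaluate everything. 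The algebra in the closed-form step and the outer $\theta$-integral are then routine.
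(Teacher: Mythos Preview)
Your proposal is correct and follows essentially the same route as the paper: compute $\mathbf{R}_x$ from (\ref{x2}) using $\tilde{\mathbf{G}}\tilde{\mathbf{G}}^H=\mathbf{I}-\mathbf{h}\mathbf{h}^H/\|\mathbf{h}\|^2$, evaluate the three quadratic forms $\mathbf{a}^H\mathbf{R}_x\mathbf{a}$, $\mathbf{a}'^H\mathbf{R}_x\mathbf{a}'$, $\mathbf{a}^H\mathbf{R}_x\mathbf{a}'$ and substitute into (\ref{crbnew1}); then obtain the upper bound by dropping the $|\mathbf{a}'^H\mathbf{h}|^2$ term (under $y\ge 0$), the lower bound by Cauchy--Schwarz $|\mathbf{a}'^H\mathbf{h}|^2\le\|\mathbf{a}'\|^2K$, and the approximation by replacing $|\mathbf{a}'^H\mathbf{h}|^2$ with its expectation $\|\mathbf{a}'\|^2$, finishing via the CLT-based trivariate normal density of Lemma~\ref{lemma1i}. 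Your explicit flagging of the sign condition $y\ge 0$ and the positivity of the denominator is a welcome addition that the paper states only in passing.
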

In general, integrating a multivariate normal PDF over an arbitrary interval lacks a closed-form solution. However, we have used ray-tracing method and the MATLAB toolbox presented in \cite{Amethodtointegrate}, which enables the integration of Gaussian distributions in any dimension, over any domain.
\section{CCDF of CRB at the strong sensing eavesdropper}
In this section, we assume a strong eav—one that knows $\mathbf{s}_u$ and $\mathbf{V}$ in the SSJB strategy, and $\mathbf{s}_u$, $\tilde{\mathbf{V}}$, $\mathbf{s}_{r1}$, and $\mathbf{s}_{r2}$ in the SLB strategy. We derive $P(\text{CRB}(\phi) > \epsilon)$ for the SSJB strategy in Lemma \ref{crbsensingeav} and for the SLB strategy in Lemma \ref{crbsensingeav2}.
\begin{lemma}\label{crbsensingeav}
In the SSJB strategy $\text{CRB}(\phi) = \frac{\sigma^2_R}{2 |c_4|^2 L \gamma_1 ||\mathbf{c}'||^2 |\alpha|^2 N}$ where $||\mathbf{c}'||^2 = \frac{\pi^2 \cos^2(\phi) N_e (N_e^2 - 1)}{12}$ and $P(\text{CRB}(\phi)>{\epsilon})=\frac{2}{\pi}\sin^{-1}(\sqrt{6}\sigma_R({\epsilon} N_eN \pi ^2LP \mid c_4 \mid ^2|\alpha|^2\tau(N_e^2-1))^{-1/2})$ when $(\sqrt{6}\sigma_R({\epsilon} N_eN \pi ^2LP \mid c_4 \mid ^2|\alpha|^2\tau(N_e^2-1))^{-1/2})<1$, and $1$ otherwise.
\end{lemma}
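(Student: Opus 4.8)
The plan is to specialize the general $\text{CRB}(\phi)$ expression derived at the end of Section~\ref{crbanalysis} to the SSJB precoder, then push the resulting scalar random variable through a change-of-variables to get its CCDF. Recall that for a strong sensing eav we established $\text{CRB}(\phi) = \frac{\sigma_R^2}{2|c_4|^2 L \|\mathbf{c}'\|^2\, \mathbf{a}^H \mathbf{R}_x \mathbf{a}}$, with $\|\mathbf{c}'\|^2 = \frac{\pi^2\cos^2(\phi)N_e(N_e^2-1)}{12}$. So the only precoder-dependent quantity is the quadratic form $\mathbf{a}^H \mathbf{R}_x \mathbf{a}$. First I would substitute the SSJB sample covariance $\mathbf{R}_x \approx P\tau\,\mathbf{t}_1\mathbf{t}_1^H + \frac{(1-\tau)P}{N-2}\sum_{i=3}^N \mathbf{t}_i\mathbf{t}_i^H$ from (\ref{Rx}). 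Since $\mathbf{t}_1 = \alpha\tilde{\mathbf{a}} + \beta\tilde{\mathbf{h}}$ lies in $\mathrm{span}\{\tilde{\mathbf{a}},\tilde{\mathbf{h}}\}$ and the $\mathbf{t}_i$ for $i\ge 3$ are orthonormal to $\tilde{\mathbf{a}}$, every term $\mathbf{a}^H \mathbf{t}_i = \|\mathbf{a}\|\,\tilde{\mathbf{a}}^H \mathbf{t}_i = 0$ for $i \ge 3$, and $\mathbf{a}^H \mathbf{t}_1 = \|\mathbf{a}\|(\alpha\,\tilde{\mathbf{a}}^H\tilde{\mathbf{a}} + \beta\,\tilde{\mathbf{a}}^H\tilde{\mathbf{h}}) = \alpha\|\mathbf{a}\|$ by orthonormality of the basis. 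Hence $\mathbf{a}^H\mathbf{R}_x\mathbf{a} = P\tau|\alpha|^2\|\mathbf{a}\|^2 = P\tau|\alpha|^2 N$ (using $\|\mathbf{a}\|^2 = N$). With $\gamma_1 = P\tau$ this gives exactly $\text{CRB}(\phi) = \frac{\sigma_R^2}{2|c_4|^2 L\gamma_1\|\mathbf{c}'\|^2|\alpha|^2 N}$, the claimed closed form.

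For the CCDF, note that after the substitution the only randomness left in $\text{CRB}(\phi)$ is through $\cos^2(\phi)$ inside $\|\mathbf{c}'\|^2$, since $\phi \sim \mathcal{U}(-\pi/2,\pi/2)$ and $\alpha$ is a design constant. Write $\text{CRB}(\phi) = \frac{6\sigma_R^2}{|c_4|^2 L\gamma_1 \pi^2 N_e(N_e^2-1)|\alpha|^2 N \cos^2\phi}$. Then the event $\{\text{CRB}(\phi) > \epsilon\}$ is equivalent to $\cos^2\phi < \frac{6\sigma_R^2}{\epsilon |c_4|^2 L\gamma_1\pi^2 N_e(N_e^2-1)|\alpha|^2 N}$; call the right-hand side $\kappa$. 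If $\kappa \ge 1$ this holds for all $\phi$, so the probability is $1$ — that is the ``$1$ otherwise'' branch. If $\kappa < 1$, then $\cos^2\phi < \kappa \iff |\cos\phi| < \sqrt{\kappa}$, which for $\phi \in (-\pi/2,\pi/2)$ means $|\phi| > \arccos(\sqrt{\kappa})$, i.e.\ $|\phi| \in (\arccos\sqrt\kappa,\pi/2)$. This is an interval of total length $2(\pi/2 - \arccos\sqrt\kappa) = 2\arcsin\sqrt\kappa$, and since $\phi$ is uniform on an interval of length $\pi$, the probability is $\frac{2}{\pi}\arcsin\sqrt\kappa$. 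Plugging back $\sqrt\kappa = \sqrt 6\,\sigma_R\big(\epsilon N_e N\pi^2 L P |c_4|^2 |\alpha|^2\tau(N_e^2-1)\big)^{-1/2}$ (using $\gamma_1 = P\tau$) recovers the stated formula, with the condition $\sqrt\kappa < 1$ as the regime of validity.

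Most of this is routine; there is no genuine ``hard part,'' but the step requiring the most care is the quadratic-form evaluation $\mathbf{a}^H\mathbf{R}_x\mathbf{a} = P\tau|\alpha|^2 N$ — specifically verifying that all AN directions $\mathbf{t}_i$, $i\ge 3$, annihilate $\mathbf{a}$ and that $\mathbf{t}_1$ contributes only through its $\tilde{\mathbf{a}}$-component, which hinges on the Gram–Schmidt construction of the orthonormal basis $\{\tilde{\mathbf{a}},\tilde{\mathbf{h}},\mathbf{t}_3,\dots,\mathbf{t}_N\}$ in Subsection~\ref{firststrategy} and on the approximation $\mathbf{R}_x \approx \frac1L\mathbf{X}\mathbf{X}^H$ being exact in the large-$L$ limit (orthogonality of $\mathbf{s}_u$ and the rows of $\mathbf{V}$). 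A minor secondary point is the boundary case $\kappa = 1$: the statement assigns it to the ``$1$'' branch, consistent with the strict inequality ``$<1$'' delimiting the $\arcsin$ branch, and indeed at $\kappa=1$ one has $\frac2\pi\arcsin 1 = 1$ so the two branches agree and there is no discontinuity.
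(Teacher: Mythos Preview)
Your proposal is correct and follows essentially the same approach as the paper: substitute the SSJB covariance (\ref{Rx}) into the general $\text{CRB}(\phi)$ expression from Subsection~\ref{opoftargeteav}, exploit the orthogonality of $\mathbf{a}$ with $\tilde{\mathbf{h}}$ and with $\mathbf{t}_i$ for $i\ge 3$ to reduce $\mathbf{a}^H\mathbf{R}_x\mathbf{a}$ to $P\tau|\alpha|^2 N$, and then use the uniform distribution of $\phi$ to compute the CCDF. Your write-up is simply more explicit about the CCDF computation than the paper's terse proof sketch.
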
 
\begin{proof}
 Substituting (\ref{Rx}) into $\text{CRB}(\phi)$, as derived in Subsection \ref{opoftargeteav}, and using the orthogonality between $\mathbf{a}$ and $\mathbf{t}_i$ for $i = 3, \dots, N$ as well as $\tilde{\mathbf{h}}$, along with the condition $\mathbf{c'}^H(\phi)\mathbf{c}(\phi) = 0$ and the distribution of $\phi$, the proof is complete.
\end{proof}
\begin{lemma}\label{crbsensingeav2}
In the SLB strategy $P(\text{CRB}(\phi)>\epsilon)$ is given by replacing $\theta$ with $\phi$ and $\mathcal{{D}}$ with $\mathcal{\hat{D}}(\phi, R, T, K)
= \frac{\sigma_R^2 K}{2 |c_4|^2 L \, \|\mathbf{c}'\|^2 \, (y(R^2 + T^2) + xK)}>\epsilon$ at (\ref{crbb1one}).
\end{lemma}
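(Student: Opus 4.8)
The plan is to specialize the general identity $\text{CRB}(\phi)=\dfrac{\sigma_R^2}{2|c_4|^2 L\,\|\mathbf{c}'\|^2\,\mathbf{a}^H\mathbf{R}_x\mathbf{a}}$, already derived in Subsection~\ref{opoftargeteav}, to the SLB covariance matrix, and then average over the channel and angle randomness. First I would form $\mathbf{R}_x=\tfrac1L\mathbf{X}\mathbf{X}^H$ from~(\ref{x2}); note that the $\mathbf{X}$ there depends only on $\mathbf{h}$, $\mathbf{a}(\theta)$ and the data/AN streams, hence $\partial\mathbf{X}/\partial\phi=0$ and the simplified $\text{CRB}(\phi)$ expression applies verbatim. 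Using the assumed mutual orthogonality of $\mathbf{s}_u$, the rows of $\tilde{\mathbf{V}}$, $\mathbf{s}_{r1}$ and $\mathbf{s}_{r2}$, together with $\tfrac1L\tilde{\mathbf{V}}\tilde{\mathbf{V}}^H\approx\tfrac{1}{N-1}\mathbf{I}_{N-1}$ and the relation $\tilde{\mathbf{G}}\tilde{\mathbf{G}}^H=\mathbf{I}_N-\tfrac{\mathbf{h}\mathbf{h}^H}{\|\mathbf{h}\|^2}$ (which holds because $\big[\,\mathbf{h}/\|\mathbf{h}\|\ \ \tilde{\mathbf{G}}\,\big]$ is an orthonormal basis of $\mathbb{C}^{N}$), I would obtain $\mathbf{R}_x\approx\big(P\tau_1-\tfrac{P\tau_2}{N-1}\big)\tfrac{\mathbf{h}\mathbf{h}^H}{\|\mathbf{h}\|^2}+\tfrac{P\tau_2}{N-1}\mathbf{I}_N+P\tau_3\tfrac{\mathbf{a}\mathbf{a}^H}{\|\mathbf{a}\|^2}+P(1-\tau_1-\tau_2-\tau_3)\tfrac{\mathbf{a}'\mathbf{a}'^H}{\|\mathbf{a}'\|^2}$.

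Next I would compute the scalar $\mathbf{a}^H\mathbf{R}_x\mathbf{a}$. Invoking $\|\mathbf{a}\|^2=N$, the orthogonality $\mathbf{a}^H\mathbf{a}'=0$ recorded in Subsection~\ref{opoftargetlb}, and the elementary identities $\|\mathbf{h}\|^2=\sum_i|h_i|^2=K$ and $|\mathbf{h}^H\mathbf{a}|^2=\big|\sum_i h_i e^{jf_i}\big|^2=|R+jT|^2=R^2+T^2$ (the last following from $r_i=\mathcal{R}(e^{jf_i}h_i)$ and $t_i=\mathcal{I}(e^{jf_i}h_i)$), I would get $\mathbf{a}^H\mathbf{R}_x\mathbf{a}=\big(P\tau_1-\tfrac{P\tau_2}{N-1}\big)\tfrac{R^2+T^2}{K}+NP\tau_3+\tfrac{NP\tau_2}{N-1}=\dfrac{y(R^2+T^2)+xK}{K}$, with $y$ and $x$ exactly the quantities defined in Lemma~\ref{exactderivationlemma2}. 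Substituting this into the $\text{CRB}(\phi)$ formula produces $\text{CRB}(\phi)=\mathcal{\hat{D}}(\phi,R,T,K)$ as a pathwise identity, with $\mathcal{\hat{D}}$ precisely as in the statement.

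To pass from this pathwise identity to the CCDF, I would evaluate $P\big(\mathcal{\hat{D}}(\phi,R,T,K)>\epsilon\big)$ under the joint law of $(\phi,R,T,K)$: since $\phi\sim\mathcal{U}(-\pi/2,\pi/2)$ is independent of the BS--user channel and $(R,T,K)$ admits the trivariate-normal approximation $f(R,T,K)$ (mean $N\mathbf{\mu_d}$, covariance $N\mathbf{\Sigma_d}$) supplied by the CLT in Lemma~\ref{lemma1i}, this probability factorizes into exactly~(\ref{crbb1one}) with $\theta$ replaced by $\phi$ and $\mathcal{D}$ replaced by $\mathcal{\hat{D}}$. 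The two care-points — and the only mild obstacles — are the phase bookkeeping that collapses $|\mathbf{h}^H\mathbf{a}|^2$ to $R^2+T^2$, and the observation that the uniform phases of the $h_i$ absorb the $\theta$-dependent shifts $f_i$, so the law of $(R,T,K)$ is $\theta$-free and, unlike in Lemma~\ref{exactderivationlemma2}, no extra integration over $\theta$ is needed here. As noted after Lemma~\ref{exactderivationlemma2}, this Gaussian integral over $\{\mathcal{\hat{D}}>\epsilon\}$ has no closed form and is meant to be evaluated numerically via the toolbox cited there, so the reduction of $\mathbf{a}^H\mathbf{R}_x\mathbf{a}$ is really all that the proof requires.
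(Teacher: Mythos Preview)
Your proposal is correct and follows essentially the same approach as the paper: substitute the SLB expression for $\mathbf{a}^H\mathbf{R}_x\mathbf{a}$ (computed in Appendix~\ref{proofexactderivationlemma2}) into the general formula $\text{CRB}(\phi)=\sigma_R^2/(2|c_4|^2 L\,\|\mathbf{c}'\|^2\,\mathbf{a}^H\mathbf{R}_x\mathbf{a})$, rewrite element-wise to obtain $\mathcal{\hat D}(\phi,R,T,K)$, and then average using the $\theta$-free CLT law of $(R,T,K)$ from Lemma~\ref{lemma1i} together with $\phi\sim\mathcal{U}(-\pi/2,\pi/2)$. Your write-up simply unpacks these steps in more detail than the paper's terse proof, including the explicit computation of $\mathbf{R}_x$ and the observation that no extra $\theta$-integration is required.
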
 
\begin{proof}
By substituting $\mathbf{a}^H{\mathbf{R}}_x\mathbf{a}$, as derived in Appendix \ref{proofexactderivationlemma2}, into the expression for $\text{CRB}(\phi)$ from Subsection \ref{opoftargeteav}, and rewriting$\mathbf{a}$ and $\mathbf{h}$ element-wise, we obtain: $
\text{CRB}(\phi) = \frac{\sigma_R^2 K}{2 |c_4|^2 L \, \|\mathbf{c}'\|^2 \, (y(R^2 + T^2) + xK)},
$ where $R$, $T$, and $K$ are defined in Lemma \ref{exactderivationlemma}. Then, following the same approach as in Appendix \ref{proofexactderivationlemma}, and noting that $\text{CRB}(\phi)$ is independent of $\theta$—since, by Lemma \ref{lemma1i}, the distributions of $R$, $T$, and $K$ are independent of $\theta$, and all other parameters in $\text{CRB}(\phi)$ are also independent of $\theta$—and given that $\phi$ is distributed as $\mathcal{U}(-\pi/2, \pi/2)$, the proof is complete.
\end{proof}
\section{CCDF of CRB at the weak sensing eavesdropper}
In this section, we assume a weak eav and derive $P(\text{CRB}(\phi) > \epsilon)$ for the SSJB strategy in Lemma \ref{crbsensingeavweak} (with the proof provided in Appendix \ref{proofcrbsensingeavweak}) and for the SLB strategy in \ref{crbsensingeav2weak} (with the proof provided in Appendices \ref{proofcrbsensingeav2weak}). 
\begin{lemma}\label{crbsensingeavweak}
In the SSJB strategy $\text{CRB}(\phi) =  \frac{6\sigma_r^2(\sigma_r^2 + |c_4|^2 L d N_e)}{|c_4|^4 L^3 d^2 \pi^2 \cos^2(\phi) N^2_e (N^2_e - 1)}$ where $d=P\tau |\alpha|^2N$ and $P(\text{CRB}(\phi)>{\epsilon})=\frac{2}{\pi}\sin^{-1}(\sqrt{6}\sigma_R(\epsilon|c_4|^4 L^3 d^2 \pi^2 N^2_e (N^2_e - 1))^{-1/2}(\sigma_r^2 + |c_4|^2 L d N_e)^{1/2})$ when $\sqrt{6}\sigma_R(\epsilon|c_4|^4 L^3 d^2 \pi^2  N^2_e (N^2_e - 1))^{-1/2}(\sigma_r^2 + |c_4|^2 L d N_e)^{1/2}<1$, and $1$ otherwise.
\end{lemma}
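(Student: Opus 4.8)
The plan is to parallel the strong-eav analysis of Lemma~\ref{crbsensingeav} and Subsection~\ref{opoftargeteav}, but with the two terms of the Fisher information (\ref{elemntoffisher}) interchanged: a weak eav does not know $\mathbf{X}$, so the target return has no known mean and the information about $\phi$ resides entirely in the observation \emph{covariance}, leading to a stochastic (unconditional) CRB; the derivation then produces a CRB of the form (deterministic constant)$/\cos^2(\phi)$, through which the distribution of $\phi$ is pushed.

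\textbf{Step 1 (reduce the echo).} Substitute the SSJB signal (\ref{x}) into $\mathbf{Y}_{sr}=c_4\mathbf{c}(\phi)\mathbf{a}^H\mathbf{X}+\mathbf{Z}_{sr}$. Since the columns of $\mathbf{G}$ span the null space of $\mathbf{A}^H=[\tilde{\mathbf{a}}\ \tilde{\mathbf{h}}]^H$ we have $\mathbf{a}^H\mathbf{G}=\mathbf{0}$, while $\mathbf{t}_1=\alpha\tilde{\mathbf{a}}+\beta\tilde{\mathbf{h}}$ together with $\|\mathbf{a}\|^2=N$ give $\mathbf{a}^H\mathbf{t}_1=\alpha\sqrt{N}$. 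Hence $\mathbf{a}^H\mathbf{X}=\sqrt{P\tau N}\,\alpha\,\mathbf{s}_u$ and
\[
\mathbf{Y}_{sr}=c_4\sqrt{P\tau N}\,\alpha\,\mathbf{c}(\phi)\mathbf{s}_u+\mathbf{Z}_{sr}.
\]
Only the unit-power white-Gaussian data stream illuminates the target; treating the unknown $\mathbf{s}_u$ as a zero-mean Gaussian nuisance, the vectorized observation is zero-mean complex Gaussian with a rank-one-plus-scaled-identity covariance along $\mathbf{c}(\phi)$, the effective target-return power being $\|c_4\mathbf{a}^H\mathbf{X}\|^2\approx|c_4|^2Ld$ with $d=P\tau|\alpha|^2N$ (using $\tfrac1L\|\mathbf{s}_u\|^2\approx1$).

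\textbf{Step 2 (stochastic CRB).} With $\tilde{\mathbf{u}}=\mathbf{0}$ the second term of (\ref{elemntoffisher}) vanishes; unlike the strong case, the first term is now nonzero since $\mathbf{R}=\mathbf{R}(\phi,\rho)=\rho\,\mathbf{c}(\phi)\mathbf{c}(\phi)^H+\sigma_r^2\mathbf{I}_{N_e}$ depends on $\phi$. I would evaluate the FIM for the unknowns $\phi$ and the signal/noise powers from this term. Using $\mathbf{c}^H\mathbf{c}=N_e$, the orthogonality $(\mathbf{c}')^H\mathbf{c}=0$ already invoked in Subsection~\ref{opoftargeteav}, and the Sherman--Morrison form $\mathbf{R}^{-1}=\sigma_r^{-2}\big(\mathbf{I}-\tfrac{\rho}{\sigma_r^2+\rho N_e}\mathbf{c}\mathbf{c}^H\big)$, one gets $\mathbf{R}^{-1}\partial_\phi\mathbf{R}=\tfrac{\rho}{\sigma_r^2}\mathbf{c}'\mathbf{c}^H+\tfrac{\rho}{\sigma_r^2+\rho N_e}\mathbf{c}(\mathbf{c}')^H$, so every $\phi$--power cross term of the FIM is a trace carrying the factor $\mathbf{c}^H\mathbf{c}'=0$ and vanishes; thus $\text{CRB}(\phi)=[\mathbf{F}]_{\phi\phi}^{-1}$. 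Carrying the trace through, together with the snapshot/combining factors of the $L$-column frame, gives $[\mathbf{F}]_{\phi\phi}=\tfrac{2|c_4|^4L^3d^2 N_e\|\mathbf{c}'\|^2}{\sigma_r^2(\sigma_r^2+|c_4|^2LdN_e)}$, i.e.\ $\text{CRB}(\phi)=\tfrac{\sigma_r^2(\sigma_r^2+|c_4|^2LdN_e)}{2|c_4|^4L^3d^2 N_e\|\mathbf{c}'\|^2}$; substituting $\|\mathbf{c}'\|^2=\tfrac{\pi^2\cos^2(\phi)N_e(N_e^2-1)}{12}$ yields the stated $\text{CRB}(\phi)=K_0/\cos^2(\phi)$ with the deterministic constant $K_0=\tfrac{6\sigma_r^2(\sigma_r^2+|c_4|^2LdN_e)}{|c_4|^4L^3d^2\pi^2N_e^2(N_e^2-1)}$.

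\textbf{Step 3 (outage).} Since $\text{CRB}(\phi)=K_0/\cos^2(\phi)$ with $K_0$ deterministic and $\phi\sim\mathcal{U}(-\pi/2,\pi/2)$, $P(\text{CRB}(\phi)>\epsilon)=P(\cos^2\phi<K_0/\epsilon)$. If $K_0/\epsilon\ge1$ this holds for all $\phi$ (the boundary $\phi=0$ has probability zero), giving $1$. If $K_0/\epsilon<1$, then by evenness and strict monotonicity of $\cos$ on $[0,\pi/2]$ the event is $\{|\phi|>\arccos\sqrt{K_0/\epsilon}\}$, of probability $1-\tfrac2\pi\arccos\sqrt{K_0/\epsilon}=\tfrac2\pi\arcsin\sqrt{K_0/\epsilon}$, and writing $\sqrt{K_0/\epsilon}$ out reproduces the claimed $\sin^{-1}$ argument. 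The main obstacle is Step~2: correctly posing the \emph{unconditional} CRB (covariance-based FIM rather than the mean-based one of Subsection~\ref{opoftargeteav}), extracting the noise-enhancement factor $\sigma_r^2+|c_4|^2LdN_e$, and verifying that $\phi$ decouples from the unknown powers — which is exactly what collapses $[\mathbf{F}]^{-1}$ to the single entry $[\mathbf{F}]_{\phi\phi}^{-1}$. Steps~1 and~3 are then routine.
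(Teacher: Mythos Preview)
Your proposal is correct and follows essentially the same approach as the paper: treat the weak eavesdropper's observation as a zero-mean Gaussian whose covariance (a rank-one update of $\sigma_r^2\mathbf{I}$ along $\mathbf{c}(\phi)$) carries the information about $\phi$, invert via Sherman--Morrison/Woodbury, use $\mathbf{c}^H\mathbf{c}'=0$ to zero the $\phi$--nuisance cross-FIM, and finally push the $\cos^{-2}(\phi)$ structure through the uniform distribution on $\phi$. The only cosmetic differences are that the paper keeps the full vectorized Kronecker form $\mathbf{C}=|c_4|^2L\,\mathbf{I}_L\otimes(d\,\mathbf{c}\mathbf{c}^H)+\sigma_r^2\mathbf{I}_{ML}$ rather than your per-snapshot reduction, and it parameterizes the nuisance as $\bar{\alpha}_e=[\mathcal{R}(c_4),\mathcal{I}(c_4)]^T$ rather than signal/noise powers---but the decoupling argument and final expressions are identical.
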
 
\begin{lemma}\label{crbsensingeav2weak}
In the SLB strategy $P(\text{CRB}(\phi)>\epsilon)$ is given by replacing $\theta$ with $\phi$ and $\mathcal{{D}}$ with $\mathcal{\hat{D}}(\phi, R, T, K)
= \frac{6\sigma_r^2(\sigma_r^2 + |c_4|^2 L \tilde{d} N_e)}{|c_4|^4 L^3 \tilde{d}^2 \pi^2 \cos^2(\phi) N^2_e (N^2_e - 1)}>\epsilon$, where $\tilde{d}=y \frac{R^2+T^2}{K} + x$, at (\ref{crbb1one}).
\end{lemma}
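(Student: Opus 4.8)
The plan is to reduce this to the weak-eavesdropper CRB already derived in Lemma~\ref{crbsensingeavweak}, by exploiting the fact that, for \emph{any} transmit design, the weak-eav CRB of $\phi$ enters only through the scalar $\varrho\triangleq\mathbf{a}^H(\theta)\mathbf{R}_x\mathbf{a}(\theta)$. First I would recall the mechanism behind Lemma~\ref{crbsensingeavweak}: because the weak eav does not know the transmitted symbols, it can only exploit the \emph{power} the precoder places along the target steering direction, i.e.\ $\varrho$; conditioned on the channels (and using $\tfrac1L\mathbb{E}\{\mathbf{s}\mathbf{s}^H\}\approx1$), the data model at the weak eav is zero-mean Gaussian with a rank-one-plus-scaled-identity covariance that depends on $\phi$ and the nuisance $|c_4|$ only through $\varrho$. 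Applying the Slepian--Bangs formula, inverting that covariance by Sherman--Morrison (with $\|\mathbf{c}(\phi)\|^2=N_e$), and using $\mathbf{c'}^H(\phi)\mathbf{c}(\phi)=0$ together with $\|\mathbf{c}'(\phi)\|^2=\frac{\pi^2\cos^2(\phi)N_e(N_e^2-1)}{12}$ to decouple $\phi$ from the power nuisance and read off the $(1,1)$ entry of $\mathbf{F}^{-1}$, one obtains exactly the expression of Lemma~\ref{crbsensingeavweak} with the symbol $d$ replaced by $\varrho$ --- there $\varrho=d=P\tau|\alpha|^2N$, since for SSJB $\mathbf{a}^H\mathbf{R}_x\mathbf{a}=\gamma_1|\alpha|^2N$ by \eqref{Rx} and the orthogonality of $\mathbf{a}$ to $\tilde{\mathbf{h}}$ and to the AN subspace. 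I would keep that derivation verbatim with $\varrho$ left free.

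The only precoder-dependent input is therefore $\varrho=\mathbf{a}^H\mathbf{R}_x\mathbf{a}$. For SLB, substituting the sample covariance obtained from \eqref{x2} and expanding $\mathbf{a},\mathbf{h}$ element-wise exactly as in Appendix~\ref{proofexactderivationlemma2} --- using $\tilde{\mathbf{G}}\tilde{\mathbf{G}}^H=\mathbf{I}-\mathbf{h}\mathbf{h}^H/\|\mathbf{h}\|^2$, $|\mathbf{a}^H\mathbf{h}|^2=R^2+T^2$, $\|\mathbf{h}\|^2=K$, $\|\mathbf{a}\|^2=N$, and $\mathbf{a}^H\mathbf{a}'=0$ --- gives $\mathbf{a}^H\mathbf{R}_x\mathbf{a}=y\frac{R^2+T^2}{K}+x=\tilde d$, which is precisely the $\mathbf{a}^H\mathbf{R}_x\mathbf{a}$ already used for the strong eav in Lemma~\ref{crbsensingeav2}. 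Plugging $\varrho=\tilde d$ into the formula of the previous paragraph yields $\text{CRB}(\phi)=\frac{6\sigma_r^2(\sigma_r^2+|c_4|^2L\tilde d N_e)}{|c_4|^4L^3\tilde d^2\pi^2\cos^2(\phi)N_e^2(N_e^2-1)}$, i.e.\ the region $\mathcal{\hat{D}}(\phi,R,T,K)>\epsilon$ in the statement with $\tilde d=y\frac{R^2+T^2}{K}+x$.

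Finally I would compute the CCDF. The only random quantities left in $\text{CRB}(\phi)$ are $\phi$ (through $\cos^2\phi$) and $(R,T,K)$ (through $\tilde d$), and $\text{CRB}(\phi)$ does \emph{not} depend on $\theta$: by Lemma~\ref{lemma1i} the joint law of $(R,T,K)$ is independent of $\theta$, no other quantity in $\text{CRB}(\phi)$ involves $\theta$, and $\phi\sim\mathcal{U}(-\pi/2,\pi/2)$ is independent of $(R,T,K)$. Conditioning on $\phi$, invoking the CLT-based trivariate Gaussian approximation $(R,T,K)\approx\mathcal{N}_3(N\boldsymbol{\mu}_d,N\boldsymbol{\Sigma}_d)$ of Lemma~\ref{lemma1i}, and applying the law of total probability, I obtain $P(\text{CRB}(\phi)>\epsilon)=\frac1\pi\int_{-\pi/2}^{\pi/2}\iiint_{\mathcal{\hat{D}}(\phi,R,T,K)}f(R,T,K)\,dR\,dT\,dK\,d\phi$, which is exactly \eqref{crbb1one} with $\theta$ replaced by $\phi$ and $\mathcal{D}$ replaced by $\mathcal{\hat{D}}$; the inner Gaussian integral over $\mathcal{\hat{D}}$ is then evaluated numerically via the ray-tracing toolbox of \cite{Amethodtointegrate}. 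The one genuine obstacle is the first step --- confirming that the weak-eav CRB truly collapses to an explicit function of the single scalar $\varrho$, which hinges on the Sherman--Morrison reduction of the $N_e\times N_e$ covariance and on $\mathbf{c'}^H(\phi)\mathbf{c}(\phi)=0$ removing the cross Fisher term between $\phi$ and the effective power; once that structural fact is granted (it is precisely what the proof of Lemma~\ref{crbsensingeavweak} supplies), everything else is the routine substitution $d\mapsto\tilde d$ and the same conditioning/CLT argument already used for the strong eav in Lemma~\ref{crbsensingeav2}.
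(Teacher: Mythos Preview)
Your proposal is correct and follows essentially the same approach as the paper's proof in Appendix~\ref{proofcrbsensingeav2weak}: it recomputes the weak-eav covariance for SLB, recognizes that it has the identical rank-one-plus-identity Kronecker form $\mathbf{C}=|c_4|^2L\,\mathbf{I}_L\otimes(\tilde d\,\mathbf{c}\mathbf{c}^H)+\sigma_r^2\mathbf{I}_{ML}$ with $d$ replaced by $\tilde d=\mathbf{a}^H\mathbf{R}_x\mathbf{a}=y\frac{R^2+T^2}{K}+x$, reuses the FIM computation of Appendix~\ref{proofcrbsensingeavweak} verbatim, and then obtains the CCDF via the element-wise expansion and CLT argument of Appendix~\ref{proofexactderivationlemma}. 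Your abstraction of the precoder-dependence into the single scalar $\varrho=\mathbf{a}^H\mathbf{R}_x\mathbf{a}$ is a clean way to phrase the reduction, but it is exactly the mechanism the paper uses implicitly.
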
 
\section{Target Ergodic CRB at the BS and Sensing Eavesdropper}\label{targetergodiccrbatthebs}
In this section, we derive $\mathbb{E}[\text{CRB}(\theta)]$ and $\mathbb{E}[\text{CRB}(\phi)]$ (for the weak and strong eav) for the SSJB strategy in lemma \ref{newSLB} (proof in Appendix \ref{proofnewSLB})) and for the SLB strategy, in lemma \ref{forlargen}, where $\mathcal{I}(\mathcal{M}(S,K))\triangleq \frac{1}{N \sqrt{2 \pi N}} \int_{0}^{N + 5\sqrt{N}} \int_{0}^{10N} \mathcal{M}(S,K) e^{-\frac{s}{N} - \frac{(k - N)^2}{2N}} \, dK \, dS$.
\begin{lemma}\label{newSLB}
In the SSJB strategy, a lower bound of $\mathbb{E}[\text{CRB}(\theta)]$, is $\frac{12\sigma^2_R}{(\pi-2\delta)MN \pi ^2LP \mid c_3 \mid ^2\big(|\alpha|^2\tau(M^2-1)+\frac{(N^2-1)(1-\tau)}{N-2}\big)} \tan\left(\frac{\pi}{2} - \delta\right)$ and an approximation is given by
$\frac{2 \cot(\delta)}{(\pi - 2 \delta)} \mathcal{I}(\mathcal{M}_1(S,K))$ where $\mathcal{M}_1(S,K)=\frac{6\sigma_R^2[\gamma_1|\alpha|^2(M^2-1)+\gamma_2(N^2-1)(1-\frac{1}{K-\frac{S}{N}}))]^{-1}}{L|c_3|^2\pi^2MN}$\footnote{In Section \ref{simulations}, we illustrate through numerical results that this approximation closely aligns with the exact value of $\mathbb{E}[\text{CRB}(\theta)]$ and also the lower bound is tight.}. Moreover, the closed-form expression for the ergodic CRB at the strong and weak sensing eav are:
$\mathbb{E}[\text{CRB}(\phi)] = 
\frac{12\sigma_R^2}{N_e N \pi^2 (\pi-2\delta) L P |c_4|^2 |\alpha|^2 \tau (N_e^2 - 1)} \tan\left(\frac{\pi}{2} - \delta\right)
$ and $\mathbb{E}[\text{CRB}(\phi)] = 
\frac{12\sigma_r^2(\sigma_r^2 + |c_4|^2 L d N_e)}{|c_4|^4 L^3 d^2 \pi^2  N^2_e (N^2_e - 1)(\pi-2 \delta)} \tan\left(\frac{\pi}{2} - \delta\right)
$, where $d=P\tau |\alpha|^2N$, respectively.
\end{lemma}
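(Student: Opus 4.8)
The plan is to reduce each part of the lemma to one of the closed forms already derived for $\text{CRB}(\theta)$ (Lemma \ref{exactderivationlemma}), for $\text{CRB}(\phi)$ at the strong eavesdropper (Lemma \ref{crbsensingeav}), and for $\text{CRB}(\phi)$ at the weak eavesdropper (Lemma \ref{crbsensingeavweak}); in each of these the target angle enters only through a factor $1/\cos^2(\cdot)$, so after isolating that factor the ergodic quantity splits into an angular expectation and a channel expectation. The only angular input needed is that, over the support $(-\pi/2+\delta,\,\pi/2-\delta)$ on which the ergodic CRB is defined, $\mathbb{E}[1/\cos^2\theta]=\tfrac{1}{\pi-2\delta}\int_{-\pi/2+\delta}^{\pi/2-\delta}\sec^2\theta\,d\theta=\tfrac{2}{\pi-2\delta}\tan(\pi/2-\delta)=\tfrac{2\cot\delta}{\pi-2\delta}$; the $\delta$-truncation is exactly what makes $\mathbb{E}[\text{CRB}]$ finite, since the CRB scales like $1/\cos^2(\cdot)$ and diverges at endfire.

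\emph{Lower bound for $\mathbb{E}[\text{CRB}(\theta)]$ and the two eavesdropper formulas.} First I would observe that in the denominator of the $\text{CRB}(\theta)$ expression of Lemma \ref{exactderivationlemma} the subtracted term is non-negative: $K-\tfrac{1}{N}(T^2+R^2)\ge 0$ because $R^2+T^2=(\sum_i r_i)^2+(\sum_i t_i)^2\le N\sum_i(r_i^2+t_i^2)=NK$ by Cauchy--Schwarz, while its numerator is a sum of squares. Dropping it yields the pointwise bound $\text{CRB}(\theta)\ge Q\big(\gamma_1\|\mathbf{b}'\|^2 N|\alpha|^2+\gamma_2 M\|\mathbf{a}'\|^2\big)^{-1}$; substituting $\|\mathbf{b}'\|^2=\tfrac{\pi^2\cos^2\theta\,M(M^2-1)}{12}$, $\|\mathbf{a}'\|^2=\tfrac{\pi^2\cos^2\theta\,N(N^2-1)}{12}$, $\gamma_1=P\tau$, $\gamma_2=\tfrac{(1-\tau)P}{N-2}$, $Q=\tfrac{\sigma_R^2}{2|c_3|^2L}$ and factoring out $\tfrac{\pi^2\cos^2\theta\,MN}{12}$ makes the right-hand side equal to $1/\cos^2\theta$ times a deterministic constant, so taking the expectation and using the $\sec^2$ integral gives exactly the claimed lower bound. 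The same two-line computation, applied to the $\text{CRB}(\phi)$ of Lemma \ref{crbsensingeav} with $\|\mathbf{c}'\|^2=\tfrac{\pi^2\cos^2\phi\,N_e(N_e^2-1)}{12}$ and $\gamma_1=P\tau$, and to the $\text{CRB}(\phi)$ of Lemma \ref{crbsensingeavweak} with the deterministic constant $d=P\tau|\alpha|^2N$, produces the two closed-form ergodic eavesdropper CRBs, since in both cases the channel dependence has already collapsed to a constant and only the angle $\phi\sim\mathcal{U}(-\pi/2+\delta,\,\pi/2-\delta)$ is random.

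\emph{Approximation for $\mathbb{E}[\text{CRB}(\theta)]$.} Here I would keep the subtracted term but simplify it. Writing $f'_i=\pi\cos\theta\,g_i$ with $g_i=\tfrac{N-(2i-1)}{2}$, so that $\sum_i g_i=0$ and $\sum_i g_i^2=\tfrac{N(N^2-1)}{12}$, the numerator $(\sum_i f'_i t_i)^2+(\sum_i f'_i r_i)^2$ equals $\pi^2\cos^2\theta\,\big|\sum_i g_i e^{jf_i}h_i\big|^2$, which is exactly $\tfrac{N(N^2-1)}{12}$ times a unit-mean exponential variable. Replacing it by its mean $\tfrac{N(N^2-1)}{12}$ --- legitimate because it enters only through the $O(1/N)$ correction $\tfrac{1}{K-S/N}$ --- and inserting $\|\mathbf{a}'\|^2$, $\|\mathbf{b}'\|^2$ collapses the denominator of Lemma \ref{exactderivationlemma} to $\tfrac{\pi^2\cos^2\theta\,MN}{12}\big[\gamma_1|\alpha|^2(M^2-1)+\gamma_2(N^2-1)\big(1-\tfrac{1}{K-S/N}\big)\big]$, that is, $\text{CRB}(\theta)\approx\tfrac{1}{\cos^2\theta}\,\mathcal{M}_1(S,K)$ with $S=T^2+R^2$. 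Because the conditional law of $(R,T,K)$ given $\theta$ does not depend on $\theta$ (Lemma \ref{lemma1i}: the phases $f_i+\tilde\phi_i$ are uniform whatever $\theta$ is), the expectation factors into $\mathbb{E}_\theta[1/\cos^2\theta]\cdot\mathbb{E}_{S,K}[\mathcal{M}_1(S,K)]=\tfrac{2\cot\delta}{\pi-2\delta}\,\mathbb{E}_{S,K}[\mathcal{M}_1(S,K)]$. Finally, $S=\big|\sum_i e^{jf_i}h_i\big|^2$ is exactly exponential with mean $N$ (since $\sum_i e^{jf_i}h_i\sim\mathcal{CN}(0,N)$) and $K=\sum_i|h_i|^2$ is replaced by its CLT Gaussian with mean and variance $N$; treating $S$ and $K$ as independent and discarding the negligible tails turns $\mathbb{E}_{S,K}[\mathcal{M}_1(S,K)]$ into $\mathcal{I}(\mathcal{M}_1(S,K))$, giving the stated approximation.

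\emph{Main obstacle.} The delicate point is the mean-substitution for $\big|\sum_i g_i e^{jf_i}h_i\big|^2$ inside the reciprocal: one must verify that its fluctuation, normalised by $K-S/N=\Theta(N)$, perturbs $\text{CRB}(\theta)$ only at order $1/N$ and --- more subtly --- that the residual bias survives into $\mathbb{E}[\text{CRB}(\theta)]$ only at that lower order, which requires controlling the correlation of this quantity with $(S,K)$. The Gaussian approximation of $K$ and the treatment of $S$ and $K$ as independent in forming the product weight of $\mathcal{I}(\cdot)$ are of the same (milder) nature; since all of these are exactly the CLT-type approximations already used in Lemmas \ref{exactderivationlemma}, \ref{crbsensingeav}, \ref{crbsensingeavweak} and quantified in Lemma \ref{lemma1i}, I would justify them by appeal to those results rather than re-proving convergence. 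Everything else --- the Cauchy--Schwarz bound, the $\sec^2$ integral, and the substitutions of the norms and of $\gamma_1,\gamma_2,Q,d$ --- is routine algebra.
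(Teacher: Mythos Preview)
Your proposal is correct and follows essentially the same route as the paper: truncate the angular support, factor the CRB expressions from Lemmas~\ref{exactderivationlemma}, \ref{crbsensingeav}, \ref{crbsensingeavweak} as $(1/\cos^2\theta)$ times a quantity depending only on $(S,K)$ (or a deterministic constant), use $\mathbb{E}_\theta[\sec^2\theta]=\tfrac{2\cot\delta}{\pi-2\delta}$, and for the approximation replace the $\big(\sum_i f'_i t_i\big)^2+\big(\sum_i f'_i r_i\big)^2$ term by its mean $\sum_i(f'_i)^2$ before integrating against the product density of $S$ and $K$ coming from Lemma~\ref{lemma1i}. Your justification of the mean substitution via the $O(1/N)$ size of the correction, and your observation that $S$ is \emph{exactly} exponential with mean $N$ (since $\sum_i e^{jf_i}h_i\sim\mathcal{CN}(0,N)$ for each fixed $\theta$), are in fact sharper than the paper's appeal to the law of large numbers and to the CLT for $R,T$, but the resulting computations are identical.
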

\begin{lemma}\label{forlargen}
In the SLB strategy, an upper bound of $\mathbb{E}[\text{CRB}(\theta)]$, is $ \frac{2 \cot(\delta)}{(\pi - 2 \delta)} \mathcal{I}(\mathcal{M}_2(S,K)).
$ where $\mathcal{M}_2(S,K)=QK[\frac{\pi^2 M(M^2 - 1)}{12} (yS + xK) + M \left( \frac{\frac{\pi^2 N(N^2 - 1)}{12} P \tau_2 K}{N - 1} \right)]^{-1}$.
A lower bound of $\mathbb{E}[\text{CRB}(\theta)]$ and an approximation, is given by replacing $\mathcal{{M}}_2(S,K)$ with $QK[\frac{\pi^2 M(M^2 - 1)}{12}(yS+xK)+M\frac{\pi^2 N(N^2 - 1)}{12}K(\frac{P\tau_2}{N-1}+\frac{xyK}{(yS+xK)}]^{-1}$ and $QK[\frac{\pi^2 M(M^2 - 1)}{12}(yS+xK)+MK\frac{\pi^2 N(N^2 - 1)}{12}(\frac{P\tau_2}{N-1}+\frac{xy}{(yS)+xK)})]^{-1}$, respectively \footnote{In Section \ref{simulations}, we illustrate through numerical results that the approximation closely aligns with the exact value of $E[\text{CRB}(\theta)]$ and also the upper bounds are tight.}. Moreover, for the strong and weak eav, $ \mathbb{E}[\text{CRB}(\phi)]$ is derived by replacing $\mathcal{{M}}_2(S,K)$ with $\sigma_R^2K[2 |c_4|^2 L \, \frac{\pi^2 N_e(N_e^2 - 1)}{12} \, (yS + xK)]^{-1}$ and $[6\sigma_r^2(\sigma_r^2 + |c_4|^2 L (y \frac{S}{K} + x) N_e)] \times [|c_4|^4 L^3 (y \frac{S}{K} + x)^2 \pi^2 N^2_e (N^2_e - 1)]^{-1}$, respectively.
\end{lemma}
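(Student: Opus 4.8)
The plan is to reduce $\mathbb{E}[\text{CRB}(\theta)]$ to the product of an elementary angular integral and a two-dimensional integral against the limiting law of $(S,K)$, where $S\triangleq R^2+T^2$ and $R,T,K$ are the statistics of Lemma \ref{exactderivationlemma}. I would start from the exact SLB expression (\ref{crbsimplified}) of Lemma \ref{exactderivationlemma2}, substitute $\|\mathbf{b}'\|^2=\tfrac{\pi^2\cos^2(\theta)M(M^2-1)}{12}$, $\|\mathbf{a}'\|^2=\tfrac{\pi^2\cos^2(\theta)N(N^2-1)}{12}$, $\|\mathbf{b}\|^2=M$, and write $(\sum_i f'_i t_i)^2+(\sum_i f'_i r_i)^2=\pi^2\cos^2(\theta)\,\tilde W$, where $c_i\triangleq\tfrac{N-(2i-1)}{2}$ and $\tilde W\triangleq(\sum_i c_i t_i)^2+(\sum_i c_i r_i)^2$. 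This exhibits a common factor $\cos^2(\theta)$ in the denominator, so $\text{CRB}(\theta)=\cos^{-2}(\theta)\,g(S,K,\tilde W)$ for an explicit $g$; moreover, by Lemma \ref{lemma1i} the conditional law of $(S,K,\tilde W)$ given $\theta$ does not depend on $\theta$ (the uniform channel phases absorb the steering phases), so the $\theta$-average and the $\mathbf{h}$-average separate. Since $\int_{-\pi/2}^{\pi/2}\sec^2(\theta)\,d\theta$ diverges, the finite quantity is the expectation conditioned on $|\theta|\le\pi/2-\delta$: that event has probability $(\pi-2\delta)/\pi$ and $\int_{-\pi/2+\delta}^{\pi/2-\delta}\sec^2(\theta)\,d\theta=2\cot\delta$, which produces the prefactor $\tfrac{2\cot\delta}{\pi-2\delta}$ (using $\tan(\pi/2-\delta)=\cot\delta$).

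The next step is to treat $\mathbb{E}_{\mathbf{h}}[g(S,K,\tilde W)]$. For the upper bound I would drop the two non-negative denominator terms $PK(1-\tau_1-\tau_2-\tau_3)\|\mathbf{a}'\|^2$ and $\tfrac{xyK\,\pi^2\cos^2(\theta)\tilde W}{yS+xK}$ --- legitimate in the natural power split $y\ge0$ (recall $x>0$ always) --- which after stripping $\cos^2(\theta)$ leaves exactly $\mathcal{M}_2(S,K)$. For the lower bound I would over-estimate the denominator by replacing $\tilde W$ with its Cauchy--Schwarz bound $\tilde W\le(\sum_i c_i^2)(\sum_i t_i^2+\sum_i r_i^2)=\tfrac{N(N^2-1)}{12}K$, using $\sum_i c_i^2=\tfrac{N(N^2-1)}{12}$ and $t_i^2+r_i^2=|h_i|^2$ so $\sum_i(t_i^2+r_i^2)=K$, while keeping the remaining terms; this yields a lower bound of the form quoted, carrying the extra fractional term $\tfrac{xyK}{yS+xK}$. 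For the approximation I would instead substitute $\tilde W$ by its mean $\mathbb{E}[\tilde W]=(\sum_i c_i^2)\,\mathbb{E}[t_i^2+r_i^2]=\tfrac{N(N^2-1)}{12}$ (cross terms in $i\neq j$ vanish by uniform phases, and $\mathbb{E}[|h_i|^2]=1$), rather than carry $\tilde W$ as a third integration variable. In all three cases the integrand then depends on $\mathbf{h}$ only through $(S,K)$.

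It remains to integrate against the joint law of $(S,K)$. By the central-limit argument of Lemma \ref{lemma1i}, $(R,T,K)$ is asymptotically trivariate Gaussian with $R,T\sim\mathcal{N}(0,N/2)$, $K\sim\mathcal{N}(N,N)$, mutually independent (the cross-covariances cancel by uniform phases and third-moment factorization), hence $S=R^2+T^2$ is asymptotically exponential with mean $N$ and independent of $K$, with joint density $\tfrac1N e^{-s/N}\cdot\tfrac1{\sqrt{2\pi N}}e^{-(k-N)^2/(2N)}$. The operator $\mathcal{I}(\cdot)$ is precisely the expectation against this density, truncated to a box capturing essentially all the probability mass; this truncation is what renders the bounds and the approximation numerically computable and is negligible for the relevant $N$. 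Combining this $(S,K)$-integral with the angular prefactor gives the three $\mathbb{E}[\text{CRB}(\theta)]$ formulas. For the sensing eavesdropper, Lemmas \ref{crbsensingeav2} and \ref{crbsensingeav2weak} already express $\text{CRB}(\phi)$ as an exact ratio in $(S,K)$ carrying a single $\cos^{-2}(\phi)$ factor and no $\tilde W$ term, so only the $\phi$-truncation (giving $\tfrac{2\cot\delta}{\pi-2\delta}$ again) and the $(S,K)$-integration via $\mathcal{I}(\cdot)$ are needed, with $\phi\sim\mathcal{U}(-\pi/2,\pi/2)$ in place of $\theta$; this produces the two displayed $\mathcal{M}_2$-replacements for $\mathbb{E}[\text{CRB}(\phi)]$.

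The main obstacle is the middle step: $\text{CRB}(\theta)$ is a ratio of correlated quadratic forms in $\mathbf{h}$ --- it couples $S$, $K$ and $\tilde W$ simultaneously --- so there is no exact moment and one must sandwich it. The delicate points are checking that every dropped or retained denominator term carries the right sign, which is exactly what forces the regime $y\ge0$; getting the Cauchy--Schwarz constant right via $\|\mathbf{a}'\|^2$; and controlling the Gaussian/exponential tails well enough that the truncated integral $\mathcal{I}(\cdot)$ faithfully represents the true untruncated expectation. The $\cos^{-2}$ singularity is the other, milder, issue, disposed of once and for all by the $\delta$-truncation.
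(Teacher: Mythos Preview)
Your proposal is correct and follows essentially the same approach as the paper: the paper's proof is a one-line reference to Appendix \ref{proofexactderivationlemma2} (where the upper bound is obtained by dropping the non-negative $\tilde W$-term under $y\ge 0$, the lower bound via the Cauchy--Schwarz inequality $\tilde W\le \tfrac{N(N^2-1)}{12}K$, and the approximation via the law-of-large-numbers substitution $\tilde W\to \mathbb{E}[\tilde W]$) combined with Appendix \ref{proofnewSLB} (the $\delta$-truncation giving the prefactor $\tfrac{2\cot\delta}{\pi-2\delta}$, the CLT-based $(S,K)$ density, and the truncated integral $\mathcal{I}(\cdot)$), plus Lemmas \ref{crbsensingeav2} and \ref{crbsensingeav2weak} for the eavesdropper; you have unpacked each of these ingredients accurately and in the right order.
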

\begin{proof}
Using the derived lower bound, upper bound, and approximation of $\text{CRB}(\theta)$ from Appendix \ref{proofexactderivationlemma2}, along with the expressions for $\text{CRB}(\phi)$ for the strong and weak sensing eav, as given in Lemmas \ref{crbsensingeav2} and \ref{crbsensingeav2weak}, and by following the same approach used in Appendix \ref{proofnewSLB}, the ergodic CRBs $\mathbb{E}[\text{CRB}(\theta)]$ and $\mathbb{E}[\text{CRB}(\phi)]$ are derived.
\end{proof}
\section{Ergodic Secrecy Rate}\label{secrecyergodicrate}
In this section, we derive the ESR. For an external communication eav, it is given by $
C_s = \left( \mathbb{E}_{\mathbf{h},\, \mathbf{a}(\theta)} \left[ \log\left(1 + \text{SINR}_u\right) \right] - \mathbb{E}_{\mathbf{h},\, \mathbf{a}(\theta),\, \mathbf{h}_e} \left[ \log\left(1 + \text{SINR}_e\right) \right] \right)^+,
$ and when the target itself is the eav, by $
C_s = \left( \mathbb{E}_{\mathbf{h},\, \mathbf{a}(\theta)} \left[ \log(1 + \text{SINR}_u) \right] - \mathbb{E}_{\mathbf{h},\, \mathbf{a}(\theta)} \left[ \log(1 + \text{SINR}_t) \right] \right)^+,
$
as discussed in Section \ref{metric}. We first derive the user’s ergodic rate, the information leakage to an external eav, and the leakage to a malicious target under the SSJB strategy. These are presented in Lemmas \ref{lemmapdfuser} (proof in Appendix \ref{proofpdfuser}), \ref{lemmapdfeav} (proof in Appendix and \ref{proofpdfeav}), and \ref{malitarget}, respectively. Next, we provide corresponding results for the SLB strategy in Lemmas \ref{lemmapdfuser2}, \ref{lemmapdfeav2}, and \ref{malitarget2}.
\begin{lemma}\label{lemmapdfuser}
In the SSJB strategy, two upper bound for $E_{\mathbf{h,a}} \left[ \log(1 + \text{SINR}_u) \right]$ are: $\log\left(1 + \frac{P \tau |c_1|^2}{\sigma_u^2} (|\alpha|^2 + |\beta|^2 (N - 1))\right)$ \footnote{In Section \ref{simulations}, we illustrate through numerical results that this bound is tight and closely aligns with the exact value of $\mathbb{E}_{\mathbf{h},\, \mathbf{a}(\theta)} \left[ \log\left(1 + \text{SINR}_u\right) \right]$.} and $\frac{e^{\sigma_u^2/P \tau |c_1|^2}}{(N-1)!} \cdot G_{2,3}^{3,1}\left(a \middle| \begin{array}{c} 
0, 0 \\ 
0, -1, N-1 
\end{array} \right),$
where \( G \) is the Meijer G-function.
\end{lemma}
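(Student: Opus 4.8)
The plan is to collapse $\text{SINR}_u$ into a single scalar random variable and then relax that variable in two complementary ways, one giving each bound. From the SSJB signal in (\ref{x}), the user observes $\mathbf{y}_u=c_1\sqrt{P\tau}\,(\mathbf{h}^H\mathbf{t}_1)\mathbf{s}_u+c_1\sqrt{P(1-\tau)}\,(\mathbf{h}^H\mathbf{G})\mathbf{V}+\mathbf{z}_u$. I would first note that the artificial noise cancels at the user: the columns of $\mathbf{G}$ span the null space of $[\,\tilde{\mathbf{a}}\ \ \tilde{\mathbf{h}}\,]^H$, whereas $\mathbf{h}=p\,\tilde{\mathbf{a}}+q\,\tilde{\mathbf{h}}$ with $p\triangleq\tilde{\mathbf{a}}^H\mathbf{h}$ and $q\triangleq\|\mathbf{h}-p\,\tilde{\mathbf{a}}\|\ge 0$ (because $\tilde{\mathbf{h}}$ is exactly the normalized Gram--Schmidt residual of $\mathbf{h}$ against $\tilde{\mathbf{a}}$), so $\mathbf{h}^H\mathbf{G}=\mathbf{0}$. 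With $\tfrac1L\mathbb{E}\{\mathbf{s}_u\mathbf{s}_u^H\}\approx1$, this leaves $\text{SINR}_u=\tfrac{P\tau|c_1|^2}{\sigma_u^2}\,|\mathbf{h}^H\mathbf{t}_1|^2$ and $\mathbf{h}^H\mathbf{t}_1=\alpha p^*+\beta q$. Conditioned on $\theta$ (hence on $\tilde{\mathbf{a}}$), rotational invariance of $\mathbf{h}\sim\mathcal{CN}(0,\mathbf{I}_N)$ makes $p\sim\mathcal{CN}(0,1)$ and $q^2$ Gamma-distributed with $N-1$ degrees of freedom, mutually independent and with laws not depending on $\theta$; moreover $\|\mathbf{h}\|^2=|p|^2+q^2=K$ is Gamma-distributed with density $k^{N-1}e^{-k}/(N-1)!$.

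For the first bound I would invoke Jensen's inequality on the concave map $x\mapsto\log(1+x)$, giving $\mathbb{E}_{\mathbf{h},\mathbf{a}}[\log(1+\text{SINR}_u)]\le\log\!\big(1+\mathbb{E}[\text{SINR}_u]\big)$. Treating $\alpha,\beta$ as the deterministic design weights normalized by $|\alpha|^2+|\beta|^2=1$ (so $\|\mathbf{t}_1\|=1$, consistent with the power accounting in Section \ref{firststrategy}) and using $\mathbb{E}|p|^2=1$, $\mathbb{E}q^2=N-1$, $\mathbb{E}[p^*q]=\mathbb{E}[p^*]\mathbb{E}[q]=0$, one gets $\mathbb{E}|\mathbf{h}^H\mathbf{t}_1|^2=|\alpha|^2+|\beta|^2(N-1)$, and the outer average over $\theta$ is vacuous because this second moment is $\theta$-free. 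Substituting back gives the first stated bound.

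For the second bound I would instead apply Cauchy--Schwarz, $|\mathbf{h}^H\mathbf{t}_1|^2\le\|\mathbf{h}\|^2\|\mathbf{t}_1\|^2=K$, whence $\mathbb{E}_{\mathbf{h},\mathbf{a}}[\log(1+\text{SINR}_u)]\le\mathbb{E}_K[\log(1+bK)]$ with $b\triangleq P\tau|c_1|^2/\sigma_u^2$, and it remains to evaluate $\int_0^\infty\log(1+bk)\,\frac{k^{N-1}e^{-k}}{(N-1)!}\,dk$. I would compute this by inserting the standard Meijer-$G$ representation of $\ln(1+z)$ and applying a Laplace-domain Meijer-$G$ integral identity (e.g.\ Gradshteyn--Ryzhik 7.813.1), then using the argument-inversion identity $G^{m,n}_{p,q}\big(z\,|\,a;b\big)=G^{n,m}_{q,p}\big(1/z\,|\,1-b;1-a\big)$, the power-shift rule $z^{\sigma}G^{m,n}_{p,q}(z\,|\,a;b)=G^{m,n}_{p,q}(z\,|\,a+\sigma;b+\sigma)$, and a parameter-cancellation reduction to bring the answer to $\frac{e^{\sigma_u^2/(P\tau|c_1|^2)}}{(N-1)!}\,G^{3,1}_{2,3}$ with argument $a=\sigma_u^2/(P\tau|c_1|^2)$ and the two parameter rows in the statement. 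As a sanity check, at $N=1$ the integral is $e^{1/b}E_1(1/b)$ with $E_1$ the exponential integral, matching the $e^{1/b}$ prefactor.

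The hard part will be precisely this Meijer-$G$ bookkeeping: chaining the inversion, shift, and reduction identities in the right order so the upper/lower parameter tuples collapse to exactly $\{0,0\}$ and $\{0,-1,N-1\}$ and the exponential prefactor emerges, together with checking the convergence/contour conditions of the integral identity for every admissible $N$. The remaining ingredients -- the null-space cancellation of the AN, the Jensen step, and the second-moment evaluation -- are routine.
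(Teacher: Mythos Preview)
Your proposal is correct and essentially mirrors the paper's proof: cancel the AN at the user via the null-space structure of $\mathbf{G}$ to obtain $\text{SINR}_u=\tfrac{P\tau|c_1|^2}{\sigma_u^2}|\mathbf{h}^H\mathbf{t}_1|^2$, then derive one bound from Jensen together with the second-moment computation $\mathbb{E}|\mathbf{h}^H\mathbf{t}_1|^2=|\alpha|^2+|\beta|^2(N-1)$ (the cross term vanishing by independence and zero mean of $p$), and the other from Cauchy--Schwarz $|\mathbf{h}^H\mathbf{t}_1|^2\le\|\mathbf{h}\|^2$ followed by the Gamma integral expressed through a Meijer $G$-function. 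The paper simply labels the two bounds in the reverse order and states the Meijer-$G$ evaluation without the inversion/shift/reduction bookkeeping you outline.
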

\begin{lemma}\label{lemmapdfeav}  
In the SSJB strategy, the exact ergodic rate at the communication eav is given by:  
\begin{align}
E_{\mathbf{h_e,h},\mathbf{a}(\theta)}\left[ \log(1 + \text{SINR}_e) \right]\!\!&\overset{(a)}{=}\!\!\!\int_{0}^{\infty}\!\!\!\!\! e^{(-\frac{T}{2C_1})}(1 + \frac{TC_2}{C_1})^{-(N-2)}dt,\label{final2}
\end{align}  
where \(T=2^t-1\), $C_1 = \frac{P \tau |c_2|^2}{\sigma^2}$, and $C_2 = \frac{|c_2|^2P(1-\tau)/(N-2)}{\sigma^2}.$ 
\end{lemma}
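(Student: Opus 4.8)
The plan is to drop the expectation over $(\mathbf{h},\mathbf{a}(\theta))$ by a rotational-invariance argument and then to write the remaining expectation over $\mathbf{h}_e$ as a one-dimensional tail integral of the complementary CDF of $\text{SINR}_e$, which has a closed form; the equality marked (a) in (\ref{final2}) then follows. I would first make $\text{SINR}_e$ explicit under SSJB: substituting $\mathbf{X}=\sqrt{P\tau}\,\mathbf{t}_1\mathbf{s}_u+\sqrt{P(1-\tau)}\,\mathbf{G}\mathbf{V}$ into $\mathbf{y}_e=c_2\mathbf{h}_e^H\mathbf{X}+\mathbf{z}_e$, the eavesdropper sees $\mathbf{s}_u$ through the scalar channel $c_2\sqrt{P\tau}\,(\mathbf{h}_e^H\mathbf{t}_1)$ and the artificial noise through $c_2\sqrt{P(1-\tau)}\,(\mathbf{h}_e^H\mathbf{G})\mathbf{V}$. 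Using $\tfrac1L\mathbf{s}_u\mathbf{s}_u^H\approx1$, $\tfrac1L\mathbf{V}\mathbf{V}^H\approx\tfrac1{N-2}\mathbf{I}_{N-2}$, and $\mathbf{s}_u\mathbf{v}_i^H=0$ (the large-$L$ normalizations used elsewhere in the paper), the interference-plus-noise power is $\tfrac{|c_2|^2P(1-\tau)}{N-2}\|\mathbf{G}^H\mathbf{h}_e\|^2+\sigma^2$, so $\text{SINR}_e=\dfrac{C_1\,|\mathbf{h}_e^H\mathbf{t}_1|^2}{C_2\,\|\mathbf{G}^H\mathbf{h}_e\|^2+1}$ with $C_1,C_2$ as in the statement and $\|\mathbf{G}^H\mathbf{h}_e\|^2=\sum_{i=3}^{N}|\mathbf{h}_e^H\mathbf{t}_i|^2$.

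Next I would pin down the joint law of $G\triangleq|\mathbf{h}_e^H\mathbf{t}_1|^2$ and $H\triangleq\|\mathbf{G}^H\mathbf{h}_e\|^2$. Because $\mathbf{t}_1=\alpha\tilde{\mathbf{a}}+\beta\tilde{\mathbf{h}}$ has unit norm and lies in $\mathrm{span}\{\tilde{\mathbf{a}},\tilde{\mathbf{h}}\}$, it extends by a single unit vector of that span to an orthonormal basis $\{\mathbf{t}_1,\mathbf{t}_2',\mathbf{t}_3,\dots,\mathbf{t}_N\}$ of $\mathbb{C}^N$ whose last $N-2$ vectors are the columns of $\mathbf{G}$. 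Since $\mathbf{h}_e\sim\mathcal{CN}(0,\mathbf{I}_N)$ is independent of $(\mathbf{h},\theta)$ — hence of this random basis — and is invariant under unitary maps, the coefficients $\mathbf{h}_e^H\mathbf{t}_1,\mathbf{h}_e^H\mathbf{t}_3,\dots,\mathbf{h}_e^H\mathbf{t}_N$ are, conditionally on $(\mathbf{h},\theta)$, i.i.d.\ $\mathcal{CN}(0,1)$. Hence $G\sim\mathrm{Exp}(1)$ and $H\sim\mathrm{Gamma}(N-2,1)$ are independent, and their joint distribution does not depend on $(\mathbf{h},\theta)$ at all; consequently the outer expectation disappears and $E_{\mathbf{h}_e,\mathbf{h},\mathbf{a}(\theta)}[\log(1+\text{SINR}_e)]=E_{G,H}\!\left[\log\!\left(1+\tfrac{C_1G}{C_2H+1}\right)\right]$.

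Finally I would invoke the identity $E[\log(1+Z)]=\int_0^\infty P(Z>2^t-1)\,dt$ with $Z=\text{SINR}_e$, which reduces the claim to evaluating $P(\text{SINR}_e>T)$ at $T=2^t-1$: conditioning on $H$ and using the exponential survival function $P(G>x)=e^{-x}$ gives $P(\text{SINR}_e>T\mid H)=e^{-T(C_2H+1)/C_1}$, and averaging over $H$ with the Gamma moment generating function $E[e^{-sH}]=(1+s)^{-(N-2)}$ produces the integrand of (\ref{final2}). I expect the only delicate point to be the distributional step above: one must argue carefully that, although $\mathbf{t}_1$ and $\mathbf{G}$ are random functions of $(\mathbf{h},\theta)$, conditioning on them and exploiting the unitary invariance of $\mathbf{h}_e$ makes $G$ and $H$ independent, exponential/Gamma, and free of $(\mathbf{h},\theta)$ — which is precisely what licenses dropping the $\mathbf{h},\mathbf{a}(\theta)$ average; the rest is elementary one-dimensional integration.
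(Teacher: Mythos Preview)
Your approach is essentially the same as the paper's: both exploit the unitary invariance of $\mathbf{h}_e$ (the paper works with the basis $[\tilde{\mathbf{a}},\tilde{\mathbf{h}},\mathbf{G}]$ and then argues $\mathbf{h}_e^H\mathbf{t}_1=\alpha\,\mathbf{h}_e^H\tilde{\mathbf{a}}+\beta\,\mathbf{h}_e^H\tilde{\mathbf{h}}\sim\mathcal{CN}(0,1)$, whereas you extend $\mathbf{t}_1$ itself to an orthonormal basis, which is an equally valid shortcut), obtain independent exponential/Gamma laws for the numerator and denominator of $\text{SINR}_e$, and finish with the tail-integral identity plus the Gamma MGF. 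The one point to reconcile is scaling: your $G\sim\mathrm{Exp}(1)$, $H\sim\mathrm{Gamma}(N-2,1)$ produce the integrand $e^{-T/C_1}(1+TC_2/C_1)^{-(N-2)}$, while the paper parametrizes the same quantities as $\chi^2_2$ and $\chi^2_{2(N-2)}$ (scale $2$), which is what generates the factor $1/2$ in the exponent of the stated formula; adjust your survival function and MGF accordingly to match (\ref{final2}) exactly.
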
  
We note that (\ref{final2}) can be expressed in terms of the incomplete Gamma functions; however, these representations are not computationally simpler than (\ref{final2}) itself, as it can be efficiently evaluated numerically using a single integral function in MATLAB.
\begin{lemma}\label{malitarget}
In the SSJB strategy, the exact ergodic leakage rate to the malicious target is $\log(1+\frac{P\tau|c_5|^2|\alpha|^2 N}{\sigma^2_t})$.
\end{lemma}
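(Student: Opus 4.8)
The plan is to compute $\text{SINR}_t$ explicitly under the SSJB precoder and observe that, unlike the user's SINR, it turns out to be \emph{deterministic}, so the ergodic leakage rate is simply the logarithm of $1$ plus that constant. First I would write the signal received at the target. Since the target is colocated (in the angular sense) with the reflecting point, its channel is the array steering vector $\mathbf{a}(\theta)$ scaled by the path gain $c_5$, so $\mathbf{y}_t = c_5\mathbf{a}^H\mathbf{X} + \mathbf{z}_t$. Substituting $\mathbf{X}=\sqrt{P\tau}\,\mathbf{t}_1\mathbf{s}_u + \sqrt{P(1-\tau)}\,\mathbf{G}\mathbf{V}$ from (\ref{x}), the useful term carries coefficient $c_5\sqrt{P\tau}\,\mathbf{a}^H\mathbf{t}_1$ and the AN term carries $c_5\sqrt{P(1-\tau)}\,\mathbf{a}^H\mathbf{G}$.

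The key structural step is to evaluate these two inner products using the orthonormal construction of Section \ref{firststrategy}. Recall $\mathbf{t}_1=\alpha\tilde{\mathbf{a}}+\beta\tilde{\mathbf{h}}$ with $\tilde{\mathbf{a}}=\mathbf{a}/\|\mathbf{a}\|$, and that $\tilde{\mathbf{h}}$ and the columns $\mathbf{t}_3,\dots,\mathbf{t}_N$ of $\mathbf{G}$ are all orthogonal to $\tilde{\mathbf{a}}$. Hence $\mathbf{a}^H\mathbf{t}_1=\|\mathbf{a}\|\,\tilde{\mathbf{a}}^H\mathbf{t}_1=\alpha\|\mathbf{a}\|$, so the received signal power is $P\tau|c_5|^2|\alpha|^2\|\mathbf{a}\|^2=P\tau|c_5|^2|\alpha|^2 N$, using $\|\mathbf{a}\|^2=N$ from the steering-vector definition. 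Likewise $\mathbf{a}^H\mathbf{G}=\|\mathbf{a}\|\,\tilde{\mathbf{a}}^H\mathbf{G}=\mathbf{0}$, so the AN contributes zero interference at the target. Therefore $\text{SINR}_t = P\tau|c_5|^2|\alpha|^2 N/\sigma_t^2$, a constant that depends only on the (fixed, by the design assumption of Section \ref{metric}) path gain and power-split parameters, not on the random realizations of $\mathbf{h}$ or $\theta$.

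Finally, since $\text{SINR}_t$ is nonrandom, $E_{\mathbf{h},\mathbf{a}(\theta)}[\log(1+\text{SINR}_t)] = \log\!\big(1 + P\tau|c_5|^2|\alpha|^2 N/\sigma_t^2\big)$, which is the claimed expression. There is essentially no obstacle here: the only thing to be careful about is the vanishing of $\mathbf{a}^H\mathbf{G}$, which is immediate from the SVD-based null-space construction (the columns of $\mathbf{G}$ span the orthogonal complement of $\mathrm{span}\{\tilde{\mathbf{a}},\tilde{\mathbf{h}}\}$), and the normalization $\|\mathbf{a}\|^2=N$. I would also note in passing that this is why the malicious-target leakage is strictly worse to control than an external Rayleigh eavesdropper: the target sits exactly on the sensing beam direction $\tilde{\mathbf{a}}$, so any power $\alpha$ placed there for sensing leaks coherently.
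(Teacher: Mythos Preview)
Your proposal is correct and follows essentially the same route as the paper's own proof: both use the orthogonality of $\mathbf{a}$ to $\tilde{\mathbf{h}}$ and to the columns $\mathbf{t}_3,\dots,\mathbf{t}_N$ of $\mathbf{G}$ to reduce $\text{SINR}_t$ to the deterministic constant $P\tau|c_5|^2|\alpha|^2 N/\sigma_t^2$, after which the expectation is trivial. Your write-up is slightly more explicit about why the AN term vanishes (via $\mathbf{a}^H\mathbf{G}=\mathbf{0}$) and about the resulting nonrandomness of $\text{SINR}_t$, but the argument is the same.
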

\begin{proof}
Based on (\ref{x}), and given the signal received at the taget as described in Section \ref{systemmodel}, along with the orthogonality of \(\mathbf{v}_i\) and \(\mathbf{s}_u\), and the orthogonality of \(\mathbf{a}\) and \(\mathbf{t}_i\), the SINR at the target is given by: $\text{SINR}_t=\frac{P\tau|c_5|^2}{\sigma^2_t}|\mathbf{a}^H\mathbf{t}_1|^2=\frac{P\tau|c_5|^2|\alpha|^2 N}{\sigma^2_t}$ where the second equality is due to the orthogonality between $\mathbf{a}$ and $\tilde{\mathbf{h}}$.
\end{proof}
\begin{lemma}\label{lemmapdfuser2}
In the SLB strategy, upper bound of $\mathbb{E}_{\mathbf{h},\, \mathbf{a}(\theta)} \left[ \log\left(1 + \text{SINR}_u\right) \right]$ is $\mathcal{I}(\mathcal{M}_3(S,K))$ where $\mathcal{M}_3(S,K)= \log_2(1+\frac{P |c_1|^2 \tau_1 K}{\sigma^2 + |c_1|^2P \tau_3 \frac{S}{N}}),$. Lower bound and approximation for $\mathbb{E}_{\mathbf{h},\, \mathbf{a}(\theta)} \left[ \log\left(1 + \text{SINR}_u\right) \right]$ are derived by substituting the expression $\mathcal{M}_3(S, K)$ with $\log_2(1+P |c_1|^2 \tau_1 K[\sigma^2 + |c_1|^2P \tau_3 \frac{S}{N} + |c_1|^2P(1 - \tau_3 - \tau_2 - \tau_1) K]^{-1}),$ and $\log_2(1+P |c_1|^2\tau_1 K[\sigma^2 + P |c_1|^2\tau_3 \frac{S}{N} +|c_1|^2 P(1 - \tau_3 - \tau_2 - \tau_1)]^{-1})$, respectively \footnote{In Section \ref{simulations}, we illustrate through numerical results that the approximation closely aligns with the exact value of $\mathbb{E}_{\mathbf{h},\, \mathbf{a}(\theta)} \left[ \log\left(1 + \text{SINR}_u\right) \right]$ and also the upper bounds are tight.}.
\end{lemma}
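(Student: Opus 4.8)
The plan is to first write $\text{SINR}_u$ in closed form for the SLB precoder $(\ref{x2})$, then sandwich it (and approximate it) by deterministic functions of the two scalar statistics $K=\|\mathbf h\|^2=\sum_i k_i$ and $S=|\mathbf h^H\mathbf a|^2=R^2+T^2$, and finally average those functions against the asymptotic joint law of $(S,K)$, exactly along the lines of Appendix \ref{proofpdfuser}. Substituting $(\ref{x2})$ into $\mathbf y_u=c_1\mathbf h^H\mathbf X+\mathbf z_u$ and using that $\big[\tfrac{\mathbf h}{\|\mathbf h\|}\ \ \tilde{\mathbf G}\big]$ is orthonormal — so $\mathbf h^H\tilde{\mathbf G}=\mathbf 0$ (the AN leaks nothing to the user) and $\mathbf h^H\tfrac{\mathbf h}{\|\mathbf h\|}=\|\mathbf h\|$ — while $\mathbf s_{r1},\mathbf s_{r2}$ are unit-power and orthogonal to $\mathbf s_u$ and to each other, I obtain
\[
\text{SINR}_u=\frac{|c_1|^2P\tau_1K}{\sigma^2+|c_1|^2P\tau_3\frac{|\mathbf h^H\mathbf a|^2}{N}+|c_1|^2P(1-\tau_1-\tau_2-\tau_3)\frac{|\mathbf h^H\mathbf a'|^2}{\|\mathbf a'\|^2}},
\]
and writing $a_i=e^{-jf_i}$, $h_i=|h_i|e^{j\phi_i}$ gives $\mathbf h^H\mathbf a=\sum_i|h_i|e^{-j(\phi_i+f_i)}=R-jT$, hence $|\mathbf h^H\mathbf a|^2=S$, $\|\mathbf a\|^2=N$, and likewise $|\mathbf h^H\mathbf a'|^2=(\sum_i f_i' t_i)^2+(\sum_i f_i' r_i)^2$.

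For the three estimates I would proceed as follows. Discarding the $\mathbf a'$-interference term upper-bounds $\text{SINR}_u$, so by monotonicity of the logarithm $\log(1+\text{SINR}_u)\le\mathcal M_3(S,K)$ pointwise. Cauchy--Schwarz gives $|\mathbf h^H\mathbf a'|^2\le\|\mathbf h\|^2\|\mathbf a'\|^2$, i.e.\ $\frac{|\mathbf h^H\mathbf a'|^2}{\|\mathbf a'\|^2}\le K$, which lower-bounds $\text{SINR}_u$ and yields the stated lower bound. For the approximation I use $E\big[(\sum_i f_i' r_i)^2+(\sum_i f_i' t_i)^2\big]=\tfrac12\sum_i (f_i')^2+\tfrac12\sum_i (f_i')^2=\|\mathbf a'\|^2$, so $\frac{|\mathbf h^H\mathbf a'|^2}{\|\mathbf a'\|^2}$ has mean $1$; replacing it by $1$ produces the stated approximate rate — a mean-field substitution (this ratio is asymptotically $\mathrm{Exp}(1)$, not degenerate), whose accuracy is corroborated in Section \ref{simulations}.

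In each case the resulting expression depends on $(\mathbf h,\theta)$ only through $S$ and $K$, so $\mathbb E_{\mathbf h,\mathbf a(\theta)}[\cdot]$ reduces to an average over $(S,K)$. By the multivariate CLT (Lemma \ref{lemma1i}), conditionally on $\theta$ the vector $(R,T,K)$ is asymptotically $\mathcal N_3(N\boldsymbol{\mu}_d,N\boldsymbol{\Sigma}_d)$ with $\boldsymbol{\mu}_d=[0,0,1]^T$ and $\boldsymbol{\Sigma}_d=\mathrm{diag}(\tfrac12,\tfrac12,1)$, a law that does not depend on $\theta$; hence $K$ is approximately $\mathcal N(N,N)$, $S=R^2+T^2$ is approximately exponential with mean $N$, and $S,K$ are asymptotically independent with joint density $\tfrac1N e^{-s/N}\cdot\tfrac1{\sqrt{2\pi N}}e^{-(k-N)^2/(2N)}$. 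Since this law is $\theta$-free, the $\theta$-average with density $1/\pi$ on $(-\pi/2,\pi/2)$ is trivial, and integrating $\mathcal M_3$ (resp.\ its two modifications) against the joint density, truncated to the ranges in the definition of $\mathcal I(\cdot)$, which discards only a negligible tail, is exactly $\mathcal I(\mathcal M_3)$ (resp.\ $\mathcal I$ of the two modifications), giving the claim.

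The main obstacle is the distributional step: establishing joint asymptotic normality of $(R,T,K)$ with the stated diagonal covariance (which needs the relevant second moments of $|h_i|,\cos(\phi_i+f_i),\sin(\phi_i+f_i)$ together with a Lyapunov-type bound), deducing that $S$ decouples from $K$ through the quadratic map $(R,T)\mapsto R^2+T^2$, and arguing that the limit is independent of $\theta$ — most of which is inherited from Lemma \ref{lemma1i}. The secondary delicate point is that the approximation in the second step is a mean substitution rather than a concentration statement, so its validity ultimately rests on the numerical comparison in Section \ref{simulations}.
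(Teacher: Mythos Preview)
Your proposal is correct and follows essentially the same route as the paper: derive $\text{SINR}_u$ from \eqref{x2} using $\mathbf h^H\tilde{\mathbf G}=\mathbf 0$, then obtain the upper bound by dropping the $\mathbf a'$-interference term, the lower bound via Cauchy--Schwarz $|\mathbf h^H\mathbf a'|^2\le\|\mathbf h\|^2\|\mathbf a'\|^2$, the approximation by substituting the mean $\|\mathbf a'\|^2$ for $(\sum_i f_i'r_i)^2+(\sum_i f_i't_i)^2$, and finally average over the asymptotic $(S,K)$ law from Lemma~\ref{lemma1i} and Appendix~\ref{proofnewSLB}. Your observation that the approximation step is a mean-field substitution (since $|\mathbf h^H\mathbf a'|^2/\|\mathbf a'\|^2$ is asymptotically $\mathrm{Exp}(1)$ rather than degenerate) is in fact sharper than the paper's own loose appeal to the law of large numbers in Appendix~\ref{proofexactderivationlemma}.
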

\begin{proof}
Based on equation \eqref{x2}, and considering the received signal at the user as described in Section \ref{systemmodel}, along with the orthogonality between $\mathbf{\tilde{G}}$ and $\mathbf{h}$, the signal-to-interference-plus-noise ratio (SINR) at the user is given by:
$
\text{SINR}_u = \frac{P |c_1|^2 \tau_1 |\mathbf{h}|^2}{\sigma^2 + |c_1|^2P \tau_3 \frac{|\mathbf{h}^H \mathbf{a}|^2}{\|\mathbf{a}\|^2} + |c_1|^2 P(1 - \tau_3 - \tau_2 - \tau_1) \frac{|\mathbf{h}^H \mathbf{a}'|^2}{\|\mathbf{a}'\|^2}}.
$
By expressing this in element-wise form and applying the definitions from Lemma \ref{exactderivationlemma2}, the SINR simplifies to:
$
\text{SINR}_u = \frac{P |c_1|^2 \tau_1 K}{\sigma^2 +|c_1|^2 P \tau_3 \frac{R^2 + T^2}{N} +|c_1|^2 P(1 - \tau_3 - \tau_2 - \tau_1) \cdot \frac{(\sum_{i=1}^{N} -f'_i t_i)^2 + (\sum_{i=1}^{N} f'_i r_i)^2}{\|\mathbf{a}'\|^2}},
$
where $K $, $T $, and $R$, are defined in Lemma \ref{exactderivationlemma2}. Then, following the approach used in Appendix \ref{proofexactderivationlemma2} and Appendix \ref{proofnewSLB}, the proof is complete and we ignore the detail for brevity.
\end{proof}
\begin{lemma}\label{lemmapdfeav2}  
In the SLB strategy, an approximation of the ergodic rate at the communication eav is given by $log2(1+\frac{|c_2|^2P \tau_1}{\sigma^2 + |c_2|^2(P-P\tau_1)}).$ \footnote{In Section \ref{simulations}, we illustrate through numerical results that the approximation closely aligns with the exact value of $\mathbb{E}_{\mathbf{h},\, \mathbf{a}(\theta),\, \mathbf{h}_e} \left[ \log\left(1 + \text{SINR}_e\right) \right]$.}.
\end{lemma}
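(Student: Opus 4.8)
The plan is to collapse the ergodic leakage rate to a deterministic quantity by replacing every $\mathbf{h}_e$-dependent quadratic form appearing in $\text{SINR}_e$ by its mean, in the spirit of the deterministic-equivalent arguments used elsewhere in the paper.

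First I would write $\text{SINR}_e$ explicitly from the precoder \eqref{x2} and the eav's observation $\mathbf{y}_e=c_2\mathbf{h}_e^H\mathbf{X}+\mathbf{z}_e$. Since the single-antenna passive communication eav has no knowledge of the artificial noise $\tilde{\mathbf{V}}$ or of the radar streams $\mathbf{s}_{r1},\mathbf{s}_{r2}$, all three act as interference; using the assumed orthogonality of the data, AN and radar signals, together with $\tfrac1L\tilde{\mathbf{V}}\tilde{\mathbf{V}}^H\approx\tfrac{1}{N-1}\mathbf{I}_{N-1}$ and $\tilde{\mathbf{G}}\tilde{\mathbf{G}}^H=\mathbf{I}_N-\mathbf{h}\mathbf{h}^H/\|\mathbf{h}\|^2$, the interference-plus-noise power reduces to $\sigma^2+\tfrac{P\tau_2|c_2|^2}{N-1}\|\tilde{\mathbf{G}}^H\mathbf{h}_e\|^2+P\tau_3|c_2|^2\tfrac{|\mathbf{h}_e^H\mathbf{a}|^2}{\|\mathbf{a}\|^2}+P(1-\tau_1-\tau_2-\tau_3)|c_2|^2\tfrac{|\mathbf{h}_e^H\mathbf{a}'|^2}{\|\mathbf{a}'\|^2}$, while the desired signal power is $P\tau_1|c_2|^2\tfrac{|\mathbf{h}_e^H\mathbf{h}|^2}{\|\mathbf{h}\|^2}$.

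Next, because $\mathbf{h}_e\sim\mathcal{CN}(0,\mathbf{I}_N)$ is independent of $\mathbf{h}$, $\mathbf{a}$, $\mathbf{a}'$ and of $\tilde{\mathbf{G}}$, conditioning on those quantities shows that $\mathbf{h}_e^H\mathbf{h}/\|\mathbf{h}\|$, $\mathbf{h}_e^H\mathbf{a}/\|\mathbf{a}\|$ and $\mathbf{h}_e^H\mathbf{a}'/\|\mathbf{a}'\|$ are each $\mathcal{CN}(0,1)$, so their squared magnitudes have unit mean, whereas $\|\tilde{\mathbf{G}}^H\mathbf{h}_e\|^2$ is a sum of $N-1$ i.i.d.\ unit-mean exponentials and hence $\tfrac{1}{N-1}\|\tilde{\mathbf{G}}^H\mathbf{h}_e\|^2\to1$ almost surely. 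The approximation then substitutes all four quadratic forms by their means (equal to $1$): the interference-plus-noise power collapses to $\sigma^2+|c_2|^2P(\tau_2+\tau_3+1-\tau_1-\tau_2-\tau_3)=\sigma^2+|c_2|^2(P-P\tau_1)$ and the signal power to $P\tau_1|c_2|^2$, so the resulting SINR is deterministic and $\mathbb{E}[\log_2(1+\text{SINR}_e)]\approx\log_2\!\big(1+|c_2|^2P\tau_1/(\sigma^2+|c_2|^2(P-P\tau_1))\big)$, which is the claimed expression. (One could instead push the exact SINR through an element-wise computation as in Lemma \ref{lemmapdfeav} and Appendix \ref{proofpdfeav}, but the extra radar terms obstruct a closed form, which is exactly why the mean-replacement route is taken.)

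The main obstacle is that, unlike the artificial-noise term, the three exponential quadratic forms $|\mathbf{h}_e^H\mathbf{h}|^2/\|\mathbf{h}\|^2$, $|\mathbf{h}_e^H\mathbf{a}|^2/\|\mathbf{a}\|^2$ and $|\mathbf{h}_e^H\mathbf{a}'|^2/\|\mathbf{a}'\|^2$ do \emph{not} concentrate as $N\to\infty$, and they are moreover mutually correlated because $\mathbf{h}$, $\mathbf{a}$ and $\mathbf{a}'$ are not mutually orthogonal; replacing them by their means is therefore a genuine approximation rather than an asymptotic identity. Rather than chasing a loose Jensen-type bound on the resulting gap, I would present the expression as a deterministic-equivalent approximation and validate its accuracy numerically, as anticipated in the footnote and confirmed in Section \ref{simulations}.
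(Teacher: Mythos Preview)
Your proposal is correct and follows essentially the same route as the paper: write out $\text{SINR}_e$ from the SLB precoder \eqref{x2}, identify the distributions of the four $\mathbf{h}_e$-dependent quadratic forms (three unit-mean exponentials and one $\text{Gamma}(N-1,1)$), and replace each by its mean to obtain the deterministic approximation. The paper phrases the last step as $\mathbb{E}[\text{SINR}_e]\approx\mathbb{E}[\text{Numerator}]/\mathbb{E}[\text{Denominator}]$ rather than your ``replace each quadratic form by its mean,'' but since every random term enters linearly these are the same operation; your additional remarks on non-concentration and correlation are accurate caveats the paper omits.
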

\begin{proof}
Based on (\ref{x2}), and given the received signal at the communication eav as described in Section \ref{systemmodel}, its SINR is:
$\text{SINR}_e = P |c_2|^2 \tau_1 \frac{|\mathbf{h}^H_e\mathbf{h}|^2}{|\mathbf{h}|^2}[\sigma^2 + |c_2|^2 P \tau_3 \frac{|\mathbf{h}_e^H \mathbf{a}|^2}{\|\mathbf{a}\|^2} + |c_2|^2P(1 - \tau_3 - \tau_2 - \tau_1) \frac{|\mathbf{h}_e^H \mathbf{a}'|^2}{\|\mathbf{a}'\|^2}+\frac{P|c_2|^2\tau_2}{N-1}\mathbf{h}^H_e(\mathbf{I}-\frac{\mathbf{h}\mathbf{h}^H}{|\mathbf{h}|^2})\mathbf{h}_e]^{-1}.$ 
Since \(\mathbf{h}_e\) is independent of \(\mathbf{h}\), \(\mathbf{h}^H_e \mathbf{h} \sim \mathcal{CN}(0, \|\mathbf{h}\|^2)\). Thus, \(\frac{|\mathbf{h}^H_e \mathbf{h}|^2}{\|\mathbf{h}\|^2} \sim \text{Exp}(1)\) (exponential with mean 1) and $\mathbb{E}\left[ P \tau_1 \frac{|\mathbf{h}^H_e \mathbf{h}|^2}{\|\mathbf{h}\|^2} \right] = P \tau_1.$ Moreover, \(\mathbf{h}_e^H \mathbf{a} \sim \mathcal{CN}(0, N)\), so \(\frac{|\mathbf{h}_e^H \mathbf{a}|^2}{N} \sim \text{Exp}(1)\). \(\mathbf{h}_e^H \mathbf{a}' \sim \mathcal{CN}(0, \|\mathbf{a}'\|^2)\), so \(\frac{|\mathbf{h}_e^H \mathbf{a}'|^2}{\|\mathbf{a}'\|^2} \sim \text{Exp}(1)\). Moreover, the matrix \(\mathbf{I} - \frac{\mathbf{h} \mathbf{h}^H}{\|\mathbf{h}\|^2}\) is a projection matrix with rank \(N-1\). So, \(\mathbf{h}^H_e \left( \mathbf{I} - \frac{\mathbf{h} \mathbf{h}^H}{\|\mathbf{h}\|^2} \right) \mathbf{h}_e \sim \text{Gamma}(N-1, 1)\) with mean $N-1$. Thus, we approximate: $\mathbb{E}[\text{SINR}_e] \approx \frac{\mathbb{E}[\text{Numerator}]}{\mathbb{E}[\text{Denominator}]} = \frac{P |c_2|^2\tau_1}{\sigma^2 + |c_2|^2P \tau_3 + |c_2|^2P(1 - \tau_3 - \tau_2 - \tau_1) + |c_2|^2P \tau_2}=\frac{P |c_2|^2\tau_1}{\sigma^2 + |c_2|^2(P-P\tau_1)}.$
\end{proof}
\begin{lemma}\label{malitarget2}
In the SLB strategy, the exact ergodic rate at the malicious target is $\mathcal{I}(\mathcal{M}_4(S,K))$ where $\mathcal{M}_4(S,K)= \log_2(1+\frac{P |c_5|^2 \tau_1 \frac{S}{K}}{\sigma^2 + |c_5|^2 P \tau_3 N + \frac{P |c_5|^2 \tau_2}{N-1} \left( N - \frac{S}{K} \right)})$
\end{lemma}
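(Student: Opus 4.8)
The plan is to reduce the claim to an exact instantaneous-SINR computation plus the Gaussian averaging already packaged in the operator $\mathcal{I}(\cdot)$. I would first substitute the SLB signal (\ref{x2}) into the target's observation $\mathbf{y}_t=c_5\mathbf{a}^H\mathbf{X}+\mathbf{z}_t$ and, since the target now plays the role of the communication eavesdropper, treat the $\mathbf{s}_u$-stream as the desired signal and the three remaining streams (the AN $\tilde{\mathbf{V}}$ and the radar symbols $\mathbf{s}_{r1},\mathbf{s}_{r2}$) as interference. The desired power is $P\tau_1|c_5|^2|\mathbf{a}^H\mathbf{h}|^2/\|\mathbf{h}\|^2$. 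For the interference I would invoke three facts already in the paper: $\mathbf{a}^H\mathbf{a}'=0$, so the $\mathbf{s}_{r2}$ stream is completely invisible at the target; $\|\mathbf{a}\|^2=N$, so the $\mathbf{s}_{r1}$ stream contributes power $P\tau_3|c_5|^2N$; and $\tilde{\mathbf{G}}\tilde{\mathbf{G}}^H=\mathbf{I}-\mathbf{h}\mathbf{h}^H/\|\mathbf{h}\|^2$ together with the large-$L$ approximation $\tfrac1L\tilde{\mathbf{V}}\tilde{\mathbf{V}}^H\approx\tfrac1{N-1}\mathbf{I}$, so the AN contributes $\tfrac{P\tau_2|c_5|^2}{N-1}\big(N-|\mathbf{a}^H\mathbf{h}|^2/\|\mathbf{h}\|^2\big)$; all cross-terms vanish by the postulated orthogonality of $\mathbf{s}_u,\tilde{\mathbf{V}},\mathbf{s}_{r1},\mathbf{s}_{r2}$.

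Next I would rewrite the result element-wise exactly as in Lemma~\ref{exactderivationlemma2} and the proof of Lemma~\ref{lemmapdfuser2}: with $\mathbf{a}^H\mathbf{h}=\sum_i e^{jf_i}h_i=R+jT$ and $\|\mathbf{h}\|^2=K$ one gets $|\mathbf{a}^H\mathbf{h}|^2/\|\mathbf{h}\|^2=(R^2+T^2)/K$, hence $\log_2(1+\mathrm{SINR}_t)=\mathcal{M}_4(R^2+T^2,K)$ with $\sigma_t^2$ identified with the common noise power $\sigma^2$. The point that makes this lemma \emph{exact} (rather than, as in Lemmas~\ref{lemmapdfuser2} and \ref{lemmapdfeav2}, only boundable or approximable) is that, because $\mathbf{a}\perp\mathbf{a}'$, the target SINR depends on the channel randomness only through the pair $(R^2+T^2,K)$; no additional weighted sums such as $\sum_i f'_i r_i$ appear, so no term has to be discarded or replaced before averaging.

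Finally I would take the expectation. I would first observe that $\mathrm{SINR}_t$ does not depend on $\theta$: for every fixed $\theta$ the angles $f_i+\tilde{\phi}_i$ are uniform on $[0,2\pi)$, so the joint law of $(R,T,K)$ is $\theta$-free (the same remark used in the proof of Lemma~\ref{crbsensingeav2}), and the average over $\theta\sim\mathcal{U}(-\pi/2,\pi/2)$ contributes a factor $1$. Then, by Lemma~\ref{lemma1i}, for large $N$ the triple $(R,T,K)$ is approximately trivariate normal with diagonal covariance, $R,T\sim\mathcal{N}(0,N/2)$ and $K\sim\mathcal{N}(N,N)$, so $S\triangleq R^2+T^2$ is approximately exponential with mean $N$ and approximately independent of $K$. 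Performing the change of variables $(R,T)\mapsto S$ produces exactly the product weight $\tfrac1N e^{-s/N}\cdot\tfrac1{\sqrt{2\pi N}}e^{-(k-N)^2/(2N)}$, and restricting the integration domain to the box used in the definition of $\mathcal{I}(\cdot)$ turns $\mathbb{E}[\log_2(1+\mathrm{SINR}_t)]$ into $\mathcal{I}(\mathcal{M}_4(S,K))$, completing the argument.

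The step I expect to be the main obstacle is, as in the companion SLB lemmas, justifying that $S=R^2+T^2$ may be treated as independent of $K$ with the stated marginals: conditioned on the magnitudes $|h_i|$, $S$ is exponential with mean $K$ rather than $N$, so the factorization is only asymptotic and rests entirely on the concentration $K/N\to1$ supplied by the CLT in Lemma~\ref{lemma1i}; the only remaining care point is to confirm that the single other approximation entering the interference power is $\tfrac1L\tilde{\mathbf{V}}\tilde{\mathbf{V}}^H\approx\tfrac1{N-1}\mathbf{I}$, everything else in the SINR derivation being an exact identity.
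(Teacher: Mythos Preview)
Your proposal is correct and follows essentially the same route as the paper: derive $\mathrm{SINR}_t$ from the SLB transmit structure using $\mathbf{a}\perp\mathbf{a}'$ and $\tilde{\mathbf{G}}\tilde{\mathbf{G}}^H=\mathbf{I}-\mathbf{h}\mathbf{h}^H/\|\mathbf{h}\|^2$, rewrite $|\mathbf{a}^H\mathbf{h}|^2/\|\mathbf{h}\|^2$ as $(R^2+T^2)/K$, invoke the $\theta$-independence of the law of $(R,T,K)$ from Lemma~\ref{lemma1i}, and average with the $\mathcal{I}(\cdot)$ machinery of Appendix~\ref{proofnewSLB}. Your additional remark that the absence of any $\sum_i f'_i r_i$ or $\sum_i f'_i t_i$ term is what makes the result exact (unlike Lemmas~\ref{lemmapdfuser2} and~\ref{lemmapdfeav2}) is a nice clarification that the paper leaves implicit.
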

\begin{proof}
Based on equation (\ref{x2}) and the received signal at the malicious target described in Section \ref{systemmodel}, along with the orthogonality between $\mathbf{a}$ and $\mathbf{a'}$, we have:
$\text{SINR}_t = \frac{P |c_5|^2 \tau_1 \frac{|\mathbf{a}^H\mathbf{h}|^2}{|\mathbf{h}|^2}}{\sigma^2 + |c_5|^2 P \tau_3 N + \frac{P |c_5|^2 \tau_2}{N-1} \left( N - \frac{|\mathbf{a}^H \mathbf{h}|^2}{|\mathbf{h}|^2} \right)}.
$ By expressing this in element-wise form and applying the definitions from Lemma \ref{exactderivationlemma2}, and noting that the distributions of $R$, $T$, and $K$ are independent of $\theta$ (as established in Lemma \ref{lemma1i}), and following the same approach as in Appendix \ref{proofnewSLB}, the proof is complete.
\end{proof}
\section{Numerical Results}\label{simulations}
Unless stated otherwise, we use the following parameters: \( N = 15 \), \( M = 17 \), \( N_e = 15 \), \( P = 10 \), \( \sigma_u = \sigma_r = 1 \), \( L = 30 \), \( c_1 = c_2 = c_5=\sqrt{0.001} \), \( \delta = 0.1 \), and \( c_3 = c_4 = 0.001 \).  Monte Carlo simulation results are averaged over 10,000 channel realizations and closely match the numerical derivations across all figures.

Fig. \ref{ratesfigure} \subref{c12}–\subref{c13} and Fig. \ref{ratesfigure} \subref{c} illustrate the contours of the ESR (where the ESR for an external eav is denoted as \( C_s \) and shown with filled color, while for a malicious target, it is \( C^t_s \) and shown with lines) for SLB\footnote{We note that the axis ranges follow \( 0 < \tau_1 < 1 \), \( 0 < \tau_2 < 1 \), \( 0 < \tau_3 < 1 \), and \( 0 < 1 - \tau_1 - \tau_2 - \tau_3 < 1 \).} and SSJB, versus the BS power coefficients (\(\tau_1\) for \(\mathbf{s}_u\) (user data), \(\tau_2\) for \(\mathbf{V}\) (AN), and \(\tau_3\) for \(\mathbf{s}_{r1}\) (radar signal directed toward the target channel, i.e., \(\mathbf{a}\)) in SLB; and \(\tau\) (for \(\mathbf{s}_u\)) and \(\alpha\) (the fraction of \(\mathbf{s}_u\) directed toward \(\mathbf{a}\)) in SSJB\footnote{Since the ESR depends on two parameters in SSJB, a contour plot effectively captures the three-dimensional relationship of ESR as a function of \(\tau\) and \(\alpha\). Moreover the range of the parameters are $0<\alpha<1$ and $0<\tau<1$}. Additionally, Fig. \ref{ratesfigure} \subref{firstrate} and Fig. \ref{ratesfigure} \subref{secondrate} show the ergodic rates of the user and the communication eav (malicious target or external eav) at SSJB and SLB, respectively\footnote{The x-axis labels \(\tau_1 / \tau_2 / \tau_3\) indicate that, for example, when \(\tau_1\) and \(\tau_2\) (or \(\tau_3\)) are fixed, the rate is plotted against \(\tau_3\) (or \(\tau_2\)).}.  
\begin{figure*}
    \centering
    \subfigure[ESR, SLB, $\tau_3$=0.07 \label{c12}]{\includegraphics[scale=.42]{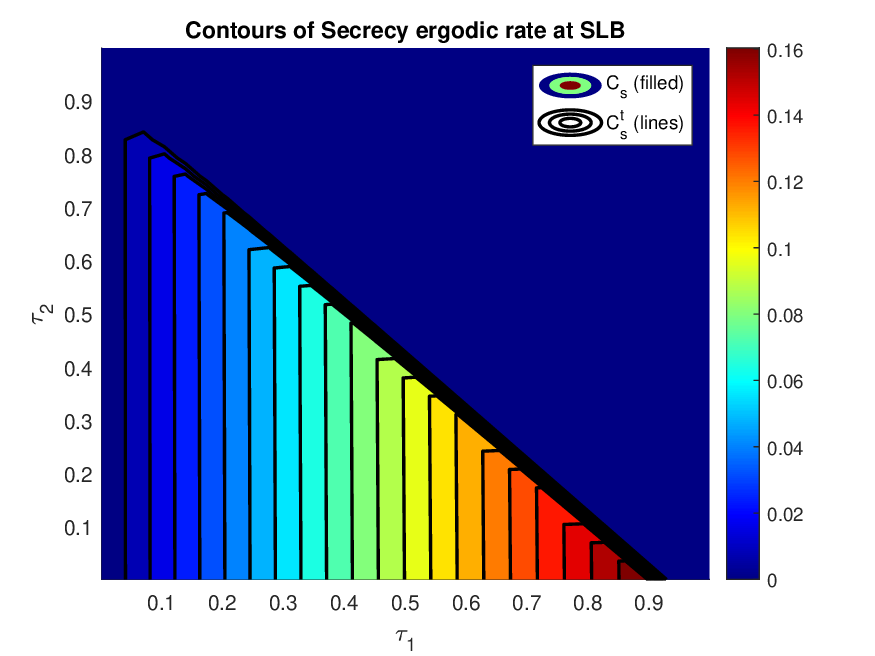}}\hspace{-0.5cm}
    \subfigure[ESR, SLB, $\tau_2$=0.07 \label{c13}]{\includegraphics[scale=.42]{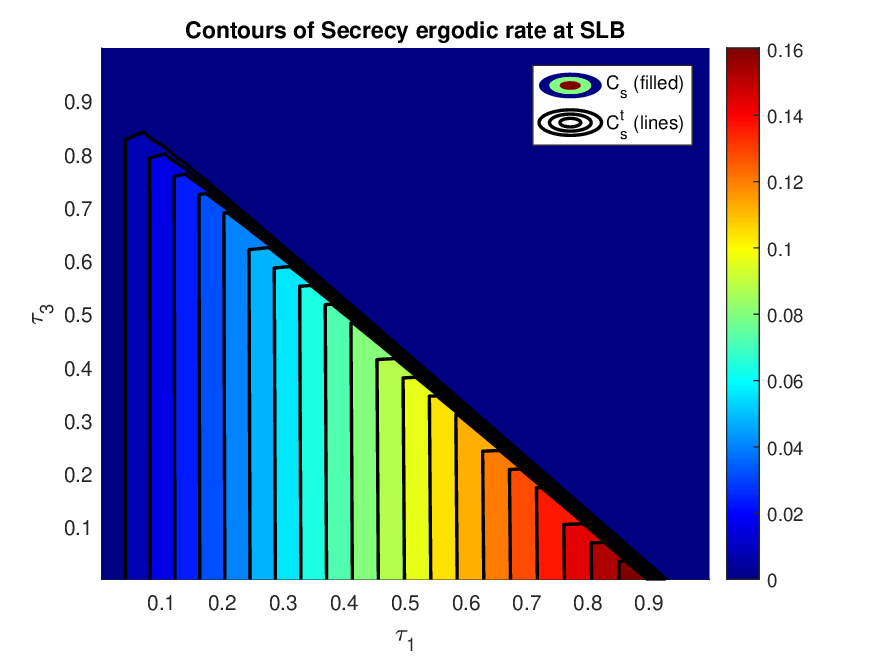}}\hspace{-0.5cm}
    \subfigure[ESR, SSJB \label{c}]{\includegraphics[scale=.42]{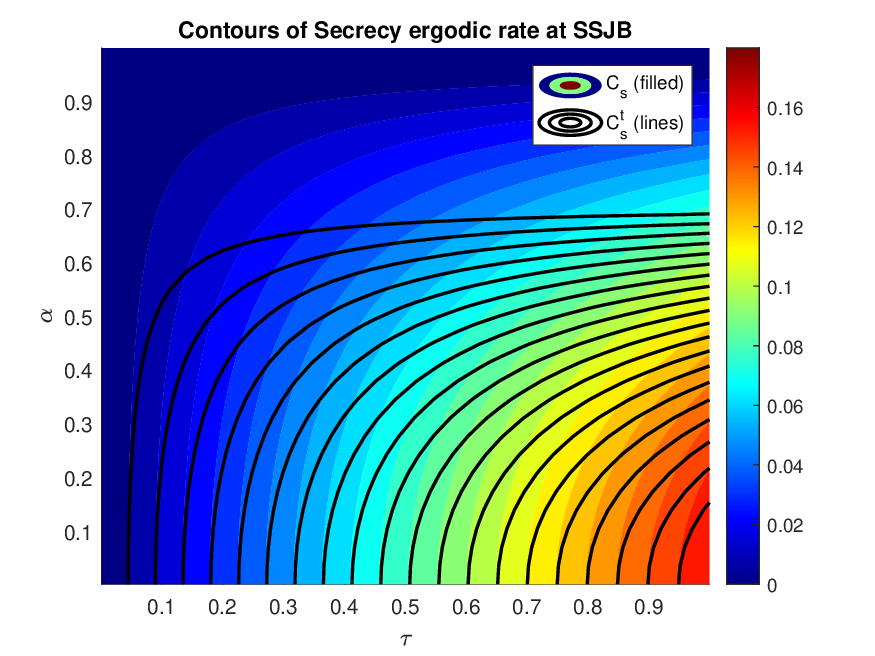}}\hspace{-0.5cm}
     \subfigure[SSJB, Ergodic rate of user and commu. eav \label{firstrate}]{\includegraphics[scale=.42]{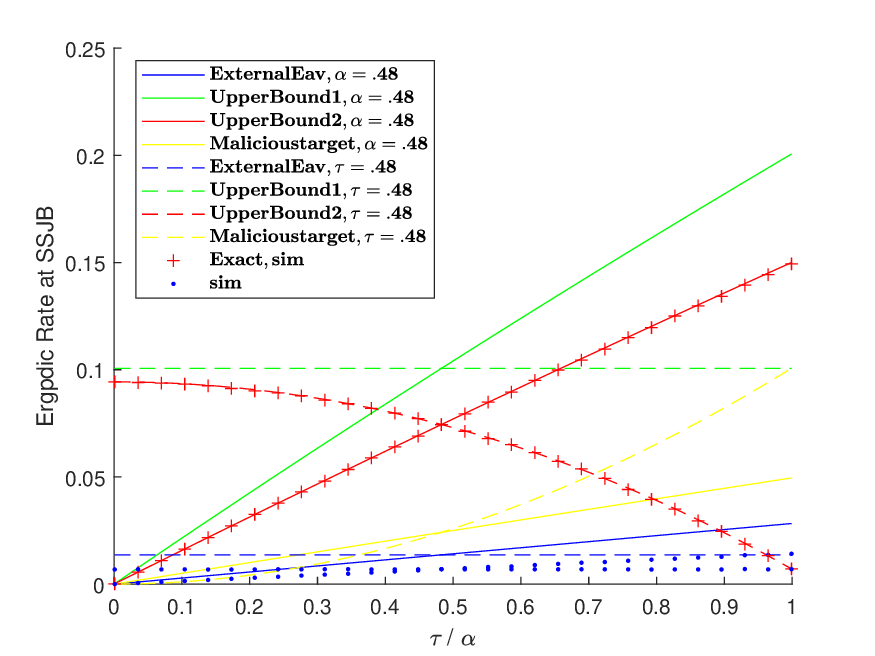}}\hspace{-0.5cm}
    \subfigure[SLB, Ergodic rate of user and commu. eav \label{secondrate}]{\includegraphics[scale=.42]{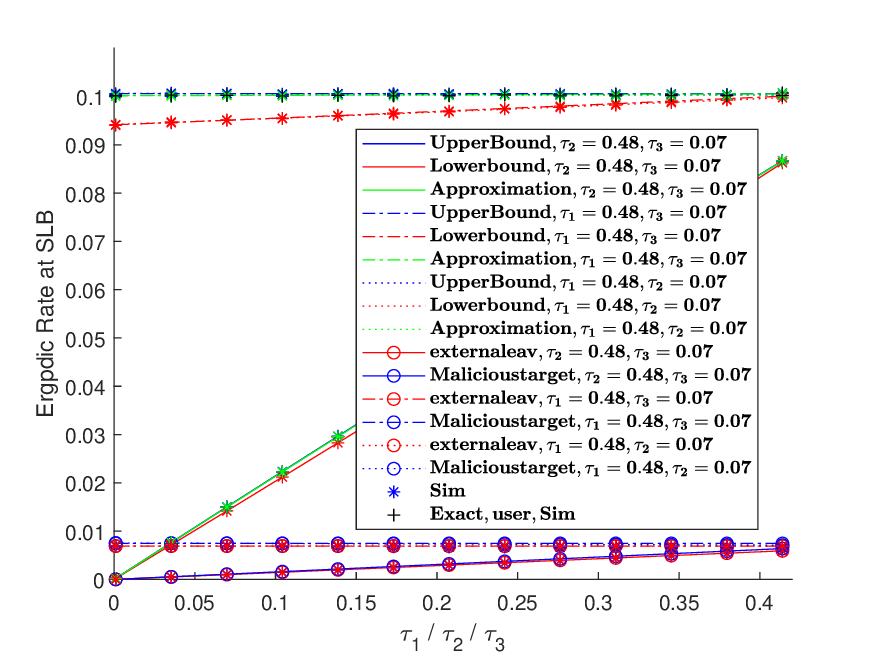}}\hspace{-0.5cm}
    \subfigure[SSJB and SLB, ESR at subproblems\label{subproblemrate}]
    {\includegraphics[scale=.42]{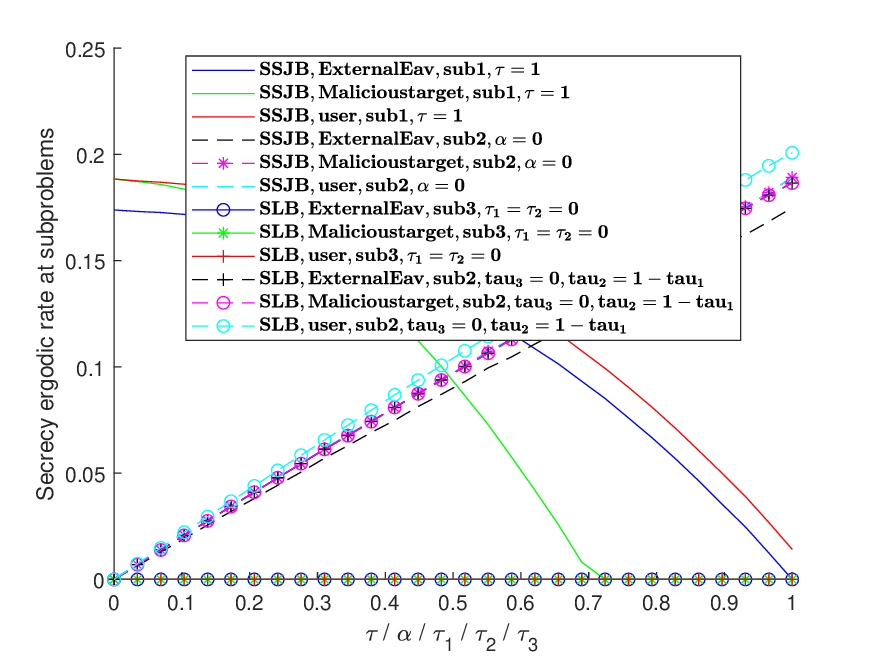}}\hspace{-0.5cm}
    \caption{Ergodic rates vs. BS power coefficients: SLB (\(\tau_1\): user data; \(\tau_2\): AN; \(\tau_3\): radar to target) and SSJB (\(\tau\): user data; \(\alpha\): user data fraction to target).}
    \label{ratesfigure}
\end{figure*}
From Fig. \ref{ratesfigure} \subref{c12}–\subref{c13}, it can be observed that in SLB, as \(\tau_1\) increases, the ESR also increases. The effects of \(\tau_2\) and \(\tau_3\) on the ESR are approximately the same—transmitting AN in the null space of the user channel has a similar impact as directing the radar signal toward the target, whether for a malicious target or an external eav. Furthermore, the ESR is more sensitive to changes in \(\tau_1\) than in \(\tau_2\) or \(\tau_3\) (the ESR remains nearly constant when \(\tau_2\) or \(\tau_3\) varies). This observation is also validated by Fig. \ref{ratesfigure} \subref{secondrate}, where the ergodic rates of both the user and the eav (target or external eav) increase with \(\tau_1\), though the user rate increases more significantly. Additionally, the rates for both the user and the eav remain approximately unchanged with increases in \(\tau_2\) or \(\tau_3\), with the user rate consistently higher for \(\tau_1 \gtrsim 0.08\), resulting in a positive ESR. Moreover, the rates for the malicious target and external eav are nearly identical, with the malicious target rate slightly higher. This further supports Fig. \ref{ratesfigure} \subref{c12}–\subref{c13}, where the contour regions for ESR are approximately the same. The approximation of the user rate closely matches its exact value, and the upper bound is almost tight. 

From Fig. \ref{ratesfigure} \subref{c}, it is evident that in SSJB, as \(\tau\) increases (or \(\alpha\) increases), the ESR (for both malicious target and external eav) increases (or decreases, respectively).
However, unlike SLB, the ESR contour regions for the malicious target and external eav differ. For example, when \(\alpha > 0.7\), the ESR for the malicious target drops to zero, whereas for the external eav, it reaches zero only when \(\alpha > 0.9\). This is also validated by Fig. \ref{ratesfigure} \subref{firstrate}, where the rates of both the malicious target and external eav increase with \(\alpha\) and \(\tau\), whereas the user rate increases with \(\tau\) (with a steeper increase than the eav’s) but decreases with \(\alpha\), leading to zero ESR. Additionally, the upper bound (Bound 1) is almost tight for the user rate, and the malicious target’s rate is higher than that of the external eav. Furthermore, the user rate’s upper bound (Bound 2) is also tight.

Furthermore, Fig. \ref{ratesfigure} \subref{subproblemrate} shows the ESR at subproblem points (denoted as "sub" in the legend). Specifically: $\bullet$ Subproblem 1 (sub1) in SSJB: At \(\tau = 1\) (no AN, only user data), we plot ESR versus \(\alpha\). When \(\alpha = 0\) (user data transmitted orthogonally to the target channel, \(\mathbf{\tilde{h}}\) in Sec. \ref{firststrategy}), the malicious target's ESR equals the user rate (no power received), while the external eav's ESR is lower (indicating leakage). As \(\alpha\) increases, the target's rate rises, and the malicious target's ESR drops, reaching zero before \(\alpha = 1\). The external eav's ESR reaches zero exactly at \(\alpha = 1\). 
$\bullet$ Subproblem 2 (sub2) in SSJB: At \(\alpha = 0\) (no user data toward the target, and AN in \(\mathbf{G}\) as in Sec. \ref{firststrategy}). When \(\tau = 0\) (AN only), all ESRs and user rates are zero, as expected.
$\bullet$ Subproblem 2 (sub2) in SLB: At \(\tau_3 = 0\) and \(\tau_2 = 1 - \tau_1\) (user data in \(\mathbf{{h}}\) like MRT beamformer and AN in \(\mathbf{\tilde{G}}\) as in Sec. \ref{secondstrategy}), we plot ESR versus \(\tau_1\). At \(\tau_1 = 0\) (AN only), all ESRs and user rates are zero. As \(\tau_1\) increases, both malicious target and external eav ESRs rise, reaching the same value at \(\tau_1 = 1\), while the user rate increases further.
$\bullet$ Subproblem 3 (sub3) in SLB: At \(\tau_1 = \tau_2 = 0\) (only radar signals toward \(\mathbf{a}\) and \(\mathbf{a}'\)), all ESRs and user rates remain zero since no user data is transmitted. Note that zero ESR at \(\tau = 1\) (sub1, SSJB) occurs because the BS beam aligns with the target. In all other cases, zero ESR results from zero power allocated to user data. Thus, we conclude that the SSJB strategy is more resilient against the malicious target; there is chance that when the user rate is positive, the malicious target rate becomes zero (indicating a positive ESR in this case). This is because SSJB can transmits user data orthogonally to the target channel ($\mathbf{\tilde{h}}$), which is not possible with SLB. However, SLB offers stronger resistance against external eavesdroppers and achieves higher rates in both the presence and absence of such eavesdroppers, as it transmits user data in the direction of the user channel (similar to MRT), whereas SSJB is constrained to the orthogonal component of the target channel, i.e., $\mathbf{\tilde{h}}$.

Fig. \ref{crbfigures} \subref{firstbs}, \subref{firstsensing}, and \subref{secondbssensing} show the ergodic CRB for the BS in SSJB, the strong/weak sensing eaves (eav) in SSJB, and the strong/weak sensing eav/ the BS in SLB, respectively. It can be observed that the ergodic CRB at the weak sensing eav is higher than that at the strong sensing eav, which in turn is higher than the ergodic CRB at the BS in both SLB and SSJB. Furthermore, Fig. \ref{crbfigures} \subref{firstbs} and \subref{secondbssensing} demonstrate that the approximation of the ergodic CRB at the BS closely matches its exact value in both SSJB and SLB, with the lower bound being tight in SSJB and the upper bound being tight in SLB. From Fig. \ref{crbfigures} \subref{firstsensing} and \subref{secondbssensing}, it is seen that increasing \(\alpha\) or \(\tau\) in SSJB, or \(\tau_1\), \(\tau_2\), or \(\tau_3\) in SLB, reduces the ergodic CRB at the sensing eav. However, Fig. \ref{crbfigures} \subref{firstbs} reveals that for \(\alpha < 0.2\), increasing \(\tau\) raises the ergodic CRB at the BS, whereas for \(\alpha > 0.2\), increasing \(\tau\) decreases it. Additionally, Fig. \ref{crbfigures} \subref{secondbssensing} indicates that increasing \(\tau_3\) (or \(\tau_1\), \(\tau_2\)) increases the CRB at the BS, though the CRB is more sensitive to changes in \(\tau_1\) or \(\tau_2\) than in \(\tau_3\).
\begin{figure*}
    \centering
    \subfigure[SSJB, Ergodic CRB($\theta)$ at BS \label{firstbs}]{\includegraphics[scale=.42]{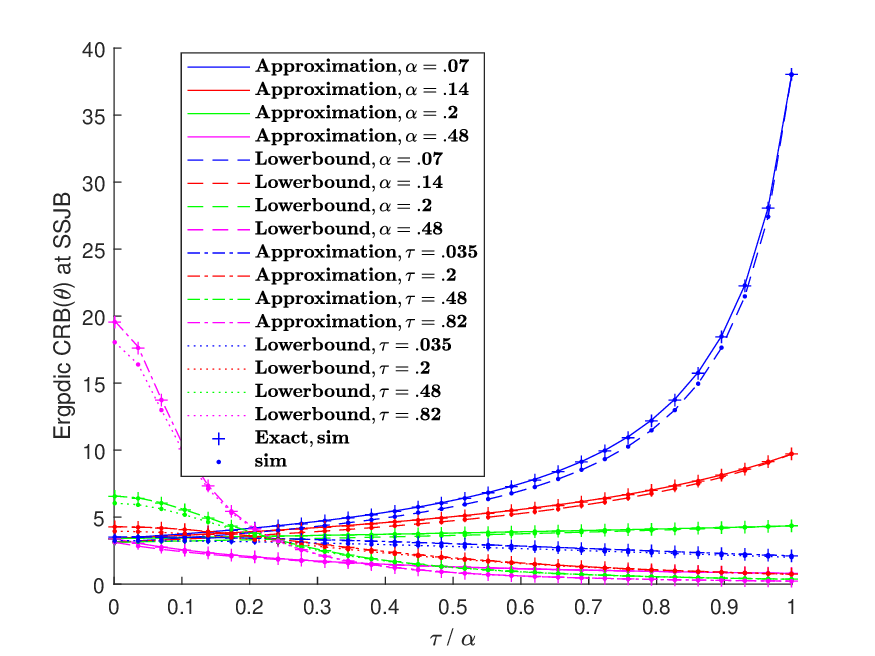}}\hspace{-0.5cm}
    \subfigure[SSJB, Ergodic CRB($\phi)$ at sensing eav \label{firstsensing}]{\includegraphics[scale=.42]{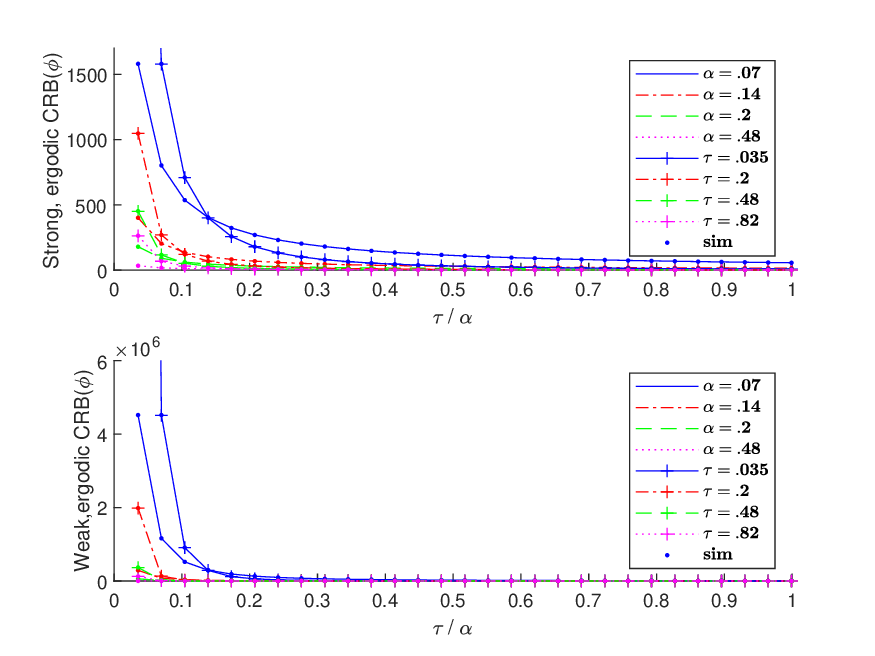}}\hspace{-0.5cm}
    \subfigure[SLB, Ergodic CRB at BS and sensing eav \label{secondbssensing}]{\includegraphics[scale=.42]{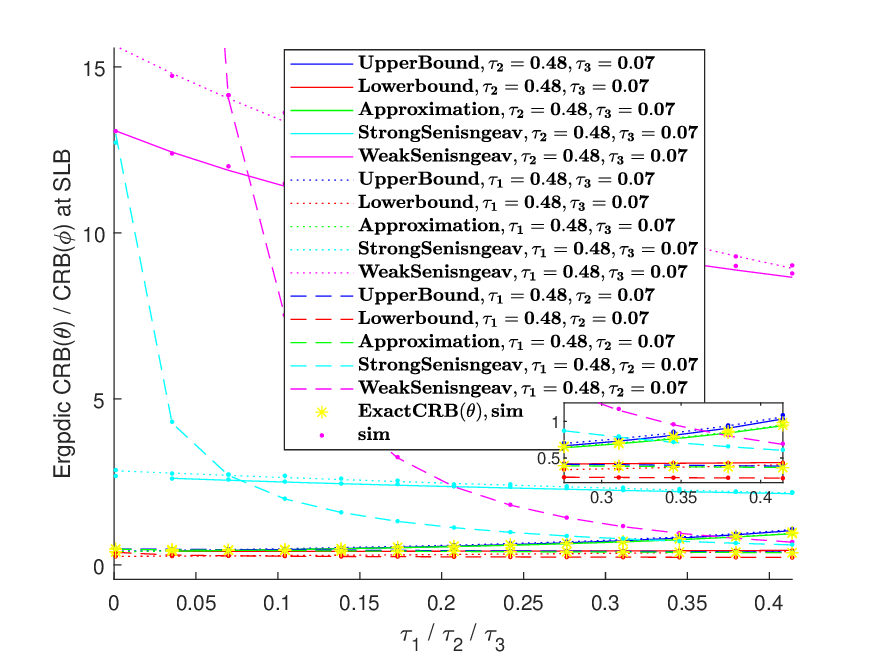}}\hspace{-0.5cm}
    \caption{Ergodic CRBs vs. BS power coefficients: SLB (\(\tau_1\): user data; \(\tau_2\): AN; \(\tau_3\): radar to target) and SSJB (\(\tau\): user data; \(\alpha\): user data fraction to target)."}
    \label{crbfigures}
\end{figure*}

Fig. \ref{regions} \subref{region1} and \subref{region2} show the trade off regions, ergodic CRB of the BS and the strong eav versus the ESR (in all three cases: external communication eav, malicious target, or no communication eav), respectively\footnote{We have zoomed in on the x-axis.}. In Fig. \ref{regions} \subref{region1}, the traditional tradeoff between S \& C in ISAC (when no communication eav is present) is illustrated. Additionally, when a communication eav exists, whether a malicious target or an external eav, a tradeoff still exists in a random secure communication ISAC system. Specifically, increasing the ESR (improving secure communication) leads to a higher ergodic CRB at the BS (reducing parameter estimation accuracy). Thus, achieving a positive ESR comes at the cost of sacrificing angle estimation accuracy. The sensing optimal point (SO), as discussed in prior works \cite{onthefundementaltradeoff,OnStochasticFundamentalLimitsinaDownlink}, which represents the minimum achievable sensing MSE regardless of communication performance, is also depicted. It can be observed that the minimum ergodic CRB is approximately the same for both SSJB and SLB. Furthermore, across all other points, including the communication optimal (CO) point,\footnote{The CO points for SSJB are not shown, as they are achieved at much higher values of $E[\text{CRB}(\theta)]$.} which maximizes rate regardless of sensing performance, SLB with no eav (i.e., traditional ISAC) consistently achieves a higher rate than SSJB in all scenarios. At high values of $E[\text{CRB}(\theta)] > 100$ (not shown in the plot), the ESR for SSJB in both the malicious target\footnote{Beyond $E[\text{CRB}(\theta)] \sim 1.5$, the malicious target’s rate surpasses that of the external eav in SSJB.} and no-eav cases increases and converges to nearly the same value. It even temporarily exceeds the ESRs of SLB in the malicious target and external eav cases, but remains below the SLB (no-eav) curve. Meanwhile, the ESR for the external eav case in SSJB remains consistently lower than all other curves. Notably, the SLB curve is entirely confined to $x < 2.5$. Moreover, the ESRs for the malicious target and external eav cases in SLB are nearly identical. These observations support the conclusions drawn from Fig.\ref{ratesfigure}\subref{subproblemrate}, namely that the SSJB strategy offers greater resilience against malicious targets, while SLB provides stronger protection in the presence of external eaves or when no eav is present. Comparing Fig. \ref{regions} \subref{region1} with Fig. \ref{regions} \subref{region2}, we observe that in both SLB and SSJB, the BS consistently achieves better angle estimation than the sensing eav, as \( E[\text{CRB}(\phi)] \) is higher than \( E[\text{CRB}(\theta)] \). Unlike Fig. \ref{regions} \subref{region1}, the regions in Fig. \ref{regions} \subref{region2} exhibit a decreasing trend (for all cases: no eav, external eav, or malicious target). This implies that as ESR increases (enhancing secure communication), the sensing eav’s ergodic CRB decreases (improving the eav’s estimation ability and thus reducing sensing privacy). This highlights a tradeoff between security and privacy in random ISAC networks. Based on these curves, system designers can determine how much sensitive data leakage is tolerable to ensure sensing privacy (higher \( E[\text{CRB}(\phi)] \)) while maintaining ISAC sensing performance (lower \( E[\text{CRB}(\theta)] \)). From Fig.\ref{regions}\subref{subpromlemcrb}, it is observed that $E[\text{CRB}(\theta)]$ for SLB at subproblem 3 (secure sensing optimal) is lower than that for SSJB at subproblem 1 (sensing optimal without security), indicating improved sensing performance at the base station. However, the sensing eav’s ergodic rate in SLB at subproblem 3 is lower than in SSJB at subproblem 1, indicating a degradation in target privacy. This trade-off arises because neither of these subproblems is globally optimal for the entire system, they are optimized for specific aspects (i.e., sensing performance or security), rather than for joint system performance. Moreover, we note that in Fig.\ref{regions}\subref{subpromlemcrb}, for SLB in the sub2 case, the ergodic CRB of the sensing eavesdropper (BS) increases, whereas in Fig.\ref{crbfigures}\subref{secondbssensing}, it was decreasing (increasing). This discrepancy arises because, in Fig.\ref{crbfigures}\subref{secondbssensing}, only one parameter is varied across the curves while the other two are held constant. However, in Fig.\ref{regions}\subref{subpromlemcrb}, although $\tau_3$ is fixed in the SLB sub2 case, an increase in $\tau_1$ is accompanied by a decrease in $\tau_2$. As a result, the ergodic CRB at the sensing eavesdropper (BS) decreases (increases) due to the increase in $\tau_1$, and increases (decreases) due to the decrease in $\tau_2$. The net effect is that the ergodic CRB at the sensing eavesdropper (BS) increases overall in Fig.\ref{regions}\subref{subpromlemcrb}.
\section{Conclusion}\label{conclusion}
We analyze security and privacy in a random downlink MIMO ISAC system. A multi-antenna BS employs two different precoding schemes, namely SSJB and SLB, to simultaneously:  
1) Transmit data to a communication user,  
2) Sense a target's angle (which may be malicious), and  
3) Counter eavesdropping attempts. The system considers both external communication eaves and sensing eaves. For both SLB and SSJB schemes, we derive: ESR, the ergodic CRB, and CRB outage probability for both the BS and sensing eaves, taking into account channel randomness. Our analysis demonstrates three fundamental tradeoff in random ISAC networks: 1) Sens-
ing accuracy vs. communication rate as other traditional ISAC
works with no security and privacy constraints; 2) Sensing
accuracy vs. secure communication rate; 3) Communication security vs. sensing privacy. 
\begin{figure*}
    \centering
     \subfigure[Ergodic CRB($\theta)$ at BS/ESR\label{region1}]{\includegraphics[scale=.42]{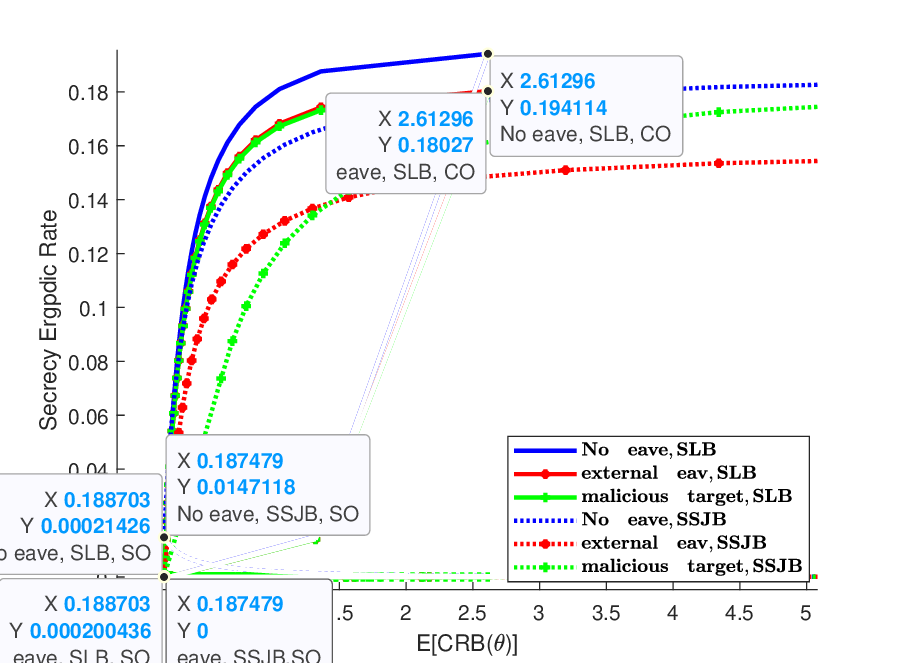}}\hspace{-0.5cm}
    \subfigure[Ergodic CRB($\phi)$ at strong sensing eav/ESR \label{region2}]{\includegraphics[scale=.42]{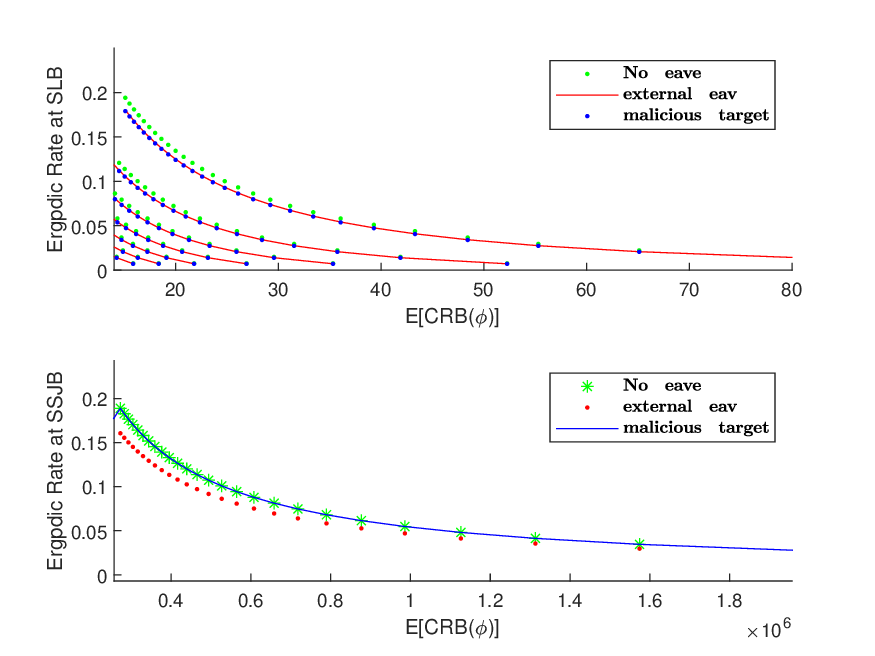}}\hspace{-0.5cm}
    \subfigure[Ergodic CRB($\phi)$ (strong)/ CRB($\theta)$ at subproblems\label{subpromlemcrb}]
    {\includegraphics[scale=.42]{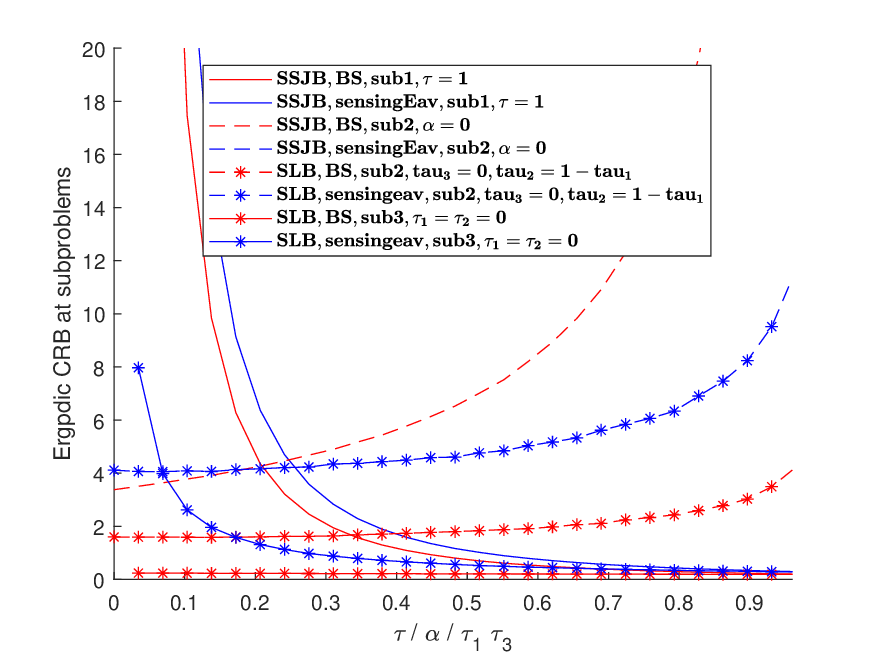}}\hspace{-0.5cm}
    \caption{trade off regions of ESR versus ergodic CRB}
    \label{regions}
\end{figure*}
\bibliographystyle{IEEEtran}
\bibliography{ref}

\begin{thebibliography}{10}
\providecommand{\url}[1]{#1}
\csname url@samestyle\endcsname
\providecommand{\newblock}{\relax}
\providecommand{\bibinfo}[2]{#2}
\providecommand{\BIBentrySTDinterwordspacing}{\spaceskip=0pt\relax}
\providecommand{\BIBentryALTinterwordstretchfactor}{4}
\providecommand{\BIBentryALTinterwordspacing}{\spaceskip=\fontdimen2\font plus
\BIBentryALTinterwordstretchfactor\fontdimen3\font minus \fontdimen4\font\relax}
\providecommand{\BIBforeignlanguage}[2]{{%
\expandafter\ifx\csname l@#1\endcsname\relax
\typeout{** WARNING: IEEEtran.bst: No hyphenation pattern has been}%
\typeout{** loaded for the language `#1'. Using the pattern for}%
\typeout{** the default language instead.}%
\else
\language=\csname l@#1\endcsname
\fi
#2}}
\providecommand{\BIBdecl}{\relax}
\BIBdecl

\bibitem{OnStochasticPerformanceAnalysisofSecureIntegratedSensingandCommunicationNetworks}
\BIBentryALTinterwordspacing
M.~Soltani, M.~Mirmohseni, and R.~Tafazolli, ``On stochastic performance analysis of secure integrated sensing and communication networks,'' 2025. [Online]. Available: \url{https://arxiv.org/abs/2504.09674}
\BIBentrySTDinterwordspacing

\bibitem{Integratedtoward}
F.~Liu, Y.~Cui, C.~Masouros, J.~Xu, T.~X. Han, Y.~C. Eldar, and S.~Buzzi, ``Integrated sensing and communications: Toward dual-functional wireless networks for 6g and beyond,'' \emph{IEEE J. Sel. Areas Commun.}, 2022.

\bibitem{Sensing-AssistedEavesdropperEstimation:AnISACBreakthroughinPhysicalLayerSecurity}
N.~Su, F.~Liu, and C.~Masouros, ``Sensing-assisted eavesdropper estimation: An {ISAC} breakthrough in physical layer security,'' \emph{IEEE Transactions on Wireless Communications}, vol.~23, no.~4, pp. 3162--3174, 2024.

\bibitem{SecuringtheSensingFunctionalityinISACNetworks}
J.~Zou, C.~Masouros, F.~Liu, and S.~Sun, ``Securing the sensing functionality in {ISAC} networks: An artificial noise design,'' \emph{IEEE Trans. Veh. Technol.}, vol.~73, no.~11, 2024.

\bibitem{OnRadarPrivacyinSharedSpectrumScenarios}
A.~Dimas, M.~A. Clark, B.~Li, K.~Psounis, and A.~P. Petropulu, ``On radar privacy in shared spectrum scenarios,'' in \emph{Proc. ICASSP 2019 - IEEE Int. Conf. Acoust., Speech, Signal Process}, 2019.

\bibitem{IntegratingSensingandCommunicationsin6G}
\BIBentryALTinterwordspacing
N.~Su, F.~Liu, J.~Zou, C.~Masouros, G.~C. Alexandropoulos, A.~Mourad, J.~L. Hernando, Q.~Zhang, and T.-T. Chan, ``Integrating sensing and communications in 6g? not until it is secure to do so,'' 2025. [Online]. Available: \url{https://arxiv.org/abs/2503.15243}
\BIBentrySTDinterwordspacing

\bibitem{PrivacyandSecurityinUbiquitousIntegrated}
K.~Qu, J.~Ye, X.~Li, and S.~Guo, ``Privacy and security in ubiquitous integrated sensing and communication: Threats, challenges and future directions,'' \emph{IEEE Internet of Things Magazine}, vol.~7, no.~4, pp. 52--58, 2024.

\bibitem{OptimalBeamformingforsecureIntegratedSensingandCommunicationExploitingTargetLocation}
K.~Hou and S.~Zhang, ``Optimal beamforming for secure integrated sensing and communication exploiting target location distribution,'' \emph{IEEE J. Sel. Areas in Commun.}, vol.~42, no.~11, 2024.

\bibitem{SecureCell-FreeIntegratedSensingandCommunicationinthePresenceofInformation}
Z.~Ren, J.~Xu, L.~Qiu, and D.~Wing Kwan~Ng, ``Secure cell-free integrated sensing and communication in the presence of information and sensing eavesdroppers,'' \emph{IEEE J. Sel. Areas in Commun.}, 2024.

\bibitem{cooperativeisacnetworksperformance}
\BIBentryALTinterwordspacing
K.~Meng, C.~Masouros, A.~P. Petropulu, and L.~Hanzo, ``Cooperative isac networks: Performance analysis, scaling laws and optimization,'' 2024. [Online]. Available: \url{https://arxiv.org/abs/2404.14514}
\BIBentrySTDinterwordspacing

\bibitem{Network-levelISAC:AnAnalyticalStudyofAntenna}
K.~Meng, K.~Han, C.~Masouros, and L.~Hanzo, ``Network-level isac: An analytical study of antenna topologies ranging from massive to cell-free mimo,'' \emph{IEEE Transactions on Wireless Communications}, pp. 1--1, 2025.

\bibitem{OnStochasticFundamentalLimitsinaDownlink}
M.~Soltani, M.~Mirmohseni, and R.~Tafazolli, ``On stochastic fundamental limits in a downlink integrated sensing and communication network,'' \emph{IEEE Transactions on Communications}, pp. 1--1, 2025.

\bibitem{PowerAllocationforMassiveMIMO-ISACSystems}
B.~Liao, H.~Q. Ngo, M.~Matthaiou, and P.~J. Smith, ``Power allocation for massive {MIMO-ISAC} systems,'' \emph{IEEE Transactions on Wireless Communications}, vol.~23, no.~10, pp. 14\,232--14\,248, 2024.

\bibitem{MultipleTargetDetectioninCellFreeMassive}
M.~Elfiatoure, M.~Mohammadi, H.~Quoc~Ngo, H.~Shin, and M.~Matthaiou, ``Multiple-target detection in cell-free massive mimo-assisted isac,'' \emph{IEEE Transactions on Wireless Communications}, vol.~24, no.~5, pp. 4283--4298, 2025.

\bibitem{MIMOintegrated}
C.~Xu and S.~Zhang, ``Mimo integrated sensing and communication exploiting prior information,'' \emph{IEEE Journal on Selected Areas in Communications}, vol.~42, no.~9, pp. 2306--2321, 2024.

\bibitem{DelvingIntoSecurityandPrivacyofJoint}
o.~G. Martins, H.~Akesson, M.~Gomes, D.~P.~M. Osorio, P.~Sen, and J.~P. Vilela, ``Delving into security and privacy of joint communication and sensing: A survey,'' \emph{IEEE Open Journal of the Communications Society}, vol.~6, pp. 4978--5004, 2025.

\bibitem{2022AnInformation-TheoreticApproachtoJointSensingandCommunication}
M.~Ahmadipour, M.~Kobayashi, M.~Wigger, and G.~Caire, ``An information-theoretic approach to joint sensing and communication,'' \emph{IEEE Trans. Inf. Theory}, vol.~70, no.~2, pp. 1124--1146, 2024.

\bibitem{aframeworkformutualinformation}
J.~Li, G.~Zhou, T.~Gong, and N.~Liu, ``A framework for mutual information-based {MIMO} integrated sensing and communication beamforming design,'' \emph{IEEE Trans. Veh. Technol.}, vol.~73, no.~6, pp. 8352--8366, 2024.

\bibitem{TowardDualfunctionalRadarCommunicationSystems}
F.~Liu, L.~Zhou, C.~Masouros, A.~Li, W.~Luo, and A.~Petropulu, ``Toward dual-functional radar-communication systems: Optimal waveform design,'' \emph{IEEE Trans. Signal Process.}, vol.~66, no.~16, pp. 4264--4279, 2018.

\bibitem{MUMIMOCommunicationsWithMIMORadar}
F.~Liu, C.~Masouros, A.~Li, H.~Sun, and L.~Hanzo, ``{MU-MIMO} communications with {MIMO} radar: From co-existence to joint transmission,'' \emph{IEEE Trans. Wireless Commun.}, vol.~17, no.~4, pp. 2755--2770, 2018.

\bibitem{JointTransmitBeamformingforMultiuser}
X.~Liu, T.~Huang, Y.~Liu, and J.~Zhou, ``Joint transmit beamforming for multiuser {MIMO} communication and {MIMO} radar,'' in \emph{Proc. 2019 IEEE Int. Conf. Signal, Inf. Data Process. (ICSIDP)}, 2019, pp. 1--6.

\bibitem{optimaltransmitbeamformingintegrated}
H.~Hua, J.~Xu, and T.~X. Han, ``Optimal transmit beamforming for integrated sensing and communication,'' \emph{IEEE Trans. Veh. Technol.}, vol.~72, no.~8, pp. 10\,588--10\,603, 2023.

\bibitem{CrameRaoBoundOptimizationforJoint}
F.~Liu, Y.-F. Liu, A.~Li, C.~Masouros, and Y.~C. Eldar, ``Cramér-rao bound optimization for joint radar-communication beamforming,'' \emph{IEEE Trans. Signal Process.}, vol.~70, pp. 240--253, 2022.

\bibitem{fromtorchtoprojector}
Y.~Xiong, F.~Liu, K.~Wan, W.~Yuan, C.~Yuanhao, and G.~Caire, ``From torch to projector: Fundamental tradeoff of integrated sensing and communications,'' \emph{IEEE BITS Inf. Theory Mag.}, vol.~PP, pp. 1--13, 01 2024.

\bibitem{MIMOIntegratedSensingandCommunicationCRBRateTradeoff}
\BIBentryALTinterwordspacing
H.~Hua, T.~X. Han, and J.~Xu, ``{MIMO} integrated sensing and communication: {CRB}-rate tradeoff,'' 2022. [Online]. Available: \url{https://arxiv.org/abs/2209.12721}
\BIBentrySTDinterwordspacing

\bibitem{fundamentalcrbratetradeoffmultiantenna}
Z.~Ren, Y.~Peng, X.~Song, Y.~Fang, L.~Qiu, L.~Liu, D.~W.~K. Ng, and J.~Xu, ``Fundamental {CRB}-rate tradeoff in multi-antenna {ISAC} systems with information multicasting and multi-target sensing,'' \emph{IEEE Trans. Wireless Commun.}, vol.~23, no.~4, pp. 3870--3885, 2024.

\bibitem{OnthePerformanceofUplinkandDownlink}
M.~Liu, M.~Yang, and A.~Nallanathan, ``On the performance of uplink and downlink integrated sensing and communication systems,'' in \emph{Proc. 2022 IEEE Globecom Workshops (GC Wkshps)}, 2022, pp. 1236--1241.

\bibitem{PerformanceAnalysisandPowerAllocationforCooperative}
M.~Liu, M.~Yang, H.~Li, K.~Zeng, Z.~Zhang, A.~Nallanathan, G.~Wang, and L.~Hanzo, ``Performance analysis and power allocation for cooperative {ISAC} networks,'' \emph{IEEE Internet Things J.}, vol.~10, no.~7, pp. 6336--6351, 2023.

\bibitem{NOMAISACPerformanceAnalysisandRateRegion}
\BIBentryALTinterwordspacing
C.~Ouyang, Y.~Liu, and H.~Yang, ``{NOMA}-{ISAC}: Performance analysis and rate region characterization,'' 2022. [Online]. Available: \url{https://arxiv.org/abs/2205.13756}
\BIBentrySTDinterwordspacing

\bibitem{Aunifiedperformanceframeworkfor}
M.~Al-Jarrah, E.~Alsusa, and C.~Masouros, ``A unified performance framework for integrated sensing-communications based on kl-divergence,'' \emph{IEEE Trans. Wireless Commun.}, vol.~22, no.~12, pp. 9390--9411, 2023.

\bibitem{MIMOISACPerformanceAnalysis}
\BIBentryALTinterwordspacing
C.~Ouyang, Y.~Liu, and H.~Yang, ``{MIMO}-{ISAC}: Performance analysis and rate region characterization,'' 2023. [Online]. Available: \url{https://arxiv.org/abs/2209.01028}
\BIBentrySTDinterwordspacing

\bibitem{PerformanceAnalysioftheFullDuplexJoint}
Y.~Guo, C.~Li, C.~Zhang, Y.~Yao, and B.~Xia, ``Performance analysis of the full-duplex joint radar and communication system,'' in \emph{Proc. 2021 IEEE/CIC Int. Conf. Commun. China (ICCC)}, 2021, pp. 505--510.

\bibitem{onthefundementaltradeoff}
Y.~Xiong, F.~Liu, Y.~Cui, W.~Yuan, T.~X. Han, and G.~Caire, ``On the fundamental tradeoff of integrated sensing and communications under {G}aussian channels,'' \emph{IEEE Trans. Inf. Theory}, vol.~69, no.~9, pp. 5723--5751, 2023.

\bibitem{networklevelintegratedsensingcommunication}
K.~Meng, C.~Masouros, G.~Chen, and F.~Liu, ``Network-level integrated sensing and communication: Interference management and {BS} coordination using stochastic geometry,'' \emph{IEEE Trans. Wireless Commun.}, vol.~23, no.~12, pp. 19\,365--19\,381, 2024.

\bibitem{coverageandrateofjointcommunication}
N.~R. Olson, J.~G. Andrews, and R.~W. Heath, ``Coverage and rate of joint communication and parameter estimation in wireless networks,'' \emph{IEEE Trans. Inf. Theory}, vol.~70, no.~1, pp. 206--243, 2024.

\bibitem{PerformanceofDownlinkandUplinkIntegratedSensing}
C.~Ouyang, Y.~Liu, and H.~Yang, ``Performance of downlink and uplink integrated sensing and communications ({ISAC}) systems,'' \emph{IEEE Wireless Commun. Lett.}, vol.~11, no.~9, pp. 1850--1854, 2022.

\bibitem{ISACNetworkPlanning:SensingCoverageAnalysis}
\BIBentryALTinterwordspacing
K.~Meng, K.~Han, C.~Masouros, and L.~Hanzo, ``Isac network planning: Sensing coverage analysis and 3-d bs deployment optimization,'' 2025. [Online]. Available: \url{https://arxiv.org/abs/2506.18009}
\BIBentrySTDinterwordspacing

\bibitem{PerformanceAnalysisofCooperativeIntegratedSensing}
\BIBentryALTinterwordspacing
D.~Sui, C.~Pan, H.~Ren, J.~Wan, L.~Zhuo, J.~Jin, Q.~Wang, and J.~Wang, ``Performance analysis of cooperative integrated sensing and communications for 6g networks,'' 2025. [Online]. Available: \url{https://arxiv.org/abs/2505.08221}
\BIBentrySTDinterwordspacing

\bibitem{OntheCoverageProbabilityinIntegratedSensingandCommunicationNetworkswithMulti-SOsInterference}
\BIBentryALTinterwordspacing
Y.~Dong, H.~Hu, Y.~Gao, Q.~Chen, and J.~Zhang, ``On the coverage probability in integrated sensing and communication networks with multi-sos interference,'' Feb. 2025. [Online]. Available: \url{http://dx.doi.org/10.36227/techrxiv.174002474.47942661/v1}
\BIBentrySTDinterwordspacing

\bibitem{SecureIntegratedSensingandCommunication}
O.~Günlü, M.~R. Bloch, R.~F. Schaefer, and A.~Yener, ``Secure integrated sensing and communication,'' \emph{IEEE J. Sel. Areas Inf. Theory}, 2023.

\bibitem{SecureDual-FunctionalRadar-CommunicationTransmission:ExploitingInterference}
N.~Su, F.~Liu, Z.~Wei, Y.-F. Liu, and C.~Masouros, ``Secure dual-functional radar-communication transmission: Exploiting interference for resilience against target eavesdropping,'' \emph{IEEE Trans. Wirel. Commun.}, vol.~21, no.~9, 2022.

\bibitem{JointSecureTransmitBeamformingDesignsforIntegratedSensingandCommunicationSystems}
J.~Chu, R.~Liu, M.~Li, Y.~Liu, and Q.~Liu, ``Joint secure transmit beamforming designs for integrated sensing and communication systems,'' \emph{IEEE Trans. Veh. Technol}, vol.~72, no.~4, 2023.

\bibitem{SecurePrecodingOptimizationforNOMA-AidedIntegratedsensingandCommunication}
Z.~Yang, D.~Li, N.~Zhao, Z.~Wu, Y.~Li, and D.~Niyato, ``Secure precoding optimization for noma-aided integrated sensing and communication,'' \emph{IEEE Transactions on Communications}, vol.~70, no.~12, pp. 8370--8382, 2022.

\bibitem{JointPrecodingandArtificialNoiseDesignforSecure}
R.~Yang and H.~Du, ``Joint precoding and artificial noise design for secure transmission in isac system,'' in \emph{2023 IEEE 24th International Workshop on Signal Processing Advances in Wireless Communications (SPAWC)}, 2023, pp. 16--20.

\bibitem{JointBeamformingDesignforDual-FunctionalMIMORadarandCommunicationSystemsGuaranteeingPhysicalLayerSecurity}
F.~Dong, W.~Wang, X.~Li, F.~Liu, S.~Chen, and L.~Hanzo, ``Joint beamforming design for dual-functional mimo radar and communication systems guaranteeing physical layer security,'' \emph{IEEE Transactions on Green Communications and Networking}, vol.~7, no.~1, pp. 537--549, 2023.

\bibitem{RobustTransmitBeamformingforSecureIntegratedSensingandCommunication}
Z.~Ren, L.~Qiu, J.~Xu, and D.~W.~K. Ng, ``Robust transmit beamforming for secure integrated sensing and communication,'' \emph{IEEE Transactions on Communications}, vol.~71, no.~9, pp. 5549--5564, 2023.

\bibitem{SecureRadar-CommunicationSystemsWithMaliciousTargets:IntegratingRadar}
N.~Su, F.~Liu, and C.~Masouros, ``Secure radar-communication systems with malicious targets: Integrating radar, communications and jamming functionalities,'' \emph{IEEE Transactions on Wireless Communications}, vol.~20, no.~1, pp. 83--95, 2021.

\bibitem{optimaltransmitbeamformingforsecrecy}
Z.~Ren, L.~Qiu, and J.~Xu, ``Optimal transmit beamforming for secrecy integrated sensing and communication,'' in \emph{ICC 2022 - IEEE International Conference on Communications}, 2022, pp. 5555--5560.

\bibitem{PhysicalLayerSecurityOptimizationWithCramér–RaoBoundMetric}
H.~Jia, X.~Li, and L.~Ma, ``Physical layer security optimization with cramér–rao bound metric in {ISAC} systems under sensing-specific imperfect {CSI} model,'' \emph{IEEE Trans. Veh. Technol.}, vol.~73, no.~5, 2024.

\bibitem{SecureISACMIMOSystems:ExploitingInterferenceWith}
\BIBentryALTinterwordspacing
N.~Su, F.~Liu, C.~Masouros, G.~C. Alexandropoulos, Y.~Xiong, and Q.~Zhang, ``Secure isac mimo systems: Exploiting interference with bayesian cram\'er-rao bound optimization,'' 2024. [Online]. Available: \url{https://arxiv.org/abs/2401.16778}
\BIBentrySTDinterwordspacing

\bibitem{ANAssistedSecureISACBeamforming:CounteringCovertEavesdropper}
X.~Nan, R.~Yao, W.~Su, Y.~Fan, X.~Zuo, W.~Li, and M.~Ibraimov, ``An assisted secure isac beamforming: Countering covert eavesdropper,'' \emph{IEEE Wireless Communications Letters}, pp. 1--1, 2025.

\bibitem{RIS-basedPhysicalLayerSecurityforIntegratedSensingandCommunication:AComprehensiveSurvey}
\BIBentryALTinterwordspacing
Y.~Li, F.~Khan, M.~Ahmed, A.~A. Soofi, W.~U. Khan, C.~K. Sheemar, M.~Asif, and Z.~Han, ``Ris-based physical layer security for integrated sensing and communication: A comprehensive survey,'' 2025. [Online]. Available: \url{https://arxiv.org/abs/2503.17721}
\BIBentrySTDinterwordspacing

\bibitem{CovertBeamformingDesignforIntegratedRadarSensingandCommunication}
S.~Ma, H.~Sheng, R.~Yang, H.~Li, Y.~Wu, C.~Shen, N.~Al-Dhahir, and S.~Li, ``Covert beamforming design for integrated radar sensing and communication systems,'' \emph{IEEE Transactions on Wireless Communications}, vol.~22, no.~1, pp. 718--731, 2023.

\bibitem{RobustBeamformingDesignforCovertIntegratedSensingandCommunication}
H.~Jia, L.~Ma, and D.~Qin, ``Robust beamforming design for covert integrated sensing and communication in the presence of multiple wardens,'' \emph{IEEE Trans. Veh. Technol.}, vol.~73, no.~11, 2024.

\bibitem{SensingforSecureCommunicationinISAC:ProtocolDesignand}
Y.~Cao, L.~Duan, and R.~Zhang, ``Sensing for secure communication in isac: Protocol design and beamforming optimization,'' \emph{IEEE Transactions on Wireless Communications}, vol.~24, no.~2, pp. 1207--1220, 2025.

\bibitem{BeamformingDesignforSecureMC-NOMAEmpowered}
Z.~Wu, X.~Li, Y.~Cai, and W.~Yuan, ``Beamforming design for secure mc-noma empowered isac systems with an active eve,'' in \emph{2024 IEEE/CIC International Conference on Communications in China (ICCC Workshops)}, 2024, pp. 108--113.

\bibitem{IllegalSensingSuppressionforIntegratedSensingandCommunicationSystem}
H.~Jia, R.~Zhu, A.~Sciarrone, and L.~Ma, ``Illegal sensing suppression for integrated sensing and communication system,'' \emph{IEEE Internet Things J.}, 2024.

\bibitem{privacyperformanceofmimodulafunctional}
\BIBentryALTinterwordspacing
I.~W.~G. da~Silva, D.~P.~M. Osorio, and M.~Juntti, ``Privacy performance of mimo dual-functional radar-communications with internal adversary,'' 2023. [Online]. Available: \url{https://arxiv.org/abs/2302.06253}
\BIBentrySTDinterwordspacing

\bibitem{MultiStaticISACinCellFree}
I.~W.~G. Da~Silva, D.~P.~M. Osorio, and M.~Juntti, ``Multi-static isac in cell-free massive mimo: Precoder design and privacy assessment,'' in \emph{2023 IEEE Globecom Workshops (GC Wkshps)}, 2023, pp. 461--466.

\bibitem{SecureTransmissionWithArtificialNoiseOverFadingChannels}
X.~Zhou and M.~R. McKay, ``Secure transmission with artificial noise over fading channels: Achievable rate and optimal power allocation,'' \emph{IEEE Trans. Veh. Technol.}, 2010.

\bibitem{Fundamentalsofstatisticalsignalprocessing}
S.~M. Kay, \emph{Fundamentals of statistical signal processing: estimation theory}.\hskip 1em plus 0.5em minus 0.4em\relax USA: Prentice-Hall, Inc., 1993.

\bibitem{cramerraoboundonaerospaceandelectronicsystems}
J.~He, D.~K.~C. Ho, W.~Xiong, H.~C. So, and Y.~J. Chun, ``Cramér–rao lower bound analysis for elliptic localization with random sensor placements,'' \emph{IEEE Trans. Aerosp. Electron. Syst.}, 2024.

\bibitem{recentinsightsinthebayesiananddeterministic}
S.-M. Omar, D.~Slock, and O.~Bazzi, ``Recent insights in the bayesian and deterministic {CRB} for blind {SIMO} channel estimation,'' in \emph{Proc. 2012 IEEE Int. Conf. Acoust., Speech, Signal Process. (ICASSP)}, 2012, pp. 3549--3552.

\bibitem{Artificial-Noise-AidedBeamformingDesignintheMISOME}
B.~Wang, P.~Mu, and Z.~Li, ``Artificial-noise-aided beamforming design in the {MISOME} wiretap channel under the secrecy outage probability constraint,'' \emph{IEEE Trans. Wirel. Commun.}, 2017.

\bibitem{SecureTransmissionWithMultipleAntennasI}
A.~Khisti and G.~W. Wornell, ``Secure transmission with multiple antennas i: The {MISOME} wiretap channel,'' \emph{IEEE Trans. Inf. Theory}, 2010.

\bibitem{SecrecyOutageinMISOSystemsWithPartialChannelinformation}
S.~Gerbracht, C.~Scheunert, and E.~A. Jorswieck, ``Secrecy outage in {MISO} systems with partial channel information,'' \emph{IEEE Trans. Inf. Forensics Secur.}, 2012.

\bibitem{Amethodtointegrate}
\BIBentryALTinterwordspacing
A.~Das and W.~S. Geisler, ``A method to integrate and classify normal distributions,'' \emph{Journal of Vision}, vol.~21, no.~10, pp. 1--1, 09 2021. [Online]. Available: \url{https://doi.org/10.1167/jov.21.10.1}
\BIBentrySTDinterwordspacing

\bibitem{rangecompressionandwaveformoptimization}
J.~Li, L.~Xu, P.~Stoica, K.~W. Forsythe, and D.~W. Bliss, ``Range compression and waveform optimization for mimo radar: A cramÉr–rao bound based study,'' \emph{IEEE Transactions on Signal Processing}, vol.~56, no.~1, pp. 218--232, 2008.

\bibitem{enwiki:1155685628}
\BIBentryALTinterwordspacing
{Wikipedia contributors}, ``Central limit theorem --- {Wikipedia}{,} the free encyclopedia,'' 2023, [Online; accessed 19-May-2023]. [Online]. Available: \url{https://en.wikipedia.org/w/index.php?title=Central_limit_theorem&oldid=1155685628}
\BIBentrySTDinterwordspacing

\end{thebibliography}
\appendices
\section{Proof of Lemma \ref{privacy}}\label{privacyproof}
To solve the optimization problem (\ref{optimization}) where $\text{CRB}(\theta)$ is given at (\ref{crbnew1}) and $\text{CRB}(\phi)$ is derived at subsection \ref{opoftargeteav}, we follow the same approach as \cite[Appendix C]{rangecompressionandwaveformoptimization}. Let $\mathbf{U}=[\frac{\mathbf{a}}{||\mathbf{a}||}, \frac{\mathbf{a}'}{||\mathbf{a}'||}]$ and $\mathbf{R}_{x} = \boldsymbol{\Delta} \boldsymbol{\Delta}^H.$ Decompose $\boldsymbol{\Delta}$ as $\boldsymbol{\Delta} = \mathbf{P}_U \boldsymbol{\Delta} + \mathbf{P}_U^{\perp} \boldsymbol{\Delta}$, where $\mathbf{P}_U$ denotes the orthogonal projection onto the subspace spanned by the columns of $\mathbf{U}$, and $\mathbf{P}_U^{\perp} = \mathbf{I} - \mathbf{P}_U$, with $\mathbf{I}$ denoting the identity matrix. Therefore, we have $\mathbf{R}_{x}= \mathbf{P}_U \boldsymbol{\Delta} \boldsymbol{\Delta}^H \mathbf{P}_U + \hat{\mathbf{R}}_{x}$
where $\hat{\mathbf{R}}_{x} = \mathbf{P}_U^{\perp} \boldsymbol{\Delta} \boldsymbol{\Delta}^H \mathbf{P}_U^{\perp} + \mathbf{P}_U \boldsymbol{\Delta} \boldsymbol{\Delta}^H \mathbf{P}_U^{\perp} + \mathbf{P}_U^{\perp} \boldsymbol{\Delta} \boldsymbol{\Delta}^H \mathbf{P}_U.$ Moreover, due to the definition of $\mathbf{P}_U $, we have: $ \mathbf{a}^H(\theta) \hat{\mathbf{R}}_{x} \mathbf{a}(\theta) = {\mathbf{a}'}^H(\theta) \hat{\mathbf{R}}_{x} {\mathbf{a}'}(\theta) = {\mathbf{a}'}^H(\theta) \hat{\mathbf{R}}_{x} \mathbf{a}(\theta) ={\mathbf{a}}^H(\theta) \hat{\mathbf{R}}_{x} \mathbf{a}'(\theta) =0.$ Thus, $\hat{\mathbf{R}}_{x}$ contributes neither to $\text{CRB}(\theta)$ nor to $\text{CRB}(\phi)$, since $\mathbf{R}_x$ appears in $\text{CRB}(\theta)$ and $\text{CRB}(\phi)$ only through the terms $\mathbf{a}^H(\theta) \mathbf{R}_x \mathbf{a}(\theta)$, ${\mathbf{a}'}^H(\theta) \mathbf{R}_x {\mathbf{a}'}(\theta)$, ${\mathbf{a}'}^H(\theta) \mathbf{R}_x \mathbf{a}(\theta)$, and $\mathbf{a}^H(\theta) \mathbf{R}_x \mathbf{a}'(\theta)$. Therefore, $\hat{\mathbf{R}}_x$ does not contribute to the objective function in (\ref{optimization}). Next, note that $\mathrm{tr}(\hat{\mathbf{R}}_{x}) = \mathrm{tr}\left( \mathbf{P}_U^{\perp} \boldsymbol{\Delta} \boldsymbol{\Delta}^H \mathbf{P}_U^{\perp} + \mathbf{P}_U \boldsymbol{\Delta} \boldsymbol{\Delta}^H \mathbf{P}_U^{\perp} + \mathbf{P}_U^{\perp} \boldsymbol{\Delta} \boldsymbol{\Delta}^H \mathbf{P}_U \right) = \mathrm{tr}\left( \mathbf{P}_U^{\perp} \boldsymbol{\Delta} \boldsymbol{\Delta}^H \mathbf{P}_U^{\perp} \right) = \left\| \boldsymbol{\Delta}^H \mathbf{P}_U^{\perp} \right\|_F^2 \geq 0$, where $\|\cdot\|_F$ denotes the Frobenius matrix norm. The equality holds if and only if $\boldsymbol{\Delta}^H \mathbf{P}_U^{\perp} = 0$, which implies $\hat{\mathbf{R}}_x = 0$. Hence, we have proved that, while the objective function does not depend on $(\hat{\mathbf{R}}_{x})$, any $\hat{\mathbf{R}}_{x} \neq 0$ will increase $\mathrm{tr}(\mathbf{R}_{x})$ compared with the case of $\hat{\mathbf{R}}_{x} = 0$. The conclusion is that under the total power constraint, i.e., $\mathrm{tr}(\mathbf{R}_{x}) =p_t$, we necessarily must have $\mathrm{tr}(\hat{\mathbf{R}}_{x}) = 0$, i.e., $\hat{\mathbf{R}}_{x} = 0$. Hence, $
\mathbf{R}_{x} = \mathbf{P}_U \boldsymbol{\Delta} \boldsymbol{\Delta}^H \mathbf{P}_U \triangleq \mathbf{U} \boldsymbol{\Lambda} \mathbf{U}^H,
$ where $\boldsymbol{\Lambda}$ is a positive semi-definite $2 \times 2$ matrix. Thus, $\mathbf{X}$ is a linear combination of the columns of $\mathbf{U}$, i.e., $\frac{\mathbf{a}}{||\mathbf{a}||}$ and $\frac{\mathbf{a}'}{||\mathbf{a}'||}$.
\section{Proof of Lemma \ref{exactderivationlemma}}\label{proofexactderivationlemma}
We first prove the simplified expression of $\text{CRB}(\theta)$, and then provide the proofs for the lower bound and the approximation of $P(\text{CRB}(\theta) > \epsilon)$, respectively. By substituting (\ref{Rx}) into (\ref{crbnew1}), and noting that $\mathbf{a}^H\mathbf{t}_1\mathbf{t}_1^H\mathbf{a}'$ is a scalar, along with the orthogonality between $\mathbf{a}$ and $\mathbf{a}'$ and the condition $\mathbf{a}^H\mathbf{t}_i = 0$ for $i = 3, \dots, N$, we obtain—after some algebraic manipulation—the following expression: $\text{CRB}(\theta) = \frac{Q}{\gamma_1 |\mathbf{b}'|^2 |\mathbf{a}^H \mathbf{t}1|^2 + \gamma_2 |\mathbf{b}|^2 \sum{i=3}^{N} |\mathbf{a}'^H \mathbf{t}_i|^2}.$ Moreover by replacing $\mathbf{t}_1= \alpha \tilde{\mathbf{a}} + \beta \tilde{\mathbf{h}}$, we have: $|\mathbf{a}^H\mathbf{t}_1|^2=|\alpha|^2N$ where we have used the orthogonality between $\mathbf{a}$ and $\tilde{\mathbf{h}}$ and $|\mathbf{a}|^2=N$. furthermore, we have:$\sum_{i=3}^{N}|\mathbf{a}'^H\mathbf{t}_i|^2=\mathbf{a}'^H|(\mathbf{I}-\tilde{\mathbf{a}}\tilde{\mathbf{a}}^H-\tilde{\mathbf{h}}\tilde{\mathbf{h}}^H)|\mathbf{a}$ since the matrix $\begin{bmatrix} \mathbf{\tilde{a}} & \mathbf{\tilde{h}} & \mathbf{G} \end{bmatrix},$ where \(\mathbf{G}=\begin{bmatrix} \mathbf{t}_3 & ... & \mathbf{t}_N \end{bmatrix}\), constructs an orthonormal basis for the \(N\)-dimensional space and is unitary. after some mathematical operation and replacing the definition of $\mathbf{\tilde{a}}$ and $\mathbf{\tilde{h}}$ and noting that $\mathbf{a}'(\theta)^H\mathbf{a}(\theta)=0$, we have $\sum_{i=3}^{N}|\mathbf{a}'^H\mathbf{t}_i|^2=||\mathbf{a}'||^2-\frac{|\mathbf{a}'^H\mathbf{h}|^2}{||\mathbf{h}||^2-\frac{1}{N}|\mathbf{a}^H\mathbf{h}|^2}$. Thus, by noting that $||\mathbf{b}||^2=M$ we have 
$\text{CRB}(\theta)=\frac{Q}{\gamma_1||\mathbf{b}'||^2N|\alpha|^2+\gamma_2M(||\mathbf{a}'||^2-\frac{|\mathbf{a}'^H\mathbf{h}|^2}{||\mathbf{h}||^2-\frac{1}{N}|\mathbf{a}^H\mathbf{h}|^2}))}$. By rewriting $\mathbf{a}$ and $\mathbf{h}$ element wise and after some operation the proof is complete. 

\textbf{Proof of the lower bound:} If we ignore the term $\big((\sum_{i=1}^{N}-f'_i\hat{t}_i)^2+(\sum_{i=1}^{N}f'_i\hat{r}_i)^2\big)$ at the denominator of CRB($\theta$), the the total CRB will be smaller. This is due to Cauchy–Schwarz inequality for $N$-dimensional complex space, we have: $|\sum_{i=1}^{N}\!e^{jf_i}{h}_i|^2=R^2+T^2<(\sum_{i=1}^{N}\!|e^{jf_i}|^2)(\sum_{i=1}^{N}\!|{h}_i|^2)\nonumber=N(\sum_{i=1}^{N}\!|h_i|^2)=NK,$ thus the term $\frac{(\sum_{i=1}^{N}-f'_i{t}_i)^2+(\sum_{i=1}^{N}f'_i{r}_i)^2}{K-\frac{1}{N}(T^2+R^2)}$ is positive and by ignore the term $\big((\sum_{i=1}^{N}-f'_i\hat{t}_i)^2+(\sum_{i=1}^{N}f'_i\hat{r}_i)^2\big)$ the the denominator of CRB($\theta$) will be bigger. Thus, we obtain a lower bound on CRB, named as LCRB, which is $\frac{6\sigma^2_R}{\cos^2(\theta)MN \pi ^2LP \mid c_3 \mid ^2\big(|\alpha|^2\tau(M^2-1)+\frac{(N^2-1)(1-\tau)}{N-2}\big)}$. Thus, $P_c(\epsilon)=P(\text{CRB}(\theta)>\epsilon)>P(\text{LCRB}(\theta)\!\!>\epsilon)\triangleq P_{Lc}(\epsilon)$ and $P_{Lc}$ provides a lower bound on the OP of the target.
Since $\theta$ is uniformly distributed in the interval $[-\pi/2,\pi/2]$, after calculating the CDF of $\cos^2(\theta)$, the proof is complete.

\textbf{Proof of the approximation:} When $N$ is large, due to the law of large numbers, we have:
$(\sum_{i=1}^{N}-f'_i{t}_i)^2+(\sum_{i=1}^{N}f'_i{r}_i)^2\overset{(p)}{\rightarrow} E\{(\sum_{i=1}^{N}-f'_i{t}_i)^2+(\sum_{i=1}^{N}f'_i{r}_i)^2 \}.$ Thus, by replacing $(\sum_{i=1}^{N}-f'_i{t}_i)^2+(\sum_{i=1}^{N}f'_i{r}_i)^2\big)$ in the denominator of CRB($\theta$) with $E\{(\sum_{i=1}^{N}-f'_i{r}_i)^2+(\sum_{i=1}^{N}f'_i{t}_i)^2 \}\overset{(a)}=\sum_{i=1}^{N}(f'_i)^2E\{{{k_i}}\}=\sum_{i=1}^{N}(f'_i)^2$, where (a) is due to independence of ${t}_i$ with ${t}_j$ and ${r}_i$ with ${r}_j$, we will have an approximation of CRB($\theta$) named ACRB.
Thus, we have: $P_c(\epsilon)\!=\!P(\text{CRB}(\theta)\!>\!\epsilon)\!\approx\!\! P(\text{ACRB}(\theta)\!>\!\epsilon)\!\!\triangleq\! P_{Ac}(\epsilon)$. Since \( \theta \) is uniformly distributed in the interval \([-\pi/2,\pi/2]\), $||\mathbf{a}'||^2=(\sum_{i=1}^{N}\!|f'_i|^2)$, and conditioning on \( \theta \) while noting that \( \theta \) and \( \mathbf{h} \) are independent, we obtain: $P_{Ac}(\epsilon)\!\!=\!\!\!\int_{-\pi/2}^{\pi/2}\!\!\!\!\!\!P(\frac{\frac{6\sigma_R^2}{L|c_3|^2\pi^2\cos^2(\theta)MN}}{\gamma_1|\alpha|^2(M^2-1)+\gamma_2(N^2-1)(1-\frac{1}{K-\frac{1}{N}(T^2+R^2)}))}>\epsilon)|\theta)f_{\theta}(\theta)d\theta$, where given \( \theta \), the only random variables are \( R \), \( T \), and \( K \). Thus, to compute the inner probability, we need to derive the joint probability density function (PDF) of \( R \), \( T \), and \( K \). These variables (as well as \( r_i \), \( t_i \), and \( k_i \)) are not independent since they are functions of \( |h_i| \) (which follows a Rayleigh distribution), \( \tilde{\phi}_i \) (which is uniformly distributed over \( [0, 2\pi) \)), and \( f_i \) (a function of \( \theta \)).  
By conditioning on \( \theta \), \( f_i \), for all \( i=1,...,N \), is treated as constant in the following derivations. We define \( N \) random vectors, \( \mathbf{d}_i=[r_i, t_i, k_i]^T \in \mathbb{R}^{3 \times 1} \), for \( i=1,...,N \). For any pair of \( j \) and \( i\neq j \), the vectors \( \mathbf{d}_j \) and \( \mathbf{d}_i \) are i.i.d. because \( h_i \)'s and \( \tilde{\phi}_i \)'s are i.i.d.  
Thus, by applying the multidimensional central limit theorem (CLT)\cite{enwiki:1155685628}, when \( N \) is large \footnote{Section \ref{simulations} shows that for \( N>9 \), the multidimensional CLT holds. Additionally, in \cite{Aunifiedperformanceframeworkfor}, the authors demonstrate via simulations that CLT accuracy for a one-dimensional random variable holds for \( N>8 \).}, we have: $\sqrt{N}[\frac{1}{N}(\sum_{i=1}^{N}\mathbf{d}_i)-\mathbf{\mu_d}]\overset{d}{\rightarrow} \mathcal{N}_3(\mathbf{0},\mathbf{\Sigma_d}),$ which implies $\sum_{i=1}^{N}\mathbf{d}_i\overset{d} \rightarrow \mathcal{N}_3(N\mathbf{\mu_d}, N\mathbf{\Sigma_d}),$ where \( \mathbf{\mu_d} \) and \( \mathbf{\Sigma_d} \) are the mean vector and covariance matrix of \( \mathbf{d}_i \), respectively (identical for all \( i=1,...,N \)). Therefore, $[R, T, K]^T\overset{(d)}{\rightarrow} \mathcal{N}_3(N\mathbf{\mu_d},N\mathbf{\Sigma_d}).$\footnote{Monte Carlo simulations confirm the numerical approximation.}  
By determining \( \mathbf{\mu_d} \) and \( \mathbf{\Sigma_d} \) using Lemma \ref{lemma1i}, we derive the joint PDF of \( R \), \( T \), and \( K \).  
\begin{lemma} \label{lemma1i}
\(\mathbf{\mu_d}\) and \(\mathbf{\Sigma_d}\) are given by: $\mathbf{\mu}_d=[0, 0, 1]^T$ and  
\begin{align}  
\mathbf{\Sigma}_d&=\begin{bmatrix}  
\frac{1}{2}& 0 & 0 \\  
0& \frac{1}{2} & 0 \\  
0& 0 & 1   
\end{bmatrix}.\label{rtk}  
\end{align}
\end{lemma}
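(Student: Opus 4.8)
The plan is to exploit the circular symmetry of the standard complex Gaussian, which reduces the whole statement to elementary scalar moment computations. Throughout, the computation is carried out conditionally on $\theta$, so that $f_i$ is a fixed real constant and $e^{jf_i}$ acts as a deterministic unit‑modulus rotation. Since $h_i\sim\mathcal{CN}(0,1)$ is circularly symmetric, $g_i:=e^{jf_i}h_i$ has the same distribution, namely $\mathcal{CN}(0,1)$; hence $r_i=\mathcal{R}(g_i)$ and $t_i=\mathcal{I}(g_i)$ are independent $\mathcal{N}(0,\tfrac12)$ random variables, and $k_i=|h_i|^2=|g_i|^2=r_i^2+t_i^2$ is exponentially distributed with unit mean. (Equivalently, one can argue directly: writing $r_i=|h_i|\cos\psi_i$, $t_i=|h_i|\sin\psi_i$ with $\psi_i:=f_i+\tilde\phi_i\bmod 2\pi\sim\mathcal{U}(0,2\pi)$ independent of $|h_i|$, and using $E[\cos\psi_i]=E[\sin\psi_i]=E[\cos\psi_i\sin\psi_i]=0$ and $E[\cos^2\psi_i]=E[\sin^2\psi_i]=\tfrac12$ together with $E[|h_i|^2]=1$, $E[|h_i|^4]=2$.)

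Given this reduction, the mean vector is immediate: $E[r_i]=E[t_i]=0$ and $E[k_i]=E[|h_i|^2]=1$, so $\mathbf{\mu}_d=[0,0,1]^T$. For the diagonal of $\mathbf{\Sigma}_d$, the variances of the two Gaussian coordinates are $\mathrm{Var}(r_i)=\mathrm{Var}(t_i)=\tfrac12$, and $\mathrm{Var}(k_i)=1$ since $k_i$ is a unit‑mean exponential random variable.

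The off‑diagonal entries require a short symmetry argument. First, $\mathrm{Cov}(r_i,t_i)=E[r_i t_i]=0$ by independence of the real and imaginary parts of $g_i$. Next, $\mathrm{Cov}(r_i,k_i)=E\!\left[r_i(r_i^2+t_i^2)\right]-E[r_i]E[k_i]=E[r_i^3]+E[r_i]E[t_i^2]$, and both terms vanish because $r_i$ is symmetric about $0$ (so $E[r_i^3]=0$) and $E[r_i]=0$; the identical reasoning gives $\mathrm{Cov}(t_i,k_i)=0$. Collecting these gives $\mathbf{\Sigma}_d=\mathrm{diag}\!\left(\tfrac12,\tfrac12,1\right)$, matching \eqref{rtk}.

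There is essentially no hard step here: the only point worth stating carefully is that the computation is conditional on $\theta$, and that the resulting $\mathbf{\mu}_d,\mathbf{\Sigma}_d$ turn out to be free of $f_i$ — hence free of $\theta$ — which is precisely the independence of $R,T,K$ from $\theta$ invoked in the proofs of Lemma~\ref{exactderivationlemma} and Lemma~\ref{crbsensingeav2}. If anything is a potential pitfall, it is only bookkeeping: making sure $k_i$ is expressed as $r_i^2+t_i^2$ (so that the third‑moment cancellations apply) and keeping the normalization $E[|h_i|^2]=1$ consistent throughout.
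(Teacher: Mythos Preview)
Your proof is correct. Your primary route --- invoking the rotational invariance $e^{jf_i}h_i\stackrel{d}{=}h_i$ for $h_i\sim\mathcal{CN}(0,1)$, so that $(r_i,t_i)$ are immediately independent $\mathcal{N}(0,\tfrac12)$ and $k_i=r_i^2+t_i^2$ --- is a slightly more conceptual shortcut than the paper's direct polar-coordinate computation, which instead writes $r_i=|h_i|\cos(f_i+\tilde\phi_i)$, $t_i=|h_i|\sin(f_i+\tilde\phi_i)$ and evaluates the required second moments using the product-to-sum identities and $E[|h_i|^4]=2$. You actually sketch that very alternative in your parenthetical remark, so the two arguments coincide in substance; what your framing buys is that the $f_i$-independence (hence $\theta$-independence) of $\boldsymbol{\mu}_d,\boldsymbol{\Sigma}_d$ is immediate from the distributional equality, rather than emerging term-by-term after the trigonometric simplifications.
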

\textbf{Proof:} First, we derive ${{\mathbf{\mu_d}}} = [E[{{r}}_i], E[{{t}}_i], E[{{k}}_i]]$. Since \( |h_i|^2 \) follows an exponential distribution with a mean of 1, we have $\mathbb{E}[|h_i|^2] = 1$. The terms in $r_i$ and $k_i$ contain cosine functions involving the random variables \( \tilde{\phi}_i \). The expectations of these cosine terms are zero because the angles are uniformly distributed. Thus, we obtain ${{\mathbf{\mu_d}}} = [0, 0, 1]$.  
Next, to calculate each element of the covariance matrix $\mathbf{{{\Sigma_d}}}$, we use the independence of $|h_i|$ and $\theta$, along with $E(|h_i|^4) = 2$. Additionally, we utilize the following identities: $\cos(A)\cos(B) = \frac{1}{2} [\cos(A - B) + \cos(A + B)], \cos(A)\sin(A) = \frac{1}{2} \sin(2A), \sin^2(A) = \frac{1}{2} (1 - \cos(2A)).$
Due to space limitations, we omit the detailed derivations.
\\
From Lemma \ref{lemma1i}, we see that the joint conditional PDF of \( R \), \( T \), and \( K \) is independent of \( \theta \) and \( \phi \). In fact, \( R \), \( T \), and \( K \) are independent Gaussian random variables since the off-diagonal elements of the covariance matrix are zero. Thus, their joint PDF is the product of three Gaussian distributions, each with its corresponding mean and variance. Using this property, along with the assumption that \( \theta \) is uniformly distributed, the proof is complete.
\section{Proof of Lemma \ref{exactderivationlemma2}}\label{proofexactderivationlemma2}
We first prove the simplified expression of $\text{CRB}(\theta)$, and then provide the proofs for the upper bound, lower bound, and the approximation of $P(\text{CRB}(\theta) > \epsilon)$, respectively. By using (\ref{x2}), along with the orthogonality between the radar signals, data, and artificial noise (AN), and noting that $\begin{bmatrix} \frac{\mathbf{h}}{\|\mathbf{h}\|} & \tilde{\mathbf{G}} \end{bmatrix}$ is a unitary matrix that forms an orthonormal basis for the $N$-dimensional space, we have:${\mathbf{R}}_x = \frac{1}{L} {\mathbf{X}} {\mathbf{X}}^H=\frac{P\tau_1}{||\mathbf{h}||^2}\mathbf{h}\mathbf{h}^H+ \frac{P\tau_2}{N-1}(\mathbf{I}-\frac{\mathbf{h}\mathbf{h}^H}{|\mathbf{h}|^2})+\frac{P\tau_3}{N}\mathbf{a}\mathbf{a}^H+\frac{P(1-\tau_1-\tau_2-\tau_3)}{|\mathbf{a}'|^2}\mathbf{a}'\mathbf{a}'^H$. Moreover, due to the orthogonality between $\mathbf{a}$ and $\mathbf{a}'$, we have: $\mathbf{a}^H{\mathbf{R}}_x\mathbf{a}=\frac{(P\tau_1-\frac{P\tau_2}{N-1})}{||\mathbf{h}||^2}|\mathbf{a}^H\mathbf{h}|^2+NP\tau_3+\frac{NP\tau_2}{N-1}$, $\mathbf{a}'^H{\mathbf{R}}_x\mathbf{a}'=\frac{(P\tau_1-\frac{P\tau_2}{N-1})}{||\mathbf{h}||^2}|\mathbf{a}'^H\mathbf{h}|^2+\frac{|\mathbf{a}'|^2P\tau_2}{N-1}+P(1-\tau_1-\tau_2-\tau_3)|\mathbf{a}'|^2$, and $\mathbf{a}^H{\mathbf{R}}_X\mathbf{a}'=\frac{(P\tau_1-\frac{P\tau_2}{N-1})}{||\mathbf{h}||^2}\mathbf{a}^H\mathbf{h}\mathbf{h}^H\mathbf{a}'$. Thus, after defining $y\triangleq (P\tau_1-\frac{P\tau_2}{N-1})$, $x\triangleq NP\tau_3+\frac{NP\tau_2}{N-1}$, and after some mathematical operation (\ref{crbnew1}) becomes:
$\text{CRB}(\theta)=Q[||\mathbf{b}'||^2(\frac{y}{||\mathbf{h}||^2}|\mathbf{a}^H\mathbf{h}|^2+x)+||\mathbf{b}||^2(\frac{|\mathbf{a}'|^2P\tau_2}{N-1}+P(1-\tau_1-\tau_2-\tau_3)|\mathbf{a}'|^2+\frac{x\frac{y}{||\mathbf{h}||^2}|\mathbf{a}'^H\mathbf{h}|^2}{(\frac{y}{||\mathbf{h}||^2}|\mathbf{a}^H\mathbf{h}|^2+x)})]^{-1}$. Moreover, by rewriting \(\mathbf{a}\) and \(\mathbf{h}\) element-wise and after performing some operations, the \(\text{CRB}(\theta)\) is derived as \ref{crbsimplified} and the proof is complete.

\textbf{Proof of the upper bound} If we ignore the term \(\big((\sum_{i=1}^{N}-f'_i\hat{t}_i)^2+(\sum_{i=1}^{N}f'_i\hat{r}_i)^2\big)\) in the denominator of CRB\((\theta)\) in (\ref{crbsimplified}), the total CRB will be larger by the assumption that $y$ is positive. Thus, we obtain an upper bound on the CRB, and by following the same approach as in Appendix \ref{proofexactderivationlemma}, an upper bound for $P(\text{CRB}(\theta) > \epsilon)$ is derived

\textbf{Proof of the lower bound} Using the Cauchy–Schwarz inequality for an \(N\)-dimensional complex space, we have:$(\sum_{i=1}^{N}-f'_i{t}_i)^2+(\sum_{i=1}^{N}f'_i{r}_i)^2=|\sum_{i=1}^{N}\!jf'_ie^{jf_i}{h}_i|^2  
<(\sum_{i=1}^{N}\!|jf'_i|^2)(\sum_{i=1}^{N}\!|e^{jf_i}{h}_i|^2)=(\sum_{i=1}^{N}\!|f'_i|^2)({{K}}),$ where \(\sum_{i=1}^{N}\!|f'_i|^2=\frac{\pi^2\cos^2(\theta)N(N^2-1)}{12}\). Replacing $(\sum_{i=1}^{N} -f'_i t_i)^2 + (\sum_{i=1}^{N} f'_i r_i)^2$ with $K \sum_{i=1}^{N} |f'_i|^2$ in the denominator of CRB$(\theta)$ in (\ref{crbsimplified}) yields a lower bound. The proof follows the same approach as in Appendix \ref{proofexactderivationlemma}. Similarly, the approximation of $P(\text{CRB}(\theta) > \epsilon)$ is derived analogously.
\section{Proof of Lemma \ref{crbsensingeavweak}}\label{proofcrbsensingeavweak}
When eav is not strong, it doesn't know \(\mathbf{s}_u\) and \(\mathbf{V}\), thus using $\mathbf{X}$ at (\ref{x}) we have $\mathbb{E}[\mathbf{X}] = \sqrt{P\tau} \mathbf{t}_1 \mathbb{E}[\mathbf{s}_u] + \sqrt{P(1-\tau)} \sum_{i=3}^{N} \mathbf{t}_i \mathbb{E}[\mathbf{v}_i]=0$. So, based on the model in the subsection \ref{opoftargeteav}, the revived signal at the eav is $\tilde{\mathbf{y}}_{sr} = \tilde{\mathbf{u}_e} + \tilde{\mathbf{Z}_{sr}}, \quad \tilde{\mathbf{u}}_e = c_4 \text{vec}(\mathbf{B}\mathbf{X}),$ where $\mathbb{E}[\tilde{\mathbf{u}}_e] = c_4 \text{vec}(\mathbf{B} \cdot 0) = 0.$ which results 
$\mathbb{E}[\tilde{\mathbf{y}}_{sr}] = 0$. Moreover,
\[
\text{Var}(\tilde{\mathbf{y}}_{sr}) = \mathbb{E}[\tilde{\mathbf{y}}_{sr}\tilde{\mathbf{y}}_{sr}^H]=\mathbb{E}[\tilde{\mathbf{u}}_e\tilde{\mathbf{u}}^H_e] + \mathbb{E}[\tilde{\mathbf{Z}_{sr}}\tilde{\mathbf{Z}_{sr}}^H].
\] Also, we have :
$\mathbb{E}[\tilde{\mathbf{u}}_e\tilde{\mathbf{u}}^H_e] = |c_4|^2 \mathbb{E}[\text{vec}(\mathbf{B}\mathbf{X})\text{vec}(\mathbf{B}\mathbf{X})^H]=(\mathbf{I}_L \otimes \mathbf{B}) \mathbb{E}[\text{vec}(\mathbf{X})\text{vec}(\mathbf{X})^H] (\mathbf{I}_L \otimes \mathbf{B}^H)$ where we have used the property \(\text{vec}(\mathbf{B}\mathbf{X}) = (\mathbf{I}_L \otimes \mathbf{B}) \text{vec}(\mathbf{X})\). after some mathematical operation we get:
\begin{align}
&\text{Var}(\tilde{\mathbf{y}}_{sr}) = |c_4|^2 ( P\tau L \mathbf{I}_L \otimes (\mathbf{B} \mathbf{t}_1 \mathbf{t}_1^H \mathbf{B}^H) + P(1-\tau) \frac{L}{N-2}\nonumber\\
&\sum_{i=3}^{N} \mathbf{I}_L \otimes (\mathbf{B} \mathbf{t}_i \mathbf{t}_i^H \mathbf{B}^H) ) + \sigma^2_r \mathbf{I}_{ML}.
\end{align}
By noting the orthogonality between $\mathbf{a}$ and $\mathbf{t}_i$ for $i = 3, \ldots, N$, and the definition of $\mathbf{B}$ and $\mathbf{t}_i$ we get:$ \text{Var}(\tilde{\mathbf{y}}_{sr}) = \mathbf{C} = |c_4|^2 L \mathbf{I}_L \otimes (d \mathbf{c}\mathbf{c}^H) + \sigma^2_r \mathbf{I}_{ML},$ where $d=P\tau |\alpha|^2|\mathbf{a}|^2$. Moreover,
$\mathbf{F}_{\phi\phi} = \text{Tr}\left( \mathbf{C}^{-1} \frac{\partial \mathbf{C}}{\partial \phi} \mathbf{C}^{-1} \frac{\partial \mathbf{C}}{\partial \phi} \right)$, where 
\(\frac{\partial \mathbf{C}}{\partial \phi} = |c_4|^2 L d \mathbf{I}_L \otimes (\mathbf{c}' \mathbf{c}^H + \mathbf{c} \mathbf{c}'^H)\). Also, by using the Woodbury identity, the inverse of \(\mathbf{C}\) is:
\[
\mathbf{C}^{-1} = \frac{1}{\sigma_r^2} \mathbf{I}_{ML} - \frac{|c_4|^2 L d}{\sigma_r^2 (\sigma_r^2 + |c_4|^2 L d N_e)} \mathbf{I}_L \otimes \mathbf{c}\mathbf{c}^H.
\]Thus using \(\mathbf{c}^H \mathbf{c}' = 0\) and \(\mathbf{c}^H \mathbf{c} = N_e\):
\begin{align}
&\left( \mathbf{C}^{-1} \frac{\partial \mathbf{C}}{\partial \phi} \right)^2 =\frac{|c_4|^4 L^2 d^2}{\sigma_r^4} \mathbf{I}_L \otimes (N_e\mathbf{c}'\mathbf{c}'^H + \|\mathbf{c}'\|^2\mathbf{c}\mathbf{c}^H)\nonumber\\
&-\frac{2|c_4|^6 L^3 d^3 N_e^2}{\sigma_r^4(\sigma_r^2 + |c_4|^2 L d N_e)} \mathbf{I}_L \otimes \mathbf{c}'\mathbf{c}'^H.
\end{align}
Since the trace of a Kronecker product is: $\text{Tr}(\mathbf{I}_L \otimes \mathbf{A}) = L\text{Tr}(\mathbf{A})$, after some manipulation we have: $\mathbf{F}_{\theta\theta} = \frac{2|c_4|^4 L^3 d^2 N_e \|\mathbf{c}'\|^2}{\sigma_r^4} - \frac{2|c_4|^6 L^4 d^3 N^2_e \|\mathbf{c}'\|^2}{\sigma_r^4(\sigma_r^2 + |c_4|^2 L d N_e)} = \frac{2|c_4|^4 L^3 d^2 N_e \|\mathbf{c}'\|^2}{\sigma_r^2(\sigma_r^2 + |c_4|^2 L d N_e)}$.

Also, $\mathbf{F}_{\phi\bar{\alpha}_e} = \text{Tr}\left( \mathbf{C}^{-1} \frac{\partial \mathbf{C}}{\partial \phi} \mathbf{C}^{-1} \frac{\partial \mathbf{C}}{\partial \bar{\alpha}_e} \right)$ where \(\frac{\partial \mathbf{C}}{\partial \bar{\alpha}_e} \propto \mathbf{I}_L \otimes \mathbf{c}\mathbf{c}^H\) (since \(\mathbf{C}\) depends on \(|c_4|^2\)). By following the same approach as the derivation of $\mathbf{F}_{\phi\phi}$ and due to the orthogonality between $\mathbf{c}$ and $\mathbf{c'}$ we have:
$\mathbf{F}_{\phi\bar{\alpha}_e} = 0$. Thus using (\ref{crbphi}) and substituting  \(\|\mathbf{c}'\|^2 = \pi^2\cos^2(\phi)\frac{N_e(N^2_e-1)}{12}\):we have:
\begin{align}
\text{CRB}(\phi) = \frac{1}{\mathbf{F}_{\phi\phi}} =  \frac{6\sigma_r^2(\sigma_r^2 + |c_4|^2 L d N_e)}{|c_4|^4 L^3 d^2 \pi^2 \cos^2(\phi) N^2_e (N^2_e - 1)}\label{crbforweak}
\end{align}
 where $d=P\tau |\alpha|^2N$. Since $\phi \sim \mathcal{U}(-\pi/2, \pi/2)$, the proof is completed by calculating the CDF of $\cos^2(\phi)$.
 \section{Proof of Lemma \ref{crbsensingeav2weak}}\label{proofcrbsensingeav2weak}
By following the same approach as in Appendix \ref{proofcrbsensingeavweak}, using $\mathbf{X}$ at (\ref{x2}), and $\mathbf{{R}}_x$ at Appendix \ref{proofexactderivationlemma2}, and after some mathematical operation, we have $\mathbb{E}[\tilde{\mathbf{y}}_{sr}] = 0$ and:
\begin{align}
&\text{Var}(\tilde{\mathbf{y}}_{sr}) = |c_4|^2 L P [ ( \tau_1 - \frac{\tau_2}{N-1} ) \frac{|\mathbf{a}^H \mathbf{h}|^2}{\|\mathbf{h}\|^2} + \frac{\tau_2 N}{N-1} + \tau_3 N ]\nonumber\\
&\mathbf{c}\mathbf{c}^H \otimes \mathbf{I}_L + \sigma_r^2 \mathbf{I}_{ML}
\end{align}
Thus, it can be write as $ \text{Var}(\tilde{\mathbf{y}}_{sr}) = \mathbf{C} = |c_4|^2 L \mathbf{I}_L \otimes (\tilde{d} \mathbf{c}\mathbf{c}^H) + \sigma^2_r \mathbf{I}_{ML},$ where $\tilde{d}=P [ ( \tau_1 - \frac{\tau_2}{N-1} ) \frac{|\mathbf{a}^H \mathbf{h}|^2}{\|\mathbf{h}\|^2} + \frac{\tau_2 N}{N-1} + \tau_3 N ]$. thus, based on the Appendix \ref{proofcrbsensingeavweak}, (\ref{crbforweak}) is valid by substituting $\tilde{d}$ with $d$ in . Thus, we have:
\begin{align}
\text{CRB}(\phi) = \frac{1}{\mathbf{F}_{\phi\phi}} =  \frac{6\sigma_r^2(\sigma_r^2 + |c_4|^2 L \tilde{d} N_e)}{|c_4|^4 L^3 \tilde{d}^2 \pi^2 \cos^2(\phi) N^2_e (N^2_e - 1)}
\end{align}
Thus by rewriting \(\mathbf{a}\) and \(\mathbf{h}\) element-wise and following the same approach as in Appendix \ref{proofexactderivationlemma}, the proof is complete.
\section{Proof of Lemma \ref{newSLB}}\label{proofnewSLB}
By truncating the domain to $\theta \in [-\frac{\pi}{2} + \delta, \frac{\pi}{2} - \delta]$ for a small $\delta > 0$, and using the lower bound of $\text{CRB}(\theta)$—denoted as LCRB—derived in Appendix \ref{proofexactderivationlemma}, and noting that the only random variable in its expression is $\theta$, along with the identity $\mathbb{E}_\theta \left[ \frac{1}{\cos^2(\theta)} \right] = \frac{2 \cot(\delta)}{\pi - 2\delta},$ we derive the expected value $\mathbb{E}[\text{LCRB}(\theta)]$. Following the same approach, and using the derived expressions for $\text{CRB}(\phi)$ for the strong and weak eav in Lemmas \ref{crbsensingeav} and \ref{crbsensingeavweak}, respectively, the ergodic CRB for both the strong and weak eav can be obtained. Moreover, according to Lemma \ref{lemma1i}, for large $N$, the random variables behave as follows:
$R \sim \mathcal{N}(0, N/2)$, $T \sim \mathcal{N}(0, N/2)$. Hence, $T^2 + R^2 \sim N \cdot \text{Exp}(1)$, an exponential distribution with mean $N$. Also, let $K \sim \mathcal{N}(N, N)$ for $K > 0$. Thus, $f_K(k) = \frac{1}{\sqrt{2 \pi N}} e^{-\frac{(k - N)^2}{2N}} \cdot \frac{1}{1 - \Phi\left(-\frac{N}{\sqrt{N}}\right)}, \quad k > 0,$ where $\Phi$ is the cumulative distribution function (CDF) of the standard normal distribution $\mathcal{N}(0, 1)$. For $N \geq 1$, the truncation effect is negligible, and $\mathbb{E}[K] \approx N$. Moreover, by Lemma \ref{lemma1i}, the random variables $R$, $T$, and $K$ are independent, as their joint covariance matrix has zero correlation. Consequently, since $S = R^2 + T^2$ is a function of $R$ and $T$, it is also independent of $K$ (a function of random variables that are independent of another random variable remains independent of that variable).
Therefore, the joint PDF of $S$ and $K$ is the product of their marginals:
$f_{S, K}(s, k) = f_S(s) \cdot f_K(k) = \frac{1}{N \sqrt{2 \pi N}} e^{-\frac{s}{N} - \frac{(k - N)^2}{2N}}, \quad s \geq 0, \, k > 0.$ Thus, using the approximation of $\text{CRB}(\theta)$—denoted as ACRB—derived in Appendix \ref{proofexactderivationlemma}, and noting that
$\mathbb{E}_\theta \left[ \frac{1}{\cos^2(\theta)} \right] = \frac{2 \cot(\delta)}{\pi - 2 \delta},$
and ignoring the tails of the distributions of $K$ and $S$, we obtain:
$\mathbb{E}\{\text{ACRB}\} = \int_{0}^{N + 5\sqrt{N}} \int_{0}^{10N} \frac{f_S(s) f_K(k)\frac{6\sigma_R^2}{L|c_3|^2\pi^2MN}}{\gamma_1|\alpha|^2(M^2-1)+\gamma_2(N^2-1)(1-\frac{1}{K-\frac{S}{N}}))}\, dK \, dS$.
\section{Proof of Lemma \ref{lemmapdfuser}}\label{proofpdfuser}
Based on (\ref{x}), and given the signal received at the user as described in Section \ref{systemmodel}, along with the orthogonality of \(\mathbf{v}_i\) and \(\mathbf{s}_u\), the SINR at the user is given by: 
$\text{SINR}_u=\frac{P\tau|c_1|^2|\mathbf{h}^H\mathbf{t}_1|^2}{\sigma^2+\frac{|c_1|^2P(1-\tau)\sum_{i=3}^{N}|\mathbf{h}^H\mathbf{t}_i|^2}{N-2}}$. Due to the orthogonality between \(\mathbf{h}\) and \(\mathbf{t}_i\), this expression simplifies to:  
$\text{SINR}_u=\frac{P\tau|c_1|^2}{\sigma^2_u}|\mathbf{h}^H\mathbf{t}_1|^2.
$ 
\textbf{Upper Bound1:} $|\mathbf{h}^H \mathbf{t}_1|^2 \leq \|\mathbf{h}\|^2,$ so $\text{SINR}_u \leq \frac{P \tau |c_1|^2}{\sigma_u^2} \|\mathbf{h}\|^2.$ Since \(\|\mathbf{h}\|^2 \sim \text{Gamma}(N, 1)= \frac{x^{N-1} e^{-x}}{(N-1)!}, \quad x \geq 0\), an upper bound for $E_{\mathbf{h,a}} \left[ \log(1 + \text{SINR}_u) \right]$ is:
\begin{align}
&\mathbb{E}[\log(1 + \frac{P \tau |c_1|^2}{\sigma_u^2} \|\mathbf{h}\|^2)]\!\!=\!\!\!\!\int_0^\infty \!\!\!\! \!\!\!\!\log(1 + \!\!\frac{P \tau |c_1|^2}{\sigma_u^2} x) \cdot \frac{x^{N-1} e^{-x}}{(N-1)!} \, dx\nonumber\\
&=\frac{e^{\sigma_u^2/P \tau |c_1|^2}}{(N-1)!} \cdot G_{2,3}^{3,1}\left(a \middle| \begin{array}{c} 
0, 0 \\ 
0, -1, N-1 
\end{array} \right),
\end{align}
where \( G \) is the Meijer G-function. For arbitrary \( N \), using Taylor expansion around \( \mathbb{E}[|h|^2] = N \) we have: $\mathbb{E}[\log(1 + \frac{P \tau |c_1|^2}{\sigma_u^2}|h|^2)] \approx \log(1 + \frac{P \tau |c_1|^2}{\sigma_u^2} N) - \frac{a^2}{(1 + \frac{P \tau |c_1|^2}{\sigma_u^2} N)^2} \cdot \frac{N}{2}$.

\textbf{Upper Bound 2:} We have $|\mathbf{h}^H \mathbf{t}_1|^2 = |\alpha|^2 |\tilde{\mathbf{a}}^H \mathbf{h}|^2 + |\beta|^2 (\|\mathbf{h}\|^2 - |\tilde{\mathbf{a}}^H \mathbf{h}|^2) + 2 \text{Re}\{\alpha^* \beta (\tilde{\mathbf{a}}^H \mathbf{h}) \sqrt{\|\mathbf{h}\|^2 - |\tilde{\mathbf{a}}^H \mathbf{h}|^2}\}.$ Moreover, \(|\tilde{\mathbf{a}}^H \mathbf{h}|^2 \sim \text{Exp}(1)\), so \(\mathbb{E}[|\tilde{\mathbf{a}}^H \mathbf{h}|^2] = 1\).Since \(\|\mathbf{h}\|^2 \sim \text{Gamma}(N, 1)\), so \(\mathbb{E}[\|\mathbf{h}\|^2] = N\). Furthermore, the expectation of $\mathbb{E}[\text{Re}\{\alpha^* \beta (\tilde{\mathbf{a}}^H \mathbf{h}) \sqrt{\|\mathbf{h}\|^2 - |\tilde{\mathbf{a}}^H \mathbf{h}|^2}\}]$ is zero, due to the following geometric intuition. We express \(\mathbf{h}\) in a basis where \(\tilde{\mathbf{a}}\) is the first basis vector. Let: $\mathbf{h} = (\tilde{\mathbf{a}}^H \mathbf{h}) \tilde{\mathbf{a}} + \mathbf{h}_\perp,$ where \(\mathbf{h}_\perp\) is the component orthogonal to \(\tilde{\mathbf{a}}\). Then: $\|\mathbf{h}\|^2 = |\tilde{\mathbf{a}}^H \mathbf{h}|^2 + \|\mathbf{h}_\perp\|^2.$
Thus: $\mathbb{E}[\text{Re}\{\alpha^* \beta (\tilde{\mathbf{a}}^H \mathbf{h}) \sqrt{\|\mathbf{h}\|^2 - |\tilde{\mathbf{a}}^H \mathbf{h}|^2}\}] = \mathbb{E}[\text{Re}\{\alpha^* \beta (\tilde{\mathbf{a}}^H \mathbf{h})\}|\mathbf{h}_\perp\|] =0.$ The equality is due to independency of \(\mathbf{h}_\perp\) and \(\tilde{\mathbf{a}}^H \mathbf{h}\), and \(\tilde{\mathbf{a}}^H \mathbf{h}\) is symmetric in phase. Finally, we have: $\mathbb{E}[\text{SINR}_u] = \frac{P \tau |c_1|^2}{\sigma_u^2} (|\alpha|^2 + |\beta|^2 (N - 1)).$ thus, using Jensen's inequality, we have: $\mathbb{E}[\log(1 + \text{SINR}_u)] \leq \log\left(1 + \mathbb{E}[\text{SINR}_u]\right)= \log\left(1 + \frac{P \tau |c_1|^2}{\sigma_u^2} (|\alpha|^2 + |\beta|^2 (N - 1))\right).$
\section{Proof of Lemma \ref{lemmapdfeav}}\label{proofpdfeav}
Based on (\ref{x}), and given the received signal at the communication eav as described in Section \ref{systemmodel}, along with the orthogonality of \(\mathbf{v}_i\) and \(\mathbf{s}_u\), the SINR at the communication eav is given by:  
\begin{align}
\text{SINR}_e\!\!&\overset{(a)}{=}\frac{P\tau|c_2|^2|\mathbf{h}^H_e\mathbf{t}_1|^2}{\sigma^2+\frac{|c_2|^2P(1-\tau)\sum_{i=3}^{N}|\mathbf{h}^H_e\mathbf{t}_i|^2}{N-2}},\label{sinr2i}
\end{align} 
We note that the matrix \(\begin{bmatrix} \mathbf{\tilde{a}} & \mathbf{\tilde{h}} & \mathbf{G} \end{bmatrix}\), where \(\mathbf{G}=\begin{bmatrix} \mathbf{t}_3 & \dots & \mathbf{t}_N \end{bmatrix}\), is unitary, and the entries of \(\mathbf{h}_e\) are i.i.d. zero-mean circularly symmetric complex normal random variables. Since the generation of \(\begin{bmatrix} \mathbf{\tilde{a}} & \mathbf{\tilde{h}} & \mathbf{G} \end{bmatrix}\) is entirely determined by the realizations of \(\mathbf{h}\) and \(\mathbf{a}\), which are independent of \(\mathbf{h}_e\) (as stated in Section \ref{systemmodel}), we conclude that \(\mathbf{h}_e\) and \(\begin{bmatrix} \mathbf{\tilde{a}} & \mathbf{\tilde{h}} & \mathbf{G} \end{bmatrix}\) are mutually independent.  
Thus, the distribution of \(\mathbf{h}^H_e\begin{bmatrix} \mathbf{\tilde{a}} & \mathbf{\tilde{h}} & \mathbf{G} \end{bmatrix}\) follows the same distribution as \(\mathbf{h}_e\), i.e., $\mathbf{h}^H_e\begin{bmatrix} \mathbf{\tilde{a}} & \mathbf{\tilde{h}} & \mathbf{G} \end{bmatrix} \sim \mathcal{CN}(0,\mathbf{I}).$ As a result, we can express $\mathbf{h}^H_e\mathbf{t}_1 = \alpha\mathbf{h}^H_e\mathbf{\tilde{a}} + \beta \mathbf{h}^H_e\mathbf{\tilde{h}},$ where \(\mathbf{h}^H_e\mathbf{\tilde{a}} \sim \mathcal{CN}(0,1)\) and \(\mathbf{h}^H_e\mathbf{\tilde{h}} \sim \mathcal{CN}(0,1)\). Since this expression is a linear combination of independent complex Gaussian random variables, it follows a complex Gaussian distribution with mean $\alpha E[\mathbf{h}^H_e\mathbf{\tilde{a}}] + \beta E[\mathbf{h}^H_e\mathbf{\tilde{h}}] = 0$ and variance $E[|\alpha\mathbf{h}^H_e\mathbf{\tilde{a}} + \beta \mathbf{h}^H_e\mathbf{\tilde{h}}|^2] = |\alpha|^2 + |\beta|^2 = 1.$ Thus, \(\mathbf{h}^H_e\mathbf{t}_1 \sim \mathcal{CN}(0,1)\), which implies \(|\mathbf{h}^H_e\mathbf{t}_1|^2 \sim \chi^2_2\).  
Furthermore, for \(i \in \{3, \dots, N\}\), we have \(\mathbf{h}^H_e\mathbf{t}_i \sim \mathcal{CN}(0,1)\), which are i.i.d. random variables. Therefore, the sum \(\sum_{i=3}^{N} |\mathbf{h}^H_e\mathbf{t}_i|^2\) follows a Chi-Square distribution with \(2(N-2)\) degrees of freedom, i.e., $\sum_{i=3}^{N} |\mathbf{h}^H_e\mathbf{t}_i|^2 \sim \chi^2_{2(N-2)}.
$ Moreover, since $\mathbf{h}^H_e\begin{bmatrix} \mathbf{\tilde{a}} & \mathbf{\tilde{h}} & \mathbf{t}_3 & \dots & \mathbf{t}_N \end{bmatrix} \sim \mathcal{CN}(0,\mathbf{I}),$ the random variables \(|\mathbf{h}^H_e\mathbf{t}_1|^2\) and \(\sum_{i=3}^{N} |\mathbf{h}^H_e\mathbf{t}_i|^2\) are independent. By defining \( C_1 = \frac{P \tau |c_2|^2}{\sigma^2}, \quad C_2 = \frac{P|c_2|^2(1-\tau)/(N-2)}{\sigma^2} \), \( X \triangleq |\mathbf{h}^H_e \mathbf{t}_1|^2 \sim \chi^2_2 \), \( Y \triangleq \sum_{i=3}^{N} |\mathbf{h}^H_e \mathbf{t}_i|^2 \sim \chi^2_{2(N-2)} \), where \( X \) and \( Y \) are independent, we have \( \text{SINR}_e = \frac{C_1 X}{1 + C_2 Y} \). Thus:
\begin{align}
E_{X,Y} \left[ \log(1 + \text{SINR}_e) \right]&\overset{(a)}{=}\int_{0}^{\infty}\!\!\!\!P\left(X > \frac{T + TC_2Y}{C_1}\right)dt\label{simplifiedeav1}
\end{align}
where \( T = 2^t - 1 \), and (a) is due to the statistical property. Since \( X \sim \text{Exp}(1/2) \), its complementary CDF (tail probability) is: \( P(X > x) = e^{-x/2}, \quad x \geq 0 \). Thus, we get: \( P\left( X > \frac{T + TC_2Y}{C_1} | Y \right) = \exp\left( - \frac{T + TC_2Y}{2C_1} \right) \). Therefore, we have:
\begin{align}
&P\left(X > \frac{T + TC_2Y}{C_1}\right)\overset{(a)}{=}\mathbb{E}_Y\left[\exp\left(-\frac{T + TC_2Y}{2C_1}\right)\right]\nonumber\\
&=\exp\left(-\frac{T}{2C_1}\right) \cdot \mathbb{E}_Y\left[\exp\left(-\frac{TC_2Y}{2C_1}\right)\right]\nonumber\\
&\overset{(b)}{=}\exp\left(-\frac{T}{2C_1}\right)\left(1 + \frac{TC_2}{C_1}\right)^{-(N-2)}\label{simplifiedeav2}
\end{align}
(a) follows from statistical properties, and (b) holds because the term \( \mathbb{E}_Y\left[ \exp\left( - \frac{TC_2Y}{2C_1} \right) \right] \) is the moment-generating function (MGF) of \( Y \), evaluated at \( s = - \frac{TC_2}{2C_1} \). For \( Y \sim \text{Gamma}(N-2, 2) \), the MGF is given by \( M_Y(s) = \mathbb{E}_Y\left[ e^{sY} \right] = \left( 1 - \theta s \right)^{-k} \), where \( \theta = 2 \) is the scale parameter and \( k = N-2 \) is the shape parameter. Thus, replacing (\ref{simplifiedeav2}) into (\ref{simplifiedeav1}), the proof is complete.
\end{document}